\newenvironment{proof}{{\it Proof}.\begin{trivlist}\item}
{\hfill{$\Box$}\end{trivlist}}
\title{Small-step and big-step semantics for call-by-need}
\author
{Keiko Nakata\thanks{Supported by the Estonian Science Foundation 
grant no.~6940 and the ERDF cofunded project EXCS, the Estonian Centre 
of Excellence in Computer Science.}
\\ 
Institute of Cybernetics, Tallinn University of Technology\\
Masahito Hasegawa\thanks{Partly supported by the Grant-in-Aid 
for Scientific Research (C) 20500010.}\\
Research Institute for Mathematical Sciences, Kyoto University}
\date{}
\newcommand{\needmapsto}{%
\mbox{$\,\displaystyle\mathop{\longrightarrow}_{\mbox{\tiny\sf NEED}}\,$}}
\newcommand{\valuemapsto}{%
\mbox{$\,\displaystyle\mathop{\longrightarrow}_{\mbox{\tiny\sf VALUE}}\,$}}
\newcommand{\acyclic}{\ensuremath{\lambda_{\mathit{let}}}}
\newcommand{\cyclic}{\ensuremath{\lambda_{\mathit{letrec}}}}
\newcommand{\comment}[1]{}
\newcommand{\remarque}[1]{}
\newcommand{\fun}[2]{\ensuremath{\lambda #1.#2}}
\newcommand{\app}[2]{\ensuremath{#1 #2}}
\newcommand{\letin}[2]{\ensuremath{\mathsf{let}~#1~\mathsf{in}~#2}}
\newcommand{\evalin}[2]{\ensuremath{\langle #1\rangle\; #2}}
\newcommand{\evalfresh}[5]{\ensuremath{\evalin{#1}{#2}\Downarrow_{#5}\evalin{#3}{#4}}}
\newcommand{\eval}[4]{\ensuremath{\evalin{#1}{#2}\Downarrow\evalin{#3}{#4}}}
\newcommand{\Eval}[4]{\ensuremath{\vdash \evalin{#1}{#2}\Downarrow\evalin{#3}{#4}}}
\newcommand{\maps}[2]{\ensuremath{#1 \mapsto #2}}
\newcommand{\rt}{\ensuremath{\epsilon}}
\newcommand{\subst}[3]{\ensuremath{#1[#2/#3]}}
\newcommand{\hpleq}[2]{\ensuremath{#1 \leq #2}}
\newcommand{\LBVs}[1]{\ensuremath{\mathit{LBV}(#1)}}
\newcommand{\FVs}[1]{\ensuremath{\mathit{FV}(#1)}}
\newcommand{\Body}[1]{\ensuremath{\mathit{Exp}(#1)}}
\newcommand{\prd}[2]{\ensuremath{(#1, #2)}}
\newcommand{\rar}{\ensuremath{\rightarrow}}
\newcommand{\rarr}{\ensuremath{\twoheadrightarrow}}
\newcommand{\rarR}{\ensuremath{\mapsto\!\!\!\!\!\!\!\rightarrow}}
\newcommand{\comp}[1]{\ensuremath{\lfloor#1\rfloor}}
\newcommand{\decomp}[1]{\ensuremath{\lceil#1\rceil}}
\newcommand{\fresh}{{\rm fresh}}
\newcommand{\dom}[1]{\ensuremath{\mathit{dom}(#1)}}
\newcommand{\ran}[1]{\ensuremath{\mathit{ran}(#1)}}
\newcommand{\error}{\ensuremath{\bullet}}
\newcommand{\letrec}[2]{\ensuremath{{\sf let~rec} ~#1 ~{\sf in} ~#2}}
\newcommand{\be}[2]{\ensuremath{#1~\mathsf{be}~#2}}
\newcommand{\dps}[2]{\ensuremath{D[#1,#2]}}
\newcommand{\update}[3]{\ensuremath{#1[#2\mapsto #3]}}
\newcommand{\updateseq}[4]{\ensuremath{#1[#2\mapsto #3]_{#4}}}
\newcommand{\flatten}[1]{\ensuremath{\overline{#1}}}
\newcommand{\fleq}[2]{\ensuremath{#1 \leq_{\mathcal{F}} #2}}
\newcommand{\uni}[2]{\ensuremath{#1 \cup #2}}
\newcommand{\cyclicval}{\ensuremath{\lambda^{\mathit{val}}_{\mathit{letrec}}}}
\newcommand{\prj}[2]{\ensuremath{\pi_{#2}(#1)}}
\newtheorem{lemma}{Lemma}[section]
\newtheorem{proposition}{Proposition}[section]
\newtheorem{theorem}{Theorem}[section]
\newtheorem{corollary}{Corollary}[section]
\newcommand{\emp}{\ensuremath{\emptyset}}
\newcommand{\restr}[2]{\ensuremath{#1|_{\overline{#2}}}}
\newcommand{\lift}{\ensuremath{\mathit{Fn}}}
\newcommand{\proj}{\ensuremath{\downarrow_{\mathit{Fn}}}}
\newcommand{\Values}{\ensuremath{\mathit{Values}}}
\newcommand{\Vars}{\ensuremath{\mathit{Vars}}}
\newcommand{\semax}[2]{\ensuremath{[\![#1]\!]_{#2}}}
\newcommand{\Semax}[2]{\ensuremath{\{\!\!\!\{#1\}\!\!\!\}_{#2}}}
\newcommand{\upp}{\ensuremath{\sqcup}}
\newcommand{\fix}[2]{\ensuremath{\mu #1.#2}}
\newcommand{\envleq}[2]{\ensuremath{#1 \leq #2}}
\newcommand{\supp}[1]{\ensuremath{\mathit{sup}(#1)}}
\newcommand{\noteq}{\ensuremath{\not=}}
\newcommand{\undef}{\ensuremath{\bot}}
\newcommand{\evaltotal}[5]
{\ensuremath{\evalin{#1}{#2} \downarrow_{#5} \evalin{#3}{#4}}}
\newcommand{\evalname}[4]
{\ensuremath{\evalin{#1}{#2} \downarrow_{\it name} \evalin{#3}{#4}}}
\begin{document}

{\footnotetext{This paper is dedicated to the memory of Professor
  Reiji Nakajima (1947-2008).}}

\maketitle

\begin{abstract}
We present natural semantics for acyclic as well as cyclic
call-by-need lambda calculi, which are proved equivalent to the
reduction semantics given by Ariola and Felleisen.  The natural
semantics are big-step and use global heaps, where evaluation is
suspended and memorized.  The reduction semantics are small-step and
evaluation is suspended and memorized locally in let-bindings.  Thus
two styles of formalization describe the call-by-need strategy from
different angles.

The natural semantics for the acyclic calculus is revised from the
previous presentation by Maraist et al. and its adequacy is ascribed
to its correspondence with the reduction semantics, which has been
proved equivalent to call-by-name by Ariola and Felleisen.  The natural
semantics for the cyclic calculus is inspired by that of Launchbury
and Sestoft and we state its adequacy using a denotational semantics
in the style of Launchbury; adequacy of the reduction semantics for
the cyclic calculus is in turn ascribed to its correspondence with the
natural semantics.
\end{abstract}

\section{Introduction}\label{sec:intro}
In \cite{bigstepforlazy} Launchbury studied a natural semantics for
a call-by-need lambda calculus with letrec. He showed the semantics 
adequate using  a denotational semantics. 
Sestoft later revised Launchbury's semantics \cite{lazymachine}.
The revised semantics correctly enforces variable hygiene. 
Moreover the $\alpha$-renaming strategy
of the revised semantics is demonstrated to be suitable in the light of
possible implementations with heap-based abstract machines.  

In \cite{thecbnlambda} Ariola and Felleisen studied an equational theory
for an acyclic (non-recursive) call-by-need lambda calculus. 
The calculus admits the standardization theorem, which gives rise to
a reduction semantics for the calculus. 
The call-by-need evaluator, induced by the theory, is proved
equivalent to the call-by-name evaluator of Plotkin
\cite{cbn_cbv_lambda};
as a result, the reduction semantics is shown to be adequate. Ariola
and Felleisen also presented 
a cyclic (recursive) call-by-need lambda calculus with letrec;
however the cyclic calculus has not been explored. For instance, 
to the best of our knowledge, it has
not been known if the calculus relates to call-by-name or if the 
standard reduction relation, obtained from the one-step reduction relation
and evaluation contexts, is adequate. 

The two styles of formalization, namely the natural semantics and the
reduction semantics, describe the operational semantics for
call-by-need from different angles. The natural semantics is big-step
and evaluation is suspended and memorized in a global heap. Sestoft's
semantics rigorously preserves binding structure, by 
performing $\alpha$-renaming when allocating fresh locations in a
heap. As he demonstrated by deriving abstract machines from the
natural semantics, this approach to variable hygiene has a natural
correspondence with possible concrete implementations of call-by-need. 
The reduction semantics is small-step and evaluation is suspended and
memorized locally in let-bindings. It assumes implicit
$\alpha$-conversions. In fact we could think implicit renaming in
the reduction semantics is an appropriate approach to variable hygiene,
since freshness conditions cannot be checked locally. In other words, 
the reduction semantics allows for step-wise local reasoning
of program behavior using evaluation contexts. 

Our work is motivated to bridge the two styles of formalization, both
of which we found interesting. Here are contributions of the paper:
\begin{itemize}
\item We present natural semantics for acyclic and cyclic
call-by-need lambda calculi, and prove them equivalent to the
corresponding reduction semantics given by  Ariola and Felleisen.
For the acyclic calculus we revise the natural semantics given in
\cite{cbnlambda} by correctly enforcing variable hygiene 
in the style of Sestoft\footnote{In \cite{cbnlambda}
equivalence of the natural semantics and reduction semantics is stated. 
The paper only mentions that the result is proved by simple induction 
on derivations in the natural semantics, but we did not find it 
``simple''.}; its adequacy is ascribed to its correspondence with the
reduction semantics, which has been proved equivalent to call-by-name
by Ariola and Felleisen. 
The natural semantics for the cyclic calculus is very much inspired 
by Sestoft's, hence by Launchbury's;
the main difference is that our semantics directly works with the
full lambda terms with letrec, whereas Sestoft's works with
the ``normalized'' lambda terms, where function arguments are only
variables,  by having a precompilation step.

\item We show the natural semantics for the cyclic calculus 
adequate by adapting Launchbury's denotational argument. 
As a consequence the reduction semantics for the cyclic
calculus is also shown to be adequate thanks to the equivalence of the two
semantics; to the best of our knowledge, this fact has not been shown
so far. 
\end{itemize}

\section{Call-by-need let calculus \acyclic}\label{sec:acyclic}
We first study the operational semantics for the acyclic
(non-recursive) calculus.  

\subsection{Syntax and Semantics}

\begin{figure}
\hspace*{\fill}{\small $\begin{array}{llcl}
{\it Expressions}\hspace*{2ex}& 
M,N &::=& x \mid \fun{x}{M} \mid \app{M}{N} \mid \letin{\be{x}{M}}{N}\\
{\it Values}&
V &::=& \fun{x}{M}\\
{\it Answers}& 
A &::=& V \mid \letin{\be{x}{M}}{A}\\
{\it Contexts}&
E &::=& [] \mid \app{E}{M} \mid 
\letin{\be{x}{M}}{E} \mid\letin{\be{x}{E}}{E'[x]}\\
{\it Heaps}&
\Psi,\Phi &::=& \rt \mid \Psi, \maps{x}{M}
\end{array}$}\vspace{1ex}\hspace*{\fill}
\caption{Syntax of \acyclic}\label{fig:syntax}
\end{figure}

\begin{figure}
{\small \begin{tabular}{ll}
$\beta_{{\it need}}$:
\hspace*{5ex}&\app{(\fun{x}{M})}{N} \needmapsto\ \letin{\be{x}{N}}{M}\\
{\it lift}:
&\app{(\letin{\be{x}{M}}{A})}{N} \needmapsto\ \letin{\be{x}{M}}{\app{A}{N}}\\
{\it deref}:
&\letin{\be{x}{V}}{E[x]} \needmapsto\ \letin{\be{x}{V}}{E[V]}\\
{\it assoc}:
&\letin{\be{x}{(\letin{\be{y}{M}}{A})}}{E[x]} \needmapsto\
\letin{\be{y}{M}}{\letin{\be{x}{A}}{E[x]}}\vspace{.5ex}\\
\end{tabular}}\vspace{1ex}
\caption{Reduction semantics for \acyclic}\label{fig:red}
\end{figure}

\begin{figure}[t]
\[
{\small \begin{array}{c}
\mathit{Lambda}\\
\evalfresh{\Psi}{\fun{x}{M}}{\Psi}{\fun{x}{M}}{X}\vspace{1ex}\\
\mathit{Application}\\
\infer{
  \evalfresh{\Psi}{\app{M_1}{M_2}}{\Psi'}{V}{X}
}{
  \evalfresh{\Psi}{M_1}{\Phi}{\fun{x}{N}}{X}
  &\evalfresh{\Phi, \maps{x'}{M_2}}{\subst{N}{x'}{x}}{\Psi'}{V}{X}
  &x' ~\fresh
}
\vspace{1ex}\\
\mathit{Let}\\
\infer{
  \evalfresh{\Psi}{\letin{\be{x}{N}}{M}}{\Phi}{V}{X}
}{
  \evalfresh{\Psi,\maps{x'}{N}}{\subst{M}{x'}{x}}{\Phi}{V}{X}
  &x' ~\fresh
}
\vspace{1ex}\\
\mathit{Variable}\\
\infer{
  \evalfresh{\Psi,\maps{x}{M},\Phi}{x}{\Psi',\maps{x}{V},\Phi}{V}{X}
}{
  \evalfresh{\Psi}{M}{\Psi'}{V}{X \cup \{x\} \cup \dom {\Phi}}
}
\end{array}}
\]
\caption{Natural semantics for \acyclic}\label{fig:eval}
\end{figure}

The syntax of the call-by-need let calculus \acyclic\ is defined in
figure~\ref{fig:syntax}. The reduction and natural semantics are given
in figures \ref{fig:red} and \ref{fig:eval} respectively. The
metavariable $X$ ranges over sets of variables. 
The notation \rt\ denotes an empty sequence. 
The notation \dom{\Psi} denotes the domain of $\Psi$, namely
\dom{\rt} = $\emptyset$ and 
\dom{x_1 \mapsto M_1, \ldots, x_n \mapsto M_n} = 
$\{x_1, \ldots, x_n\}$.
The notation \subst{M}{x'}{x}
denotes substitution of $x'$ for free occurrences of $x$ in $M$. 
The notion of free variables is standard and is defined in figure
\ref{fig:freevar}. 
A {\it program} is a closed expression. 
We say an expression $M$ {\it (standard) reduces} to $N$, written $M \rar N$ if
$M = E[M']$ and $N = E[N']$ where $M' \needmapsto N'$.
We write $M \rarr N$ to denote that $M$ reduces to $N$ in zero or more steps,
i.e. \rarr\ is the reflexive and transitive closure of \rar. 

The reduction semantics is identical to 
the previous presentation by Ariola and Felleisen
\cite{thecbnlambda}. It works with $\alpha$-equivalence classes of
expressions. We assume 
all binding occurrences of variables in
a canonical representative of a class  use pairwise distinct names. In
particular, evaluation contexts and reduction rules are defined over
canonical representatives. Below we recall the reduction semantics
briefly.  The key rule is $\beta_{\mathit{need}}$, where application
reduces  to a {\sf let}-construct, thus suspending evaluation of the
argument.  Since {\it deref} only substitutes values for variables, 
$\beta_{\mathit{need}}$ also ensures that evaluation of an argument is
shared among all references to the argument in the function body. 
The administrative rules {\it lift} and {\it assoc} extend the scopes
of let-bound variables so that values surrounded by {\sf let}'s become
available without duplicating reducible expressions. 
The following lemma states that there exists at most one partitioning
of a program into a context and a redex, namely 
the unique-decomposition property. 
It is proved by induction on $M$.

\begin{lemma}\label{lemma:uniquecontext}
For any program $M$, $M$ is either an answer or there exist a unique
context $E$ and a redex $N$ such that $M = E[N]$.
\end{lemma}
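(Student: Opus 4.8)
The plan is to prove a stronger trichotomy for \emph{arbitrary} expressions and then specialize to closed programs. For an arbitrary expression $M$ I claim that exactly one of the following holds: (i) $M$ is an answer; (ii) $M = E[z]$ for some context $E$ and variable $z$ with $z \notin \BVs{E}$, where $\BVs{E}$ collects the let-bound variables of $E$ whose scope contains the hole, so that $M$ is \emph{stuck demanding} the free variable $z$; or (iii) there are a context $E$ and a redex $N$ with $M = E[N]$; and moreover the witnessing data in (ii) and (iii) are unique. For a program $M$ case (ii) cannot arise, since $z \notin \BVs{E}$ forces $z \in \FVs{M}$, contradicting closedness; so the trichotomy collapses to the dichotomy in the statement.

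I would prove the trichotomy by structural induction on $M$. The cases $M = z$ and $M = \fun{x}{M'}$ are immediate: a bare variable is stuck on itself via the empty context (case (ii)), and an abstraction is a value, hence an answer (case (i)). For $M = \app{M_1}{M_2}$ I apply the induction hypothesis to $M_1$. If $M_1$ is an answer then it is either a value, making $M$ a $\beta_{\mathit{need}}$-redex, or a let-answer, making $M$ a \emph{lift}-redex; in both cases $E = []$ is forced and uniqueness follows because an answer admits no proper decomposition. If $M_1$ is stuck on $z$ then so is $M$, via $\app{E_1}{M_2}$; and if $M_1 = E_1[N_1]$ decomposes then $M = (\app{E_1}{M_2})[N_1]$, the decomposition being forced through $M_1$ since a non-answer cannot head a top-level redex.

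The main obstacle is the let case $M = \letin{\be{x}{P}}{N}$, which is where the demand information pays off. Applying the induction hypothesis to the body $N$: if $N$ is an answer then $M$ is an answer; if $N$ is stuck on some $z \neq x$ then $M$ is stuck on $z$ via $\letin{\be{x}{P}}{E_N}$; and if $N = E_N[N']$ decomposes then $M$ decomposes via $\letin{\be{x}{P}}{E_N}$. The delicate subcase is when $N$ is stuck on the bound variable itself, $N = E_N[x]$ with $x \notin \BVs{E_N}$. Here the binding is driven into evaluation position, so I apply the induction hypothesis to $P$: if $P$ is a value then $M$ is a \emph{deref}-redex; if $P$ is a let-answer then $M$ is an \emph{assoc}-redex; if $P = E_P[P']$ then $M = (\letin{\be{x}{E_P}}{E_N[x]})[P']$ using the fourth context production; and if $P$ is stuck on some $w$ then $M$ is stuck on $w$ via $\letin{\be{x}{E_P}}{E_N[x]}$, where $w \neq x$ because the let is non-recursive and hence $x \notin \FVs{P}$. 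Each branch lands in exactly one of (i)--(iii).

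The remaining work is uniqueness and mutual exclusivity, which I would fold into the same induction. Mutual exclusivity rests on the observation that answers contain no demanding-variable position and head no redex, so cases (i)--(iii) cannot overlap; this is itself verified by a short induction on the answer grammar. Uniqueness in each inductive case is inherited from the uniqueness for the immediate subterm together with the fact that the top-level constructor dictates which context production applies and which subterm carries the hole: in particular, in the let case the classification of $N$ (stuck on $x$ or not) determines whether the hole descends into the body or into the binding, so the two let-context productions never simultaneously apply. I expect this bookkeeping, rather than any single hard idea, to be the most laborious part.
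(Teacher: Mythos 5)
Your proposal is correct and takes essentially the same approach as the paper, which disposes of this lemma with the single remark that ``it is proved by induction on $M$'': your trichotomy for open terms (answer, or uniquely stuck on a free variable $z$ with $z \notin \BVs{E}$, or unique context--redex decomposition) is precisely the standard strengthening needed to make that structural induction go through, since the subterms of a program are open and the let case must track which variable the body demands. The delicate points you identify --- the body-demands-$x$ subcase routing the hole into the binding via the production $\letin{\be{x}{E}}{E'[x]}$, and the fact that answers admit neither a demanded-variable position nor a redex decomposition --- are exactly the bookkeeping the paper leaves implicit.
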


The natural semantics is revised from that of Maraist et
al. \cite{cbnlambda}. It differs from the previous presentation in the
following two points. 
Firstly our semantics enforces variable hygiene correctly in the style
of Sestoft \cite{lazymachine} by keeping
track of variables which are temporarily deleted from heaps in 
{\it Variable} rule. This way, freshness 
conditions are locally checkable. Secondly our semantics works with 
the let-explicit calculus instead of the let-free one, hence has an
inference rule for the {\sf let}-construct; this makes it
smooth to extend our study of the acyclic calculus to the cyclic
calculus in the next section.  
As in \cite{cbnlambda}  the order of bindings in a heap
is significant.  
That is, re-ordering of bindings in a heap is not allowed. 
In particular in a heap $\maps{x_1}{M_1},\maps{x_2}{M_2},\ldots,
\maps{x_n}{M_n}$, an expression $M_i$ may contain as free
variables  only $x_1, \ldots, x_{i-1}$. This explains why 
it is safe to remove the bindings on the right in {\it Variable}
rule: $\Phi$ is not in the scope of $M$.
The natural semantics does not assume implicit $\alpha$-renaming,
but works with (raw) expressions. 
We may write \evalin{}{M} to denote \evalin{\rt}{M}.

\begin{figure}
\[
{\small \begin{array}{lcl}
\FVs{x} &=& \{x\}\\
\FVs{\fun{x}{M}} &=& \FVs{M}\backslash \{x\}\\
\FVs{\app{M}{N}} &=& \FVs{M} \cup \FVs{N}\\
\FVs{\letin{\be{x}{M}}{N}} &=& \FVs{M} \cup (\FVs{N}\backslash \{x\})
\end{array}}
\]
\caption{Free variables}\label{fig:freevar}
\end{figure}

A {\it configuration} is a pair \evalin{\Psi}{M} of a heap and an
expression.  
A configuration 
\evalin{\maps{x_1}{M_1},\ldots, \maps{x_n}{M_n}}{N}
is closed if 
$\FVs{N} \subseteq \{x_1, \ldots, x_n\}$, and
$\FVs{M_i} \subseteq \{x_1, \ldots, x_{i-1}\}$ 
for any $i$ in $1,\ldots, n$.
Borrowing from Sestoft's nomenclature \cite{lazymachine}, 
we say a configuration 
\evalin{\maps{x_1}{M_1},\ldots, \maps{x_n}{M_n}}{N}
is $X$-{\it good} if
$x_1, \ldots, x_n$ are pairwise distinctly named
and $\{x_1, \ldots, x_n\}$ and $X$ are disjoint. 
The judgment \evalfresh{\Psi}{M}{\Phi}{V}{X} is promising if 
\evalin{\Psi}{M} is closed and $X$-good. 

\medskip

Since derivations in the natural semantics only allocate fresh
variables in a heap
and substitute fresh variables for variables in expressions, 
a derivation of a promising judgment is promising everywhere. The
following lemma is proved by induction on the derivation of 
\evalfresh{\Psi}{M}{\Phi}{V}{X}.

\begin{lemma}\label{lemma:wf_inv}
If \evalin{\Psi}{M} is closed and $X$-good and the judgment 
\evalfresh{\Psi}{M}{\Phi}{V}{X} has a derivation,
then \evalin{\Phi}{V} is closed and $X$-good, and $\dom{\Psi} \subseteq
\dom{\Phi}$, and every judgment in the derivation is promising. 
\end{lemma}

Lemma~\ref{lemma:wf_inv} shows  the natural semantics preserves
binding structure in the absence of  implicit $\alpha$-renaming. 
Since the malloc function returns fresh locations in a heap,
the natural semantics indeed relates to heap-based implementations of
call-by-need.

\paragraph{Example}
Figures \ref{fig:red_acyclic_ex} and~\ref{fig:eval_acyclic_ex}
present the reduction sequence and the derivation 
for the expression {\sf let \be{x}{\app{(\fun{y}{y})}{(\fun{y}{y})}} in x} 
respectively.

\begin{figure}[t]
\[
\begin{array}{l}
\letin{\be{x}{\app{(\fun{y}{y})}{(\fun{y}{y})}}}{x}\\
\rar
\letin{\be{x}{(\letin{\be{y}{\fun{y}{y}}}{y})}}{x}\\
\rar
\letin{\be{x}{(\letin{\be{y}{\fun{y}{y}}}{\fun{y'}{y'}})}}{x}\\
\rar
\letin{\be{y}{\fun{y}{y}}}{\letin{\be{x}{\fun{y'}{y'}}}{x}}\\
\rar
\letin{\be{y}{\fun{y}{y}}}{\letin{\be{x}{\fun{y'}{y'}}}{\fun{y''}{y''}}}
\end{array}
\]
\caption{The reduction sequence for {\letin{\be{x}{\app{(\fun{y}{y})}{(\fun{y}{y})}}}{x}}}\label{fig:red_acyclic_ex}
\end{figure}

\begin{figure}[t]
\hspace*{\fill}\infer{
\evalfresh{}{\letin{\be{x}{\app{(\fun{y}{y})}{(\fun{y}{y})}}}{x}}
     {\maps{y'}{\fun{y}{y}},\maps{x'}{\fun{y}{y}}}{\fun{y}{y}}{\emp}
}{
  \infer{
    \evalfresh{\maps{x'}{\app{(\fun{y}{y})}{(\fun{y}{y})}}}{x'}
         {\maps{y'}{\fun{y}{y}},\maps{x'}{\fun{y}{y}}}{\fun{y}{y}}{\emp}
  }{
    \infer{
      \evalfresh{}{\app{(\fun{y}{y})}{(\fun{y}{y})}}{\maps{y'}{\fun{y}{y}}}{\fun{y}{y}}{\{x'\}}
    }{
      \evalfresh{}{\fun{y}{y}}{}{\fun{y}{y}}{\{x'\}}
      &\infer{
         \evalfresh{\maps{y'}{\fun{y}{y}}}{y'}
              {\maps{y'}{\fun{y}{y}}}{\fun{y}{y}}{\{x'\}}
       }{
         \evalfresh{}{\fun{y}{y}}{}{\fun{y}{y}}{\{x',y'\}}
       }
   }
  }
}\hspace*{\fill}\vspace{1ex}
\caption{The derivation for {\letin{\be{x}{\app{(\fun{y}{y})}{(\fun{y}{y})}}}{x}}}\label{fig:eval_acyclic_ex}
\end{figure}

\subsection{Equivalence of the two semantics}

\begin{figure}[t]
\hspace*{\fill}
{\small \begin{tabular}{c}
$\begin{array}{llcl}
\mathit{Frames}& F &::=& 
\app{[]}{M} \mid \letin{\be{x}{M}}{[]} \mid \letin{\be{x}{[]}}{E[x]}\\
\mathit{Structured~heaps}\hspace*{3ex}& \Sigma &::=& \rt \mid \Sigma, F\\
\mathit{Let's}& \Theta &::=&  \rt \mid \Theta,\letin{\be{x}{M}}{[]}\\
\end{array}$\vspace{3ex}\\
$\mathit{Lam}$\\
\infer{
  \Eval{\Sigma}{\fun{x}{M}}{\Sigma}{\fun{x}{M}}
}{}
\vspace{.5ex}\\
$\mathit{App}$\\
\infer{
  \Eval{\Sigma}{\app{M_1}{M_2}}{\Sigma_2}{V}
}{
  \Eval{\Sigma,\app{[]}{M_2}}{M_1}{\Sigma_1,\app{[]}{M_2},\Theta}{\fun{x}{N}}
  &\Eval{\Sigma_1,\Theta,\letin{\be{x'}{M_2}}{[]}}{\subst{N}{x'}{x}}{\Sigma_2}{V}
  &x' ~\fresh
}
\vspace{.5ex}\\
$\mathit{Letin}$\\
\infer{
  \Eval{\Sigma}{\letin{\be{x}{N}}{M}}{\Sigma'}{V}
}{
  \Eval{\Sigma,\letin{\be{x'}{N}}{[]}}{\subst{M}{x'}{x}}{\Sigma'}{V}
  &x' ~\fresh
}
\vspace{.5ex}\\
$\mathit{Var}$\\
\infer{
 \Eval{\Sigma,\letin{\be{x}{M}}{[]},\Sigma_1}{x}
      {\Sigma_2,\Theta,\letin{\be{x}{V}}{[]}, \Sigma_1}{V}
}{
 \Eval{\Sigma,\letin{\be{x}{[]}}{\Sigma_1[x]}}{M}
      {\Sigma_2,\letin{\be{x}{[]}}{\Sigma_1[x]},\Theta}{V}
}
\end{tabular}}\vspace{1ex}\hspace*{\fill}
\caption{Instrumented natural semantics for \acyclic}\label{fig:str_eval}
\end{figure}

The idea underlying our proof is derived from observing the following gap 
between the two semantics: 
\begin{itemize}
\item In the reduction semantics heaps are first allocated locally, 
then are globalized as much as necessary by applying 
{\it lift} or {\it assoc} afterwards 
to dereference computed values. Besides, the redex is focused
implicitly in the sense that the semantics does not specify how to
build evaluation contexts, but rather relies on the
unique-decomposition property.

\item In the natural semantics there is a single global heap. 
The redex is focused explicitly by applying inference rules,
thus decomposing evaluation contexts. 

\end{itemize}
To facilitate reconstructing reduction sequences from derivations by
bridging the above gap, our proof introduces an instrumented
natural semantics,  
defined in figure~\ref{fig:str_eval}, as an intermediary step. The
instrumented natural semantics uses {\it structured heaps}
$\Sigma$,  which are sequences of {\it frames} $F$. Intuitively
structured heaps are sequenced evaluation contexts.

The notation \LBVs{\Sigma} denotes the set of variables let-bound in frames 
of $\Sigma$. Or:
\[
\begin{array}{rcl}
\LBVs{\rt} &=& \emptyset\\
\LBVs{\Sigma,\app{[]}{M}} &=& \LBVs{\Sigma}\\
\LBVs{\Sigma,\letin{\be{x}{M}}{[]}} &=& \LBVs{\Sigma} \cup \{x\}\\
\LBVs{\Sigma,\letin{\be{x}{[]}}{M}} &=& \LBVs{\Sigma}\\
\end{array}
\]
A structured heap $\Sigma$ is well-formed if it is an empty sequence, 
or else $\Sigma = \Sigma',F$ and $\Sigma'$ is well-formed
and one of the following conditions holds:
\begin{enumerate}
\item $F = \app{[]}{M}$ and $\FVs{M} \subseteq \LBVs{\Sigma'}$
\item $F = \letin{\be{x}{M}}{[]}$ and 
$\FVs{M} \subseteq \LBVs{\Sigma'}$ and 
$x$ is distinct from
any of \LBVs{\Sigma'}
\item $F = \letin{\be{x}{[]}}{M}$ and 
$\FVs{M} \subseteq \LBVs{\Sigma'} \cup \{x\}$ and 
$x$ is distinct from any of \LBVs{\Sigma'}.
\end{enumerate}
A structured configuration \evalin{\Sigma}{M} is well-formed 
if $\Sigma$ is well-formed 
and $\FVs{M} \subseteq \LBVs{\Sigma}$. 

We map structured configurations to expressions 
by defining translation \comp{\cdot} 
from structured heaps to evaluation contexts: \vspace{.5ex}\\
\hspace*{\fill}$\begin{array}{rclrcl}
\comp{\rt} &=& []~~~~~&\comp{\Sigma,F} &=& \comp{\Sigma}[F]
\end{array}$\hspace*{\fill}\vspace{1ex}\\
We may identify $\Sigma$ with \comp{\Sigma} when there should be no
confusion, thus write $\Sigma[M]$ to denote $\comp{\Sigma}[M]$. 
A (raw) expression $\Sigma[M]$ is not necessarily a canonical
representative of an $\alpha$-equivalence class. The following lemma is
proved by induction on the structure of $\Sigma$.

\begin{lemma}\label{lamme:wfstrcnf_is_prg}
If \evalin{\Sigma}{M} is well-formed, then $\comp{\Sigma}[M]$ is a program. 
\end{lemma}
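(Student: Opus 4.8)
The plan is to prove the statement exactly as given by induction on the structure of the structured heap $\Sigma$. Since a program is a closed expression, it suffices to show $\FVs{\comp{\Sigma}[M]} = \emptyset$ whenever $\evalin{\Sigma}{M}$ is well-formed. The key idea is that stripping the last frame from $\Sigma$ and absorbing it into the plugged expression preserves well-formedness of the configuration: I will show that if $\evalin{\Sigma',F}{M}$ is well-formed, then so is $\evalin{\Sigma'}{F[M]}$. Granting this, the inductive step follows from the identity $\comp{\Sigma',F}[M] = \comp{\Sigma'}[F][M] = \comp{\Sigma'}[F[M]]$, which is just the defining clause $\comp{\Sigma,F} = \comp{\Sigma}[F]$ together with associativity of context composition; applying the induction hypothesis to the shorter well-formed configuration $\evalin{\Sigma'}{F[M]}$ then yields that $\comp{\Sigma'}[F[M]] = \comp{\Sigma',F}[M]$ is a program.

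For the base case $\Sigma = \rt$, well-formedness of $\evalin{\rt}{M}$ forces $\FVs{M} \subseteq \LBVs{\rt} = \emptyset$, so $M$ is closed and $\comp{\rt}[M] = M$ is a program. In the inductive step I split on the three possible shapes of the last frame $F$ and, using the frame condition from the definition of well-formed heaps together with the configuration condition $\FVs{M} \subseteq \LBVs{\Sigma',F}$, verify that $\FVs{F[M]} \subseteq \LBVs{\Sigma'}$ (and $\Sigma'$ is well-formed, which comes for free). For $F = \app{[]}{N}$ we have $\LBVs{\Sigma',F} = \LBVs{\Sigma'}$ and $\FVs{N} \subseteq \LBVs{\Sigma'}$, so $\FVs{\app{M}{N}} = \FVs{M} \cup \FVs{N} \subseteq \LBVs{\Sigma'}$. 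For $F = \letin{\be{x}{N}}{[]}$ we have $\LBVs{\Sigma',F} = \LBVs{\Sigma'} \cup \{x\}$, hence $\FVs{M} \setminus \{x\} \subseteq \LBVs{\Sigma'}$, and with $\FVs{N} \subseteq \LBVs{\Sigma'}$ this gives $\FVs{\letin{\be{x}{N}}{M}} \subseteq \LBVs{\Sigma'}$. These two cases are routine unfolding of the free-variable definitions in figure~\ref{fig:freevar}.

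The case I expect to be the main obstacle is the binding-hole frame $F = \letin{\be{x}{[]}}{P}$, where the subtlety is a scoping asymmetry: $x$ scopes over the body $P$ but not over the hole. This is exactly why the definition sets $\LBVs{\Sigma',\letin{\be{x}{[]}}{P}} = \LBVs{\Sigma'}$ instead of adding $x$, and the delicate point is to respect this and not accidentally treat $x$ as available to $M$. Here $\LBVs{\Sigma',F} = \LBVs{\Sigma'}$ gives $\FVs{M} \subseteq \LBVs{\Sigma'}$, while the frame condition $\FVs{P} \subseteq \LBVs{\Sigma'} \cup \{x\}$ gives $\FVs{P} \setminus \{x\} \subseteq \LBVs{\Sigma'}$; adding these yields $\FVs{F[M]} = \FVs{\letin{\be{x}{M}}{P}} = \FVs{M} \cup (\FVs{P} \setminus \{x\}) \subseteq \LBVs{\Sigma'}$, as required. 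Once this bookkeeping is in place, the induction closes and the conclusion $\FVs{\comp{\Sigma}[M]} = \emptyset$ follows, so $\comp{\Sigma}[M]$ is a program.
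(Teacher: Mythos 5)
Your proof is correct and takes essentially the same route as the paper, which simply states that the lemma is proved by induction on the structure of $\Sigma$ and leaves the details implicit. Your peeling of the last frame via the key step that well-formedness of $\evalin{\Sigma',F}{M}$ implies well-formedness of $\evalin{\Sigma'}{F[M]}$, together with the identity $\comp{\Sigma',F}[M] = \comp{\Sigma'}[F[M]]$ and the correct handling of the scoping in the $\letin{\be{x}{[]}}{P}$ frame, fills in exactly the bookkeeping the paper omits.
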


\medskip

Let's look at the inference rules in figure~\ref{fig:str_eval}.
{\it Lam} and {\it Letin} are self-explanatory.
When evaluating function expression $M_1$ in {\it App},
the rule pushes into the heap the frame \app{[]}{M_2}, which is popped when
evaluating function body $N$. 
Notice that the trailing frames to \app{[]}{M_2} in the result heap of the
left hypothesis is $\Theta$, which 
suggests $M_1$ reduces to an answer $\Theta[\fun{x}{N}]$.
This will be proved in Proposition~\ref{prop:Eval_then_red}.
Also, observe the order between $\Theta$ and \letin{\be{x'}{M_2}}{[]} in
the right hypothesis, 
where let-lifting is performed implicitly. 
When evaluating variable $x$ in {\it Var}, the rule pushes 
the ``continuation'' \letin{\be{x}{[]}}{\Sigma_1[x]}
into the heap. Again, observe the order between $\Theta$
and \letin{\be{x}{V}}{[]} in the result heap of the consequence, 
where let-association is implicitly performed.
It should be noted that Ariola and Felleisen already observed
that Launchbury's formalization has hidden flattening of a heap
in his {\it Variable} rule, which amounts to applying {\it assoc}~\cite{thecbnlambda}. 

\begin{figure}
\[
{\small \begin{array}{c}
\infer{
  \Eval{\letin{\be{x'}{[]}}{x'},\letin{\be{y'}{\fun{y}{y}}}{[]}}{y'}
       {\letin{\be{x'}{[]}}{x'},\letin{\be{y'}{\fun{y}{y}}}{[]}}
       {\fun{y}{y}}~~~~~(*)
}{
  \Eval{\letin{\be{x'}{[]}}{x'},\letin{\be{y'}{[]}}{y'}}{\fun{y}{y}}
       {\letin{\be{x'}{[]}}{x'},\letin{\be{y'}{[]}}{y'}}{\fun{y}{y}}
}
\\[1.5ex]
\infer{
\Eval{}{\letin{\be{x}{\app{(\fun{y}{y})}{(\fun{y}{y})}}}{x}}
       {\letin{\be{y'}{\fun{y}{y}}}{[]},\letin{\be{x'}{\fun{y}{y}}}{[]}}
       {\fun{y}{y}}
}{
  \infer{
    \Eval{\letin{\be{x'}{\app{(\fun{y}{y})}{(\fun{y}{y})}}}{[]}}{x'}
         {\letin{\be{y'}{\fun{y}{y}}}{[]},\letin{\be{x'}{\fun{y}{y}}}{[]}}
         {\fun{y}{y}}
  }{
    \infer{
      \Eval{\letin{\be{x'}{[]}}{x'}}{\app{(\fun{y}{y})}{(\fun{y}{y})}}
           {\letin{\be{x'}{[]}}{x'},\letin{\be{y'}{\fun{y}{y}}}{[]}}
           {\fun{y}{y}}              
    }{
      \deduce{
       \Eval{\letin{\be{x'}{[]}}{x'},\letin{\be{y'}{\fun{y}{y}}}{[]}}{y'}
            {\letin{\be{x'}{[]}}{x'},\letin{\be{y'}{\fun{y}{y}}}{[]}}
              {\fun{y}{y}}~~~~~(*)              
     }{
      \Eval{\letin{\be{x'}{[]}}{x'},\app{[]}{(\fun{y}{y})}}{\fun{y}{y}}
           {\letin{\be{x'}{[]}}{x'},\app{[]}{(\fun{y}{y})}}{\fun{y}{y}}
  }
   }
}
}
\end{array}}
\]
\caption{The derivation in the instrumented natural semantics for {\letin{\be{x}{\app{(\fun{y}{y})}{(\fun{y}{y})}}}{x}}}\label{fig:streval_acyclic_ex}
\end{figure}

\begin{lemma}\label{lemma:Eval_wfheap}
If \evalin{\Sigma}{M} is well-formed and 
\Eval{\Sigma}{M}{\Sigma'}{V},
then \evalin{\Sigma'}{V} is well-formed.
\end{lemma}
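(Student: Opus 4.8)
The plan is to induct on the derivation of \Eval{\Sigma}{M}{\Sigma'}{V}, with one case per rule of figure~\ref{fig:str_eval}. The manipulations in every case are bookkeeping about \LBVs, and they all rest on one observation read off the definition of \LBVs: an application frame \app{[]}{N} and an ``in-progress'' binding frame \letin{\be{x}{[]}}{N} contribute nothing to \LBVs, whereas a completed binding \letin{\be{x}{N}}{[]} contributes exactly $\{x\}$. Consequently such frames may be dropped, inserted, or transposed past one another without changing the \LBVs\ seen by the frames stacked above them, which is exactly what legitimises the implicit let-lifting and let-association built into the \emph{App} and \emph{Var} rules. Alongside well-formedness I would carry, in the same induction, the two companion invariants that parallel Lemma~\ref{lemma:wf_inv}: that the heap prefix only grows, so $\LBVs{\Sigma}\subseteq\LBVs{\Sigma'}$, and that every binder allocated \fresh\ in the derivation is distinct from all variables already occurring in the initial configuration.

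The \emph{Lam} case is immediate, as the configuration is returned unchanged. For \emph{Letin}, the premise configuration \evalin{\Sigma,\letin{\be{x'}{N}}{[]}}{\subst{M}{x'}{x}} is well-formed because $\FVs{N}\subseteq\LBVs{\Sigma}$ and $x'$ is fresh (clause~2), and $\FVs{\subst{M}{x'}{x}}\subseteq\LBVs{\Sigma}\cup\{x'\}$; the induction hypothesis then delivers the conclusion directly. For \emph{App}, the left premise \evalin{\Sigma,\app{[]}{M_2}}{M_1} is well-formed by clause~1, so the induction hypothesis makes \evalin{\Sigma_1,\app{[]}{M_2},\Theta}{\fun{x}{N}} well-formed; since the application frame carries no \LBVs, dropping it keeps $\Sigma_1,\Theta$ well-formed, and appending \letin{\be{x'}{M_2}}{[]} is legal because $\FVs{M_2}\subseteq\LBVs{\Sigma_1}$ and $x'$ is fresh. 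The free-variable check on \subst{N}{x'}{x} is routine, and a second appeal to the induction hypothesis finishes the case.

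The \emph{Var} case is the crux. To see that the premise \evalin{\Sigma,\letin{\be{x}{[]}}{\Sigma_1[x]}}{M} is well-formed I would first prove, by induction on $\Sigma_1$, the auxiliary claim that if $\Delta,\Sigma_1$ is well-formed and $\FVs{P}\subseteq\LBVs{\Delta}\cup\LBVs{\Sigma_1}$ then $\FVs{\comp{\Sigma_1}[P]}\subseteq\LBVs{\Delta}$; applied with $\Delta=\Sigma,\letin{\be{x}{M}}{[]}$ and $P=x$ this gives $\FVs{\Sigma_1[x]}\subseteq\LBVs{\Sigma}\cup\{x\}$, which is clause~3, while $\FVs{M}\subseteq\LBVs{\Sigma}$ comes from clause~2 of the original heap. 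The induction hypothesis then yields that \evalin{\Sigma_2,\letin{\be{x}{[]}}{\Sigma_1[x]},\Theta}{V} is well-formed, and the real work is to reassemble the conclusion heap $\Sigma_2,\Theta,\letin{\be{x}{V}}{[]},\Sigma_1$ from it. Here the in-progress frame \letin{\be{x}{[]}}{\Sigma_1[x]} (no \LBVs) is discarded so that $\Sigma_2,\Theta$ stays well-formed; the completed binding \letin{\be{x}{V}}{[]} is inserted using $\FVs{V}\subseteq\LBVs{\Sigma_2}\cup\LBVs{\Theta}$ from the hypothesis; and $\Sigma_1$ is re-appended using $\LBVs{\Sigma}\subseteq\LBVs{\Sigma_2}$ (heap-prefix growth) to preserve every reference condition inside $\Sigma_1$. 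I expect the main obstacle to be precisely this reassembly, because inserting \letin{\be{x}{V}}{[]} and re-appending $\Sigma_1$ both demand distinctness of $x$ and of the binders of $\Sigma_1$ from the binders of $\Theta$ and from the newly grown part of $\Sigma_2$; these distinctness facts do not follow from well-formedness alone and are exactly why the freshness invariant — that binders introduced during the sub-derivation avoid everything in the initial configuration, in particular $x$ and the binders occurring in $\Sigma_1[x]$ — must be threaded through the induction.
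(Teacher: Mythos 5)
Your proposal is correct and takes essentially the same route as the paper: the paper proves this lemma by exactly the induction on the derivation of \Eval{\Sigma}{M}{\Sigma'}{V} that you carry out, stating it in a single line and leaving all case details implicit. The particulars you supply — the observation that \app{[]}{M} and \letin{\be{x}{[]}}{N} frames contribute nothing to \LBVs{\cdot}, the auxiliary claim about $\FVs{\comp{\Sigma_1}[P]}$, the prefix-growth invariant (which the paper establishes separately as Lemma~\ref{lemma:Eval_heap_grow}), and the threaded freshness invariant needed for the distinctness conditions in the \emph{Var} case — are precisely the bookkeeping the paper's one-line proof suppresses, and your identification of the freshness issue as the one point not recoverable from well-formedness alone is accurate.
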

\begin{proof}
By induction on the derivation of \Eval{\Sigma}{M}{\Sigma'}{V}.
\end{proof}

Simple induction proves the instrumented natural semantics correct
with respect to the reduction semantics. 
\begin{proposition}\label{prop:Eval_then_red}
If \evalin{\Sigma}{M} is well-formed and \Eval{\Sigma}{M}{\Sigma'}{V}, 
then $\Sigma[M] \rarr \Sigma'[V]$.
\end{proposition}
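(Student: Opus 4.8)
The plan is to induct on the derivation of \Eval{\Sigma}{M}{\Sigma'}{V}, taking the four rules of figure~\ref{fig:str_eval} as cases. Two facts are used throughout. First, since \evalin{\Sigma}{M} is well-formed, \comp{\Sigma} is an evaluation context, and the same holds for every well-formed structured heap produced along the way by Lemma~\ref{lemma:Eval_wfheap}; this is exactly what makes each administrative reduction I insert below a genuine standard-reduction step $E[M'] \rar E[N']$. Second, I use $\alpha$-conversion freely: because the reduction semantics works on $\alpha$-equivalence classes whereas the instrumented semantics substitutes fresh variables into raw terms, the fresh renamings performed by \emph{App} and \emph{Letin} only replace a term by an $\alpha$-equivalent one. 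Before applying the induction hypothesis to a premise I must check that its input configuration is well-formed: for a premise whose input heap merely extends $\Sigma$ by one frame this is immediate from well-formedness of \evalin{\Sigma}{M}, while for the right premise of \emph{App}, whose input heap is assembled from the output heap of the left premise, it follows from Lemma~\ref{lemma:Eval_wfheap}.

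The easy cases are \emph{Lam}, which is reflexivity ($\Sigma[\fun{x}{M}] \rarr \Sigma[\fun{x}{M}]$ in zero steps), and \emph{Letin}: here $\Sigma[\letin{\be{x}{N}}{M}]$ is $\alpha$-equivalent to $\Sigma[\letin{\be{x'}{N}}{\subst{M}{x'}{x}}]$, which is precisely the translation of the premise's input \evalin{\Sigma, \letin{\be{x'}{N}}{[]}}{\subst{M}{x'}{x}}, so the induction hypothesis closes the case with no further reduction.

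The substantial cases are \emph{App} and \emph{Var}, where the implicit let-lifting and let-association noted after figure~\ref{fig:str_eval} must be spelled out as reduction steps, and where the correct translation identities are the crux. In \emph{App} the left premise's output heap $\Sigma_1, \app{[]}{M_2}, \Theta$ translates $\fun{x}{N}$ to the answer $\Theta[\fun{x}{N}]$ applied to $M_2$ under $\Sigma_1$, i.e. to $\Sigma_1[(\Theta[\fun{x}{N}])\,M_2]$. Thus the induction hypothesis on the left premise gives $\Sigma[\app{M_1}{M_2}] \rarr \Sigma_1[(\Theta[\fun{x}{N}])\,M_2]$; a (possibly empty) run of {\it lift} steps pulls the bindings of $\Theta$ past the argument to reach $(\Sigma_1,\Theta)[(\fun{x}{N})\,M_2]$; one $\beta_{\mathit{need}}$ step yields $(\Sigma_1,\Theta)[\letin{\be{x}{M_2}}{N}]$, which is $\alpha$-equivalent to $(\Sigma_1,\Theta,\letin{\be{x'}{M_2}}{[]})[\subst{N}{x'}{x}]$; and the induction hypothesis on the right premise reduces this to $\Sigma_2[V]$. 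The \emph{Var} case is dual, using {\it assoc} and {\it deref} in place of {\it lift} and $\beta_{\mathit{need}}$: the premise translates to $\Sigma[\letin{\be{x}{M}}{\Sigma_1[x]}] \rarr \Sigma_2[\letin{\be{x}{\Theta[V]}}{\Sigma_1[x]}]$; a run of {\it assoc} steps hoists the bindings of $\Theta$ out of the binding for $x$ to give $(\Sigma_2,\Theta)[\letin{\be{x}{V}}{\Sigma_1[x]}]$; and a single {\it deref} step substitutes $V$ for the demanded occurrence of $x$, producing $(\Sigma_2,\Theta)[\letin{\be{x}{V}}{\Sigma_1[V]}]$, which is exactly $\Sigma_2[\Theta[\letin{\be{x}{V}}{\Sigma_1[V]}]]$, the translation of the output configuration $\Sigma_2,\Theta,\letin{\be{x}{V}}{[]},\Sigma_1$.

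I expect the main obstacle to be precisely the bookkeeping of these translation identities: tracking where the trailing block $\Theta$ sits relative to the frame \app{[]}{M_2} (respectively \letin{\be{x}{[]}}{\Sigma_1[x]}) before and after it is moved, and confirming that every intermediate structured heap traversed by the {\it lift}/{\it assoc} runs is well-formed so that its translation is a legal evaluation context. In particular one must check at each step that an answer applied to an argument matches the left-hand side of {\it lift}, that a binding whose body is still an answer matches that of {\it assoc}, and that the fresh-variable renamings line the raw terms up with their $\alpha$-classes; once these are verified, the required reduction steps are forced.
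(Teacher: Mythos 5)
Your proof is correct and takes essentially the same route as the paper's: induction on the derivation with the identical reduction sequences in each case --- a run of \emph{lift} steps followed by one $\beta_{\mathit{need}}$ step for \emph{App}, and a run of \emph{assoc} steps followed by one \emph{deref} step for \emph{Var}. The bookkeeping you make explicit (preservation of well-formedness via Lemma~\ref{lemma:Eval_wfheap} so that intermediate heaps translate to legal evaluation contexts, and the $\alpha$-renaming that aligns the fresh variables of \emph{App} and \emph{Letin} with the $\alpha$-equivalence classes of the reduction semantics) is precisely what the paper's proof leaves implicit.
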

\begin{proof}
By induction on the derivation of \Eval{\Sigma}{M}{\Sigma'}{V} with case analysis
on the last rule used.\\
- The cases of {\it Lam} and {\it Letin} are obvious.\\
- The case of {\it App}. Suppose 
we deduce \Eval{\Sigma}{\app{M_1}{M_2}}{\Sigma_2}{V} from
\Eval{\Sigma,\app{[]}{M_2}}{M_1}{\Sigma_1,\app{[]}{M_2},\Theta}{\fun{x}{N}}
and \Eval{\Sigma_1,\Theta,\letin{\be{x'}{M_2}}{[]}}{\subst{N}{x'}{x}}{\Sigma_2}{V}.
Then we have:\\
\hspace*{3ex}$\begin{array}{l}
\Sigma[\app{M_1}{M_2}]\\ 
\rarr \Sigma_1[\app{(\Theta[\fun{x}{N}])}{M_2}]~~~\mathrm{by~ind.~hyp.}\\
\rarr \Sigma_1[\Theta[\app{(\fun{x}{N})}{M_2}]]~~~\mathrm{by}~\mathit{lift}\\
\rar \Sigma_1[\Theta[\letin{\be{x'}{M_2}}{\subst{N}{x'}{x}}]]~~~\mathrm{by}~\beta_{\mathit{need}}\\
\rarr \Sigma_2[V]~~~\mathrm{by~ind.~hyp.}
\end{array}$\\
- The case of {\it Var}. Suppose we deduce
\Eval{\Sigma,\letin{\be{x}{M}}{[]},\Sigma_1}{x}
{\Sigma_2,\Theta,\letin{\be{x}{V}}{[]}, \Sigma_1}{V}
from
\Eval{\Sigma,\letin{\be{x}{[]}}{\Sigma_1[x]}}{M}
{\Sigma_2,\letin{\be{x}{[]}}{\Sigma_1[x]},\Theta}{V}.
Then we have:\\
\hspace*{3ex}$\begin{array}{l}
\Sigma[\letin{\be{x}{M}}{\Sigma_1[x]}]\\
\rarr \Sigma_2[\letin{\be{x}{\Theta[V]}}{\Sigma_1[x]}]~~~\mathrm{by~ind.~hyp.}\\
\rarr \Sigma_2[\Theta[\letin{\be{x}{V}}{\Sigma_1[x]}]]~~~\mathrm{by}~\mathit{assoc}\\
\rar \Sigma_2[\Theta[\letin{\be{x}{V}}{\Sigma_1[V]}]]~~~\mathrm{by}~\mathit{deref} 
\end{array}$

\end{proof}

\medskip

We need to prove the original natural semantics 
in figure~\ref{fig:eval} correct with respect to the instrumented 
natural semantics.
This is mainly to check that in figure~\ref{fig:str_eval} frames are 
properly pushed and popped so that the pop operation never fails.
Below we define a preorder on structured heaps to 
state that structured heaps only ``grow'' during derivations. 

A preorder \hpleq{}{} on structured heaps is defined such that
\hpleq{F_1,\ldots, F_m}{F'_1, \ldots, F'_n} 
if there is an injection $\iota$ from $\{1, \ldots, m\}$ to $\{1, \ldots, n\}$ 
satisfying the following three conditions:
\begin{enumerate}
\item  if $i < j$ then $\iota(i) < \iota(j)$ 
\item for all $i$ in $\{1, \ldots, m\}$, either $F_i = F'_{\iota(i)}$ or else
$F_i = \letin{\be{x}{M}}{[]}$ and 
$F'_{\iota(i)} = \letin{\be{x}{N}}{[]}$ for some $x$, $M$ and $N$
\item for all $i$ in $\{1, \ldots, n\}\backslash \ran{\iota}$,
$F'_i = \letin{\be{x}{M}}{[]}$ for some $x$ and $M$, where 
\ran{\iota} denotes the range of $\iota$ and 
$\{1, \ldots, n\}\backslash \ran{\iota}$ denotes set subtraction.
\end{enumerate}
It is easy to check that \hpleq{}{} is a preorder.

\begin{lemma}\label{lemma:Eval_heap_grow}
If \evalin{\Sigma}{M} is well-formed and \Eval{\Sigma}{M}{\Sigma'}{V}, 
then \hpleq{\Sigma}{\Sigma'}.
\end{lemma}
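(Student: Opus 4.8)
The plan is to proceed by induction on the derivation of \Eval{\Sigma}{M}{\Sigma'}{V}, following exactly the same case structure used in Proposition~\ref{prop:Eval_then_red}. Since \hpleq{}{} has just been shown to be a preorder, transitivity is available for free, so in each inductive case the task reduces to chaining together the preorder facts supplied by the hypotheses with a single ``local'' step witnessing the growth introduced by the rule itself. The cases of \emph{Lam} and \emph{Letin} are immediate: for \emph{Lam} the heap is unchanged, so \hpleq{\Sigma}{\Sigma} holds by reflexivity; for \emph{Letin} the conclusion heap is literally the conclusion heap of the premise, so the result is exactly the induction hypothesis.

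For the \emph{App} case, I would apply the induction hypothesis to both premises. The left premise gives \hpleq{(\Sigma,\app{[]}{M_2})}{(\Sigma_1,\app{[]}{M_2},\Theta)}, and the right premise gives \hpleq{(\Sigma_1,\Theta,\letin{\be{x'}{M_2}}{[]})}{\Sigma_2}. The key observation is that stripping the \app{[]}{M_2} frame from both sides of the first inequality yields \hpleq{\Sigma}{(\Sigma_1,\Theta)} (the injection restricts correctly because the frames $\Theta$ are all let-frames, matching condition~3). Then, since $\Theta,\letin{\be{x'}{M_2}}{[]}$ is a reordering-free extension of $\Theta$ by one further let-frame, I can compose to get \hpleq{\Sigma}{(\Sigma_1,\Theta,\letin{\be{x'}{M_2}}{[]})}, and transitivity with the right-premise inequality delivers \hpleq{\Sigma}{\Sigma_2}. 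The \emph{Var} case is analogous: the induction hypothesis on the premise gives a preorder relation between $\Sigma,\letin{\be{x}{[]}}{\Sigma_1[x]}$ and $\Sigma_2,\letin{\be{x}{[]}}{\Sigma_1[x]},\Theta$, and I convert this into the desired relation between $\Sigma,\letin{\be{x}{M}}{[]},\Sigma_1$ and $\Sigma_2,\Theta,\letin{\be{x}{V}}{[]},\Sigma_1$ by exhibiting the evident injection, using condition~2 to match the let-frame for $x$ (whose body changes from $M$ to $V$) and condition~3 to absorb the newly inserted frames of $\Theta$.

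The main obstacle will be bookkeeping the injection $\iota$ precisely in the \emph{App} and \emph{Var} cases, rather than any conceptual difficulty. In particular, I must verify that the interposition of $\Theta$ (in \emph{App}) and the change of a binding's body together with the interposition of $\Theta$ (in \emph{Var}) are exactly the two phenomena permitted by conditions~2 and~3 of the preorder, and that the order-preservation condition~1 survives the restriction/composition of injections. It is worth noting that the well-formedness of the structured configurations and Lemma~\ref{lemma:Eval_wfheap} are what guarantee the trailing frames $\Theta$ genuinely consist of let-frames, which is precisely what condition~3 requires; I would invoke these to justify that the injections I construct satisfy all three clauses. Once the shape of $\iota$ is pinned down in each case, the verification is routine.
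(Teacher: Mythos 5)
Your proposal is correct and follows essentially the paper's own proof, which is just induction on the derivation together with the composition fact that \hpleq{\Sigma}{\Sigma'} and \hpleq{\Sigma',\Theta}{\Sigma''} imply \hpleq{\Sigma}{\Sigma''}; your explicit injection bookkeeping (stripping the shared frame, absorbing $\Theta$ via condition~3, matching the rebound variable via condition~2) merely spells out what the paper leaves implicit. One small slip: in the \emph{Letin} case the induction hypothesis gives \hpleq{\Sigma,\letin{\be{x'}{N}}{[]}}{\Sigma'} rather than the goal \hpleq{\Sigma}{\Sigma'}, so you still need one application of the same extend-by-a-let-frame step you use in \emph{App} (and note that $\Theta$ consists of let-frames by the grammar of the instrumented rules themselves, not by well-formedness or Lemma~\ref{lemma:Eval_wfheap}).
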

\begin{proof}
By induction on the derivation of \Eval{\Sigma}{M}{\Sigma'}{V}.
We use the fact that if \hpleq{\Sigma}{\Sigma'} and \hpleq{\Sigma',\Theta}{\Sigma''},
then \hpleq{\Sigma}{\Sigma''}.
\end{proof}

We define translation \decomp{\cdot} from structured heaps to
(ordinary) heaps  by collecting {\sf let}-frames as follows:
\[
\begin{array}{rcl}
\decomp{\rt} &=& \rt\\
\decomp{\Sigma,\app{[]}{M}} &=& \decomp{\Sigma}\\
\decomp{\Sigma,\letin{\be{x}{M}}{[]}}&=& \decomp{\Sigma},x \mapsto M\\
\decomp{\Sigma,\letin{\be{x}{[]}}{M}}&=& \decomp{\Sigma}
\end{array}
\]

\begin{proposition}\label{prop:eval_then_Eval}
If \evalin{\Psi}{M} is closed and $X$-good 
and \evalfresh{\Psi}{M}{\Phi}{V}{X}, 
then for any $\Sigma$ such that $\decomp{\Sigma} = \Psi$ 
and \evalin{\Sigma}{M} 
is well-formed, \Eval{\Sigma}{M}{\Sigma'}{V} and $\decomp{\Sigma'} = \Phi$.
\end{proposition}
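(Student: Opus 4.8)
The plan is to argue by induction on the derivation of \evalfresh{\Psi}{M}{\Phi}{V}{X}, with a case analysis on the last rule applied, establishing the statement for \emph{every} structured heap $\Sigma$ with $\decomp{\Sigma} = \Psi$ and \evalin{\Sigma}{M} well-formed; this universal quantification over $\Sigma$ is built into the induction hypothesis. Throughout I would rely on three earlier facts: Lemma~\ref{lemma:wf_inv}, so that every premise of the original derivation is again closed and good and the induction hypothesis applies to it; Lemma~\ref{lemma:Eval_wfheap}, to keep the constructed instrumented result heaps well-formed; and, decisively, Lemma~\ref{lemma:Eval_heap_grow} together with the explicit shape of the preorder \hpleq{}{}, to recover the rigid syntactic forms that the result heaps in rules {\it App} and {\it Var} demand. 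The {\it Lambda} and {\it Let} cases are immediate: for {\it Lambda} one takes $\Sigma' = \Sigma$ and applies {\it Lam}; for {\it Let} one forms $\Sigma, \letin{\be{x'}{N}}{[]}$, checks $\decomp{\Sigma, \letin{\be{x'}{N}}{[]}} = \Psi, \maps{x'}{N}$ and its well-formedness from $\FVs{N}\subseteq\LBVs{\Sigma}$ and freshness of $x'$, applies the induction hypothesis, and concludes by {\it Letin}.

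For the {\it Application} case I would extend the given heap to $\Sigma, \app{[]}{M_2}$, which still decompiles to $\Psi$ and is well-formed since $\FVs{M_2}\subseteq\LBVs{\Sigma}$. Applying the induction hypothesis to the left premise \evalfresh{\Psi}{M_1}{\Phi}{\fun{x}{N}}{X} produces an instrumented derivation ending in some $\Sigma''$ with $\decomp{\Sigma''} = \Phi$. The crux is that $\Sigma''$ must have the form {\it App} requires, namely $\Sigma_1, \app{[]}{M_2}, \Theta$ with $\Theta$ a block of {\sf let}-body frames: by Lemma~\ref{lemma:Eval_heap_grow} we have \hpleq{\Sigma, \app{[]}{M_2}}{\Sigma''}, and since the last frame $\app{[]}{M_2}$ is not a {\sf let}-body frame the embedding $\iota$ must match it identically (condition~2), while every frame beyond its image lies outside $\ran{\iota}$ and is therefore a {\sf let}-body frame (condition~3). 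I would then reassemble the heap as $\Sigma_1, \Theta, \letin{\be{x'}{M_2}}{[]}$, observe that it decompiles to $\Phi, \maps{x'}{M_2}$ and is well-formed (the side conditions $\FVs{M_2}\subseteq\LBVs{\Sigma_1}$ and $\FVs{\subst{N}{x'}{x}}\subseteq\LBVs{\Sigma_1,\Theta}\cup\{x'\}$ follow from well-formedness of $\Sigma''$ via Lemma~\ref{lemma:Eval_wfheap} and freshness of $x'$), and apply the induction hypothesis to the right premise. Rule {\it App} then glues the two sub-derivations together, and its final heap $\Sigma_2$ satisfies $\decomp{\Sigma_2} = \Psi'$.

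The {\it Variable} case is the \emph{main obstacle} and requires the most careful bookkeeping. Here the given $\Sigma$ decompiles to $\Psi, \maps{x}{M}, \Phi$, and since only {\sf let}-body frames contribute to \decomp{\cdot} while well-formedness forces their bound variables to be pairwise distinct, the unique frame binding $x$ splits $\Sigma$ as $\Sigma_0, \letin{\be{x}{M}}{[]}, \Sigma_1$ with $\decomp{\Sigma_0} = \Psi$ and $\decomp{\Sigma_1} = \Phi$. To supply the induction hypothesis for the premise \evalfresh{\Psi}{M}{\Psi'}{V}{X \cup \{x\} \cup \dom{\Phi}} I would form $\Sigma_0, \letin{\be{x}{[]}}{\comp{\Sigma_1}[x]}$, noting first that $\comp{\Sigma_1}$ is a genuine evaluation context $E$, so the {\sf let}-binding frame $\letin{\be{x}{[]}}{E[x]}$ is syntactically legal and decompiles to $\Psi$. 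The delicate point is its well-formedness, which reduces to $\FVs{\comp{\Sigma_1}[x]}\subseteq\LBVs{\Sigma_0}\cup\{x\}$. I would derive this from Lemma~\ref{lamme:wfstrcnf_is_prg}: since \evalin{\Sigma}{x} is well-formed, $\comp{\Sigma_0}[\letin{\be{x}{M}}{\comp{\Sigma_1}[x]}]$ is closed, and because $\comp{\Sigma_0}$ exposes at its hole precisely the variables $\LBVs{\Sigma_0}$, the plugged expression $\letin{\be{x}{M}}{\comp{\Sigma_1}[x]}$ can be free only in $\LBVs{\Sigma_0}$, which bounds $\FVs{\comp{\Sigma_1}[x]}\setminus\{x\}$ as needed.

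With the induction hypothesis applied one obtains an instrumented derivation for $M$ ending in some $\Sigma''$ with $\decomp{\Sigma''} = \Psi'$. As before, Lemma~\ref{lemma:Eval_heap_grow} forces $\Sigma'' = \Sigma_2, \letin{\be{x}{[]}}{\comp{\Sigma_1}[x]}, \Theta$: being a {\sf let}-binding rather than a {\sf let}-body frame, $\letin{\be{x}{[]}}{\comp{\Sigma_1}[x]}$ is matched identically by $\iota$, and the trailing $\Theta$ consists of fresh {\sf let}-body frames. This is exactly the premise shape of rule {\it Var}, whose conclusion delivers \Eval{\Sigma}{x}{\Sigma_2, \Theta, \letin{\be{x}{V}}{[]}, \Sigma_1}{V}; a final decompilation check gives $\decomp{\Sigma_2, \Theta, \letin{\be{x}{V}}{[]}, \Sigma_1} = \Psi', \maps{x}{V}, \Phi$, matching the output heap of the original {\it Variable} rule. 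I expect the two genuinely nontrivial points to be the free-variable estimate just above and the two appeals to the fine structure of \hpleq{}{} that extract the rigid heap shapes; the remaining work is routine verification of decompilations and well-formedness side conditions.
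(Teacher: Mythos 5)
Your proposal is correct and follows essentially the same route as the paper's own proof: induction on the big-step derivation with the universal quantification over $\Sigma$ in the hypothesis, the same intermediate heaps $\Sigma,\app{[]}{M_2}$ and $\Sigma_0,\letin{\be{x}{[]}}{\comp{\Sigma_1}[x]}$, and the same appeals to Lemma~\ref{lemma:Eval_wfheap} and Lemma~\ref{lemma:Eval_heap_grow}. Your explicit unpacking of the preorder \hpleq{}{} (order-preservation forcing identical matching of the non-let-binding frame, with the trailing frames outside $\ran{\iota}$ forming $\Theta$) and the free-variable estimate in the {\it Variable} case are precisely the details the paper compresses into its citations of Lemma~\ref{lemma:Eval_heap_grow}.
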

\begin{proof}
By induction on the derivation of \evalfresh{\Psi}{M}{\Phi}{V}{X} 
with case analysis on
the last rule used.\\
- The cases of {\it Lambda} and {\it Let} are obvious.\\
- The case of {\it Application}. Suppose $M = \app{M_1}{M_2}$ and we deduce
\evalfresh{\Psi}{\app{M_1}{M_2}}{\Psi'}{V}{X}
from \evalfresh{\Psi}{M_1}{\Phi}{\fun{x}{N}}{X}
and \evalfresh{\Phi, \maps{x'}{M_2}}{\subst{N}{x'}{x}}{\Psi'}{V}{X}.
Suppose $\decomp{\Sigma} = \Psi$ and 
\evalin{\Sigma}{\app{M_1}{M_2}} is well-formed. 
By ind. hyp. and Lemma~\ref{lemma:Eval_wfheap} and~\ref{lemma:Eval_heap_grow},
\Eval{\Sigma,\app{[]}{M_2}}{M_1}{\Sigma_1,\app{[]}{M_2},\Theta}{\fun{x}{N}}
and $\decomp{\Sigma_1,\app{[]}{M_2},\Theta} = \Phi$ 
and \evalin{\Sigma_1,\app{[]}{M_2},\Theta}{\fun{x}{N}} is well-formed. 
By ind. hyp., \\
\Eval{\Sigma_1,\Theta,\letin{\be{x'}{M_2}}{[]}}{\subst{N}{x'}{x}}{\Sigma_2}{V}
and $\decomp{\Sigma_2} = \Psi'$.\\
- The case of {\it Variable}. Suppose $M = x$ and we 
deduce 
\evalfresh{\Psi,\maps{x}{N},\Phi}{x}{\Psi',\maps{x}{V},\Phi}{V}{X}
from \evalfresh{\Psi}{N}{\Psi'}{V}{X \cup \{x\} \cup \dom{\Phi}}. 
Let $\Sigma = \Sigma_1,\letin{\be{x}{N}}{[]}, \Sigma_2$ with
$\decomp{\Sigma_1} = \Psi$ and $\decomp{\Sigma_2} = \Phi$
and \evalin{\Sigma}{x} well-formed. 
By ind. hyp. and Lemma~\ref{lemma:Eval_heap_grow}, 
\Eval{\Sigma_1,\letin{\be{x}{[]}}{\Sigma_2[x]}}{N}
{\Sigma_3,\letin{\be{x}{[]}}{\Sigma_2[x]},\Theta}{V} with
$\decomp{\Sigma_3,\letin{\be{x}{[]}}{\Sigma_2[x]},\Theta} = \Psi'$.
Thus we deduce \Eval{\Sigma}{x}{\Sigma_3,\Theta,\letin{\be{x}{V}}{[]},\Sigma_2}{V}.
\end{proof}

\medskip

We prove the reduction semantics correct with respect to the
natural semantics 
without going through the instrumented natural semantics. 
We first prove three useful lemmas. 
Lemma~\ref{lemma:red_cxt} proves that irrelevant evaluation contexts
are replaceable. 
It lets us prove Lemma~\ref{lemma:red_app} and~\ref{lemma:red_let}.
The former proves that reductions  
at the function position inside application can be recast outside the
application.  
The latter proves that local reductions inside a let-binding can be 
recast as top-level reductions. 
We use the notation $M \rarr^n N$ to denote that $M$ reduces into $N$
in $n$ steps.

\begin{lemma}\label{lemma:red_cxt}
For any $\Theta$, $E$ and $x$ such that $\Theta[E[x]]$ is a program
and $x$ is not in \LBVs{E},
if $\Theta[E[x]] \rarr^n \Theta'[E[V]]$,
then for any $E'$ such that $\Theta[E'[x]]$ is a program and $x$ is
not in \LBVs{E'}, $\Theta[E'[x]] \rarr^n \Theta'[E'[V]]$.
\end{lemma}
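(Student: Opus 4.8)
The plan is to induct on $n$ and analyze the first reduction step of $\Theta[E[x]]$, transporting that step to $\Theta[E'[x]]$ and then recursing. First I would record the structural invariant that drives everything. Since $\Theta[E[x]]$ is a program and $x \notin \LBVs{E}$, the variable $x$ must be bound in $\Theta$; as $\Theta$ is a canonical sequence of distinctly-named let-frames, there is a unique decomposition $\Theta = \Theta_a, \letin{\be{x}{M}}{[]}, \Theta_b$ with $x \notin \LBVs{\Theta_a} \cup \LBVs{\Theta_b}$. The body $\Theta_b[E[x]]$ is then an evaluation context with $x$ sitting exactly at its hole; crucially, by the hypotheses $x \notin \LBVs{E'}$ and the distinctness of the binders of $\Theta$, the body $\Theta_b[E'[x]]$ has the same property. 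So the demanded variable is always $x$, and the only way $E$ enters the identification of the redex is through the requirement that the body be of the form (context)$[x]$ with $x$ demanded.

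The base case $n = 0$ is vacuous, since the variable $x$ cannot equal the value $V$ in the hole. For the inductive step I would invoke the unique-decomposition property (Lemma~\ref{lemma:uniquecontext}) to locate the redex and split into two cases. If the redex is the {\it deref} of $x$ — which can fire only when the binding $M$ is already a value $V_0$ — the step is $\Theta[E[x]] \rar \Theta[E[V_0]]$, and the identical {\it deref} fires for $E'$, giving $\Theta[E'[x]] \rar \Theta[E'[V_0]]$. Otherwise the redex lies entirely inside the ``$\Theta$-region'': it is a reduction inside $M$, an {\it assoc} at $x$'s binding, or a {\it deref} of some other variable demanded while evaluating $M$. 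In each of these subcases the surrounding evaluation context depends on $E$ only through the body being of the form (context)$[x]$ with $x$ demanded, so replacing $E$ by $E'$ leaves the redex and the rule applied unchanged: one obtains $\Theta[E[x]] \rar \Theta''[E[x]]$ together with $\Theta[E'[x]] \rar \Theta''[E'[x]]$, the hole region being untouched. One then checks that $\Theta''[E[x]]$ is again a program with $x \notin \LBVs{E}$ and $x$ still bound in $\Theta''$, applies the induction hypothesis to the remaining $n-1$ steps, and prepends the transported step.

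The delicate point, which I expect to be the main obstacle, is the {\it deref} case: I must show this {\it deref} is necessarily the last step, i.e. that $\Theta[E[V_0]] \rarr^{n-1} \Theta'[E[V]]$ forces $n-1 = 0$ (so $\Theta' = \Theta$, $V = V_0$, and the transported {\it deref} already gives the result). The lemma cannot simply recurse here, because once a value occupies the hole any further standard reduction is driven by the frames of $E$ themselves — $\app{V_0}{M}$ is a $\beta_{\mathit{need}}$-redex, a let-frame around $V_0$ enables {\it lift} or {\it assoc}, and a binding-frame $\letin{\be{y}{[]}}{E''[y]}$ enables a {\it deref} of $y$ — so such steps genuinely depend on $E$ and would not transport (indeed the naive ``value in the hole'' version of the lemma is false). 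I would therefore argue, by inspecting each reduction rule against the possible innermost frame of $E$, that a single step from $\Theta''[E[W]]$ with $W$ a value is either impossible (the term is an answer) or produces a term no longer of the form $\Theta'''[E[W']]$ for any $\Theta'''$ and value $W'$. Combined with the determinism supplied by Lemma~\ref{lemma:uniquecontext}, and with the observation that the consumed application/let structure of $E$ is never reconstructed around the hole, this shows the given sequence cannot proceed past the {\it deref}, giving $n-1 = 0$. Making this ``no continuation past the {\it deref}'' argument clean — ideally via a small auxiliary invariant on the evaluation context that reaches the hole — is where the real work lies.
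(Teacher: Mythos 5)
Your transport machinery --- induction on $n$, the unique split $\Theta = \Theta_a,\letin{\be{x}{M}}{[]},\Theta_b$ at $x$'s binding, and case analysis on where the head redex sits --- is essentially the paper's proof carried out at single-step granularity (the paper chunks the trace instead, doing case analysis on $M$: an answer, a step independent of the context, or $M=E_1[x_1]$ demanding a further variable, in which case the induction hypothesis is applied with $x_1$ demanded and with the two larger contexts $\letin{\be{x}{E_1}}{\Theta_b[E[x]]}$ and $\letin{\be{x}{E_1}}{\Theta_b[E'[x]]}$). Your case (b) is fine. The genuine gap is in your \textit{deref} case, and it lies exactly where you predicted the ``real work'' would be: the auxiliary claim you propose --- that one standard step from $\Theta''[E[W]]$, $W$ a value, can never again produce a term of the form $\Theta'''[E[W']]$ --- is false. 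Take $E = \app{[]}{w}$ and $\Theta = \letin{\be{w}{\fun{u}{u}}}{[]},\ \letin{\be{x}{\fun{y}{\app{(\fun{z}{z})}{w}}}}{[]}$. Then $\Theta[E[x]]$ \textit{deref}s to $\Theta[E[V_0]]$ with $V_0$ an $\alpha$-renamed copy $\fun{y'}{\app{(\fun{z'}{z'})}{w}}$, and the very next $\beta_{\mathit{need}}$ step yields $\Theta[\letin{\be{y'}{w}}{\app{(\fun{z'}{z'})}{w}}] = (\Theta,\letin{\be{y'}{w}}{[]})[E[\fun{z'}{z'}]]$ --- the forbidden shape again, with the very same $E$.

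This is worse than a missing sublemma: it shows the statement you are trying to prove is false under its literal reading. In the example, $\Theta[E[x]] \rarr^2 \Theta'[E[V]]$ with $\Theta' = \Theta,\letin{\be{y'}{w}}{[]}$ and $V = \fun{z'}{z'}$; but for $E' = []$ the term $\Theta[x]$ reduces in one \textit{deref} step to an answer, so no two-step reduction from $\Theta[E'[x]]$ exists. Hence no argument can close your case (a), and the fix must be to the reading of the lemma, not to the proof. The paper's own proof, and every application of Lemma~\ref{lemma:red_cxt} (in Lemmas~\ref{lemma:red_app} and~\ref{lemma:red_let} and Proposition~\ref{prop:red_then_eval}), implicitly take the endpoint $\Theta'[E[V]]$ to be the \emph{first} configuration with a value in the hole of $E$ --- equivalently, the $n$-th step is the \textit{deref} of $x$ into that hole; the paper's case analysis ``on the possible reductions of $M$'' presumes the trace is the evaluation of $x$'s binding followed by \textit{assoc}s and that final \textit{deref}, and it disposes of your problem case with the unexamined sentence ``the case where $M$ is an answer is easy.'' Under that strengthened reading your base case and case (b) go through unchanged, case (a) is the last step by hypothesis rather than by your unprovable ``no continuation past the \textit{deref}'' claim, and your proof becomes, in substance, the paper's. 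Credit where due: you correctly sensed that the value-in-the-hole generalization fails, which is more than the paper itself acknowledges.
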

\begin{proof}
By induction on $n$. 
Let $\Theta = \Theta_1,\letin{\be{x}{M}}{[]},\Theta_2$ with
$x$ not in \LBVs{\Theta_2}.
We perform case analysis on the possible reductions of $M$.\\
- The case where $M$ is an answer is easy.\\
- The case where  $M$ (one-step) reduces independently of the context
is immediate by induction. \\
- Suppose $M = E_1[x_1]$
and $x_1$ is not in \LBVs{E_1} and we have:\\
\hspace*{3ex}$\begin{array}{l}
\Theta_1[\letin{\be{x}{E_1[x_1]}}{\Theta_2[E[x]]}]
\rarr^{n_1} \Theta'_1[\letin{\be{x}{E_1[V_1]}}{\Theta_2[E[x]]}]
\rarr^{n_2} \Theta'[E[V]]
\end{array}$\\
Then by ind. hyp., we have:\\
\hspace*{3ex}$\begin{array}{l}
\Theta_1[\letin{\be{x}{E_1[x_1]}}{\Theta_2[E'[x]]}]
\rarr^{n_1} \Theta'_1[\letin{\be{x}{E_1[V_1]}}{\Theta_2[E'[x]]}]
\rarr^{n_2} \Theta'[E'[V]]
\end{array}$\\
\end{proof}

We introduce a notion of {\it rooted} reductions to identify
a particular intermediate step in reductions: 
a reduction $M \rar M'$ is 
$\beta_{\mathit{need}}$-{\it rooted with argument} $N$
if $M$ = $\Theta[\app{(\fun{x}{N'})}{N}]$ and 
$M' = \Theta[\letin{\be{x}{N}}{N'}]$. A reduction sequence 
$M \rarr M'$ {\it preserves a $\beta_{\mathit{need}}$-root with
argument $N$} if none of (one-step) reductions in the
sequence is $\beta_{\mathit{need}}$-rooted with argument $N$.
Intuitively, if $\Theta[\app{M}{N}] \rarr M'$ 
preserves a $\beta_{\mathit{need}}$-root with argument $N$,
then all the reductions only occur at $M$ or in the environment $\Theta$.

\begin{lemma}\label{lemma:red_app}
For any $\Theta$, $M$ and $N$ such that $\Theta[\app{M}{N}]$ is a program,
if $\Theta[\app{M}{N}] \rarr^n \Theta'[\app{V}{N}]$ and 
the reduction sequence preserves a $\beta_{\mathit{need}}$-root with
argument $N$, then $\Theta[M] \rarr^{n'} \Theta'[V]$ with $n' \leq n$.
\end{lemma}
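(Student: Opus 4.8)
The plan is to induct on the length $n$ of the reduction sequence $\Theta[\app{M}{N}] \rarr^n \Theta'[\app{V}{N}]$, analysing its first step and either mirroring it in a reduction from $\Theta[M]$ or deleting it. The key bookkeeping device is to keep track of the designated occurrence of the application $\app{\cdot}{N}$: I argue that every intermediate term retains the shape $\Theta_i[\app{M_i}{N}]$ for some let-sequence $\Theta_i$ and function part $M_i$, with the endpoint being the instance $M_n=V$, $\Theta_n=\Theta'$. For the base case $n=0$ we have $\Theta[\app{M}{N}]=\Theta'[\app{V}{N}]$, whence (tracking the designated occurrence) $\Theta=\Theta'$ and $M=V$, so $\Theta[M]=\Theta'[V]$ with $n'=0$.

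For the inductive step, consider the first reduction $\Theta[\app{M}{N}]\rar P_1$. Since the whole sequence preserves a $\beta_{\mathit{need}}$-root with argument $N$, this step is not the $\beta_{\mathit{need}}$ contraction of the designated application. By the unique-decomposition property (Lemma~\ref{lemma:uniquecontext}) I split into two situations according to where the contracted redex sits. In the first, $M$ is a let-answer $M=\letin{\be{y}{Q}}{A}$ and the step is a $\mathit{lift}$, $\Theta[\app{(\letin{\be{y}{Q}}{A})}{N}]\rar\Theta[\letin{\be{y}{Q}}{\app{A}{N}}]$. Writing the right-hand side as $(\Theta,\letin{\be{y}{Q}}{[]})[\app{A}{N}]$ and applying the induction hypothesis to the remaining $(n-1)$-step sequence, I obtain $(\Theta,\letin{\be{y}{Q}}{[]})[A]\rarr^{n''}\Theta'[V]$ with $n''\leq n-1$. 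Since $(\Theta,\letin{\be{y}{Q}}{[]})[A]=\Theta[\letin{\be{y}{Q}}{A}]=\Theta[M]$, this $\mathit{lift}$ step is consumed for free, giving $n'=n''\leq n$.

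In the second situation the contracted redex lies inside a binding of $\Theta$ or inside the function part $M$: a demand originating from $M$ may descend into an earlier binding of $\Theta$, and a $\mathit{deref}$ or $\mathit{assoc}$ may update or rearrange those bindings, but in every such case the designated application is carried along untouched inside the body, so $P_1=\Theta_1[\app{M_1}{N}]$. The crucial observation is that the frame $\app{[]}{N}$ plays no role in the contraction: the demanded occurrence stays in evaluation position and the frame contributes no let-bound variables, so the same redex is contracted whether or not the frame is present. This is exactly the irrelevance certified by Lemma~\ref{lemma:red_cxt}, and it yields the mirrored single step $\Theta[M]\rar\Theta_1[M_1]$. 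Applying the induction hypothesis to $P_1\rarr^{n-1}\Theta'[\app{V}{N}]$ (still root-preserving, being a suffix of the original) gives $\Theta_1[M_1]\rarr^{n''}\Theta'[V]$ with $n''\leq n-1$, so $\Theta[M]\rar\Theta_1[M_1]\rarr^{n''}\Theta'[V]$ and $n'=1+n''\leq n$.

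The main obstacle is the bookkeeping in this second case: one must verify rigorously that every redex other than the top $\beta_{\mathit{need}}$ is genuinely insensitive to the frame $\app{[]}{N}$ — in particular that a demand reaching a variable bound in $\Theta$, the step-wise evaluation of its binding, and the eventual $\mathit{deref}$ all proceed identically with or without that frame. This is precisely what Lemma~\ref{lemma:red_cxt} supplies, securing the step-for-step correspondence; the only original steps left without a counterpart are the $\mathit{lift}$s at the designated application, and these are exactly what can make $n'$ strictly smaller than $n$.
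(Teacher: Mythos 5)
Your overall strategy is the paper's: induction on the length $n$ of the sequence, absorbing the administrative steps at the designated application (which is exactly where $n'$ can drop below $n$), mirroring context-independent steps, and invoking Lemma~\ref{lemma:red_cxt} to handle steps triggered by a demanded variable. The paper organizes the induction by case analysis on the behaviour of $M$ (answer; a step independent of the context; $M = E[x]$ with $x$ demanded), whereas you peel off the first step of the sequence; your base case and lift case are fine, and the two organizations would be interchangeable if the key step were justified.

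However, there is a genuine gap at precisely the crucial point. In your second case you claim a one-for-one mirroring of a \emph{single} step whose redex lies in the demand chain of a variable $x$ bound in $\Theta$ (so $M = E[x]$), and you assert this is ``exactly the irrelevance certified by Lemma~\ref{lemma:red_cxt}.'' It is not: that lemma transfers only \emph{complete} blocks of the shape $\Theta[E[x]] \rarr^{n} \Theta'[E[V]]$, i.e.\ sequences ending with the value dereferenced into the hole; it says nothing about an individual intermediate step taken while the binding of $x$ is still being evaluated, and such an intermediate configuration is not of the form required by the lemma's conclusion. Two repairs are available. (a) Prove a one-step analogue directly: since evaluation contexts never descend into the argument $N$, the terms $\Theta[\app{(E[x])}{N}]$ and $\Theta[E[x]]$ have identical demand chains, so by the unique-decomposition property (Lemma~\ref{lemma:uniquecontext}) they contract the same redex, yielding corresponding results whether the step is heap-internal, an {\it assoc}, or the final {\it deref}; this requires its own case analysis and cannot simply be cited. (b) Regroup as the paper does: because the endpoint has function part $V$ while the function part stays $E[x]$ as long as the chain of $x$ is active, there is a least $n_1$ with $\Theta[\app{(E[x])}{N}] \rarr^{n_1} \Theta_1[\app{(E[V'])}{N}]$, a complete block to which Lemma~\ref{lemma:red_cxt} applies literally (taking $E'$ to be $E$, stripping the frame $\app{[]}{N}$), after which the induction hypothesis disposes of the remaining steps. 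Option (b) is the paper's proof; as written, your appeal to Lemma~\ref{lemma:red_cxt} papers over the only nontrivial part of the argument.
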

\begin{proof}
By induction on $n$ with case analysis on the possible reductions of $M$.\\
- The case where $M$ is an answer is easy.\\
- The case where  $M$ reduces independently of the context
is immediate by induction. \\
- Suppose $M = E[x]$ and $x$ is not in \LBVs{E} and we have:\\ 
\hspace*{3ex}$\begin{array}{l}
\Theta[\app{(E[x])}{N}] 
\rarr^{n_1} \Theta_1[\app{(E[V])}{N}] \rarr^{n_2} \Theta'[\app{V}{N}]
\end{array}$\\
Then by Lemma~\ref{lemma:red_cxt} followed by ind. hyp., we have:\\
\hspace*{3ex}$\begin{array}{l}
\Theta[E[x]] 
\rarr^{n_1} \Theta_1[E[V]]
\rarr^{n'_2} \Theta'[V]
\end{array}$
where $n'_2 \leq n_2$.
\end{proof}

\begin{lemma}\label{lemma:red_let}
For any $\Theta, x, M$ and $E$ such that
$\Theta[\letin{\be{x}{M}}{E[x]}]$ is a program and $x$ is not in \LBVs{E},
if $\Theta[\letin{\be{x}{M}}{E[x]}] \rarr^n \Theta'[\letin{\be{x}{V}}{E[x]}]$
then $\Theta[M] \rarr^{n'} \Theta'[V]$ with $n' \leq n$.
\end{lemma}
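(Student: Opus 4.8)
My plan is to follow the pattern of Lemma~\ref{lemma:red_app}, arguing by induction on $n$ with a case analysis on how $M$ takes part in the first reduction step. The base case $n=0$ is immediate: then $\Theta = \Theta'$ and $M = V$, so $\Theta[M] = \Theta'[V]$ reduces in $n' = 0$ steps. For $n \geq 1$ the starting point is that $\letin{\be{x}{[]}}{E[x]}$ is an evaluation context, since $x$ is demanded in $E[x]$; hence $M$ occupies an evaluation position in the program and, localizing the unique redex (Lemma~\ref{lemma:uniquecontext}) to $M$, exactly one of three situations arises: $M$ is an answer, $M$ carries its own redex, or $M = E_1[x_1]$ where $x_1 \notin \LBVs{E_1}$ is free in $M$ and therefore bound in $\Theta$. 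I also use that Lemma~\ref{lemma:uniquecontext} makes $\rar$ deterministic.

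If $M$ is a value, the unique redex is the \emph{deref} step on $\letin{\be{x}{M}}{E[x]}$, which overwrites the demanded occurrence of $x$ by $M$; as reduction can never reinstate that occurrence, the target $\Theta'[\letin{\be{x}{V}}{E[x]}]$ is then unreachable, so this subcase forces $n=0$ and is already covered. If instead $M = \letin{\be{y}{N}}{A}$ is a non-value answer, the first step is \emph{assoc}, producing $(\Theta,\letin{\be{y}{N}}{[]})[\letin{\be{x}{A}}{E[x]}]$. I then apply the induction hypothesis to the remaining $n-1$ steps, which gives $(\Theta,\letin{\be{y}{N}}{[]})[A] \rarr^{n'} \Theta'[V]$ with $n' \leq n-1$; since $(\Theta,\letin{\be{y}{N}}{[]})[A]$ is literally $\Theta[M]$, this is the desired conclusion.

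When $M$ carries its own redex, say $M \rar M_1$ independently of the surrounding context, I have both $\Theta[\letin{\be{x}{M}}{E[x]}] \rar \Theta[\letin{\be{x}{M_1}}{E[x]}]$ and $\Theta[M] \rar \Theta[M_1]$, and the induction hypothesis on the shorter reduction closes the case. The substantial case is $M = E_1[x_1]$ with $x_1$ free. Here determinism forces the sequence to evaluate $x_1$ first, so I split it at the first configuration in which $x_1$ has been reduced to a value $V_1$ and substituted into the hole of $E_1$, writing $\Theta[\letin{\be{x}{E_1[x_1]}}{E[x]}] \rarr^{n_1} \Theta_1[\letin{\be{x}{E_1[V_1]}}{E[x]}] \rarr^{n_2} \Theta'[\letin{\be{x}{V}}{E[x]}]$. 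During the first $n_1$ steps the demanding context $\letin{\be{x}{E_1[\cdot]}}{E[x]}$ wrapped around $x_1$ plays no role, so Lemma~\ref{lemma:red_cxt} lets me replace it by the bare context $E_1[\cdot]$ and conclude $\Theta[M] = \Theta[E_1[x_1]] \rarr^{n_1} \Theta_1[E_1[V_1]]$. Writing $M_1 = E_1[V_1]$ and invoking the induction hypothesis on $\Theta_1[\letin{\be{x}{M_1}}{E[x]}] \rarr^{n_2} \Theta'[\letin{\be{x}{V}}{E[x]}]$ (legitimate because $n_1 \geq 1$) yields $\Theta_1[M_1] \rarr^{n'_2} \Theta'[V]$ with $n'_2 \leq n_2$; composing the two reductions gives $\Theta[M] \rarr^{n_1+n'_2} \Theta'[V]$ with $n_1 + n'_2 \leq n$.

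The step I expect to demand the most care is this last case: justifying the split of the reduction at the exact moment $x_1$ first becomes a value in place, and verifying the side conditions needed to apply Lemma~\ref{lemma:red_cxt} with environment $\Theta$, demanded variable $x_1$, and swappable context $\letin{\be{x}{E_1[\cdot]}}{E[x]}$---in particular that $x_1 \neq x$ and $x_1 \notin \LBVs{E_1} \cup \LBVs{E}$, all of which follow from the hygiene conventions on canonical representatives. Everywhere I also use tacitly that standard reduction preserves program-hood, so that every configuration fed to the induction hypothesis is again a program.
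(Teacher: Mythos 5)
Your proof is correct and follows essentially the same route as the paper's: induction on $n$ with case analysis on how $M$ participates in the reduction (answer, context-independent step, or $M = E_1[x_1]$ demanding a variable), using Lemma~\ref{lemma:red_cxt} to swap the demanding context for the bare one and then the induction hypothesis in the key case. The extra details you supply---the base case, the \emph{assoc} handling of non-value answers, the unreachability argument when $M$ is a value, and the side conditions for Lemma~\ref{lemma:red_cxt}---are exactly the content the paper compresses into ``easy'' and ``by Lemma~\ref{lemma:red_cxt} followed by ind.\ hyp.''
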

\begin{proof}
By induction on $n$ with case analysis on the possible reductions
of $M$.\\
- The case where $M$ is an answer is easy.\\
- The case where  $M$ reduces independently of the context
is immediate by induction. \\
- Suppose $M = E'[x']$ and $x'$ is not in \LBVs{E'} and we have:\\
\hspace*{3ex}$\begin{array}{l}
\Theta[\letin{\be{x}{E'[x']}}{E[x]}]
\rarr^{n_1} \Theta_1[\letin{\be{x}{E'[V']}}{E[x]}]
\rarr^{n_2} \Theta'[\letin{\be{x}{V}}{E[x]}]
\end{array}$\\
Then by Lemma~\ref{lemma:red_cxt} followed by ind. hyp., we have:\\
\hspace*{3ex}$\begin{array}{l}
\Theta[E'[x']]
\rarr^{n_1} \Theta_1[E'[V']]
\rarr^{n'_2} \Theta'[V]
\end{array}$
where $n'_2 \leq n_2$.
\end{proof}

Now we are ready to prove the reduction semantics correct
with respect to the natural semantics, using the above three lemmas 
to have induction go through. 

\begin{proposition}\label{prop:red_then_eval}
For any program $M$, if  $M \rarr A$, 
then for any $X$, there exist $\Theta$ and $V$ such that $\Theta[V]$ and $A$
belong to the same $\alpha$-equivalence class and 
\evalfresh{}{M}{\decomp{\Theta}}{V}{X}.
\end{proposition}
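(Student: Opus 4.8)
The plan is to prove a statement generalized to non-empty starting heaps, of which Proposition~\ref{prop:red_then_eval} is the special case $\Theta = \rt$: for every let-frame sequence $\Theta$ and every $X$ such that \evalin{\decomp{\Theta}}{M} is closed and $X$-good, if $\Theta[M] \rarr^n A$, then there exist $\Theta'$ and $V$ with $\Theta'[V]$ in the $\alpha$-equivalence class of $A$ and \evalfresh{\decomp{\Theta}}{M}{\decomp{\Theta'}}{V}{X}. The generalization is essential: the premises of the \emph{Let} and \emph{Variable} rules evaluate a subexpression under a non-empty heap, so a bare empty-heap statement cannot feed its own induction (in particular, for $M=\letin{\be{x}{N}}{M_0}$ the program form of the premise is $\alpha$-equivalent to $M$ itself, with the same reduction length). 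I would argue by lexicographic induction on the pair $(n,|M|)$, where $|M|$ is the size of $M$, case-splitting on the shape of $M$; since \evalin{\decomp{\Theta}}{M} is closed, $M$ is a value, a variable bound in $\Theta$, an application, or a let.

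For $M=\fun{x}{M_0}$ the term $\Theta[M]$ is already an answer, hence a normal form, so $n=0$, $A$ is $\alpha$-equivalent to $\Theta[M]$, and the \emph{Lambda} rule gives the result with $\Theta'=\Theta$. For $M=\letin{\be{x}{N}}{M_0}$ I would absorb the binding into the heap: $\Theta[M]$ is $\alpha$-equivalent to $(\Theta,\letin{\be{x'}{N}}{[]})[\subst{M_0}{x'}{x}]$ for fresh $x'$, so the \emph{same} reduction has length $n$ while the expression $\subst{M_0}{x'}{x}$ is strictly smaller; the induction hypothesis applied to it yields exactly the premise of the \emph{Let} rule.

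The application and variable cases are where the three preceding lemmas do their work, and where the length strictly decreases. For $M=\app{M_1}{M_2}$ I would locate the first step that contracts the top-level application, that is, the first $\beta_{\mathit{need}}$-root with argument $M_2$; by the unique-decomposition property (Lemma~\ref{lemma:uniquecontext}) all activity before it stays in the function position or merely lifts lets, so the prefix has the form $\Theta[\app{M_1}{M_2}]\rarr^j\Theta_1[\app{(\fun{x}{N})}{M_2}]$ and preserves that root. Lemma~\ref{lemma:red_app} then extracts $\Theta[M_1]\rarr^{j'}\Theta_1[\fun{x}{N}]$ with $j'\le j<n$; the induction hypothesis (same $X$) supplies the left premise \evalfresh{\decomp{\Theta}}{M_1}{\decomp{\Theta'_1}}{\fun{x_1}{N_1}}{X} with $\Theta'_1[\fun{x_1}{N_1}]$ $\alpha$-equivalent to $\Theta_1[\fun{x}{N}]$. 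Contracting the redex and realigning the $\alpha$-names, the remainder reads $\Theta'_1[\letin{\be{x'}{M_2}}{\subst{N_1}{x'}{x_1}}]\rarr^k A$ with $k<n$; the induction hypothesis applied to this program (starting frames $\Theta'_1,\letin{\be{x'}{M_2}}{[]}$) delivers the right premise, and the \emph{Application} rule assembles the two. For $M=x$, let $\Theta=\Theta_a,\letin{\be{x}{N}}{[]},\Theta_b$ be the unique binding (so $x\notin\LBVs{\Theta_b}$); the reduction evaluates the demanded binding and then performs a single \emph{deref}. Lemma~\ref{lemma:red_let}, with $E$ the let-context $\Theta_b$, converts the binding phase into a strictly shorter $\Theta_a[N]\rarr\Theta_a'[V]$, which I would feed to the induction hypothesis with the enlarged set $X\cup\{x\}\cup\dom{\decomp{\Theta_b}}$ --- this is precisely why the statement quantifies over all $X$. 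The subsequent \emph{deref} turns $\Theta_a'[\letin{\be{x}{V}}{\Theta_b[x]}]$ into the answer $\Theta_a'[\letin{\be{x}{V}}{\Theta_b[V]}]$, matching the output heap and value of the \emph{Variable} rule.

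The main obstacle I anticipate is bookkeeping rather than conceptual. I must verify that the reduction genuinely decomposes at the advertised points --- that until the top application is contracted all activity stays in the function position (so the prefix really preserves the $\beta_{\mathit{need}}$-root with argument $M_2$, licensing Lemma~\ref{lemma:red_app}), and that until the demanded binding becomes a value the body context $\Theta_b$ is untouched (licensing Lemma~\ref{lemma:red_let}). I must also match the $\alpha$-renaming performed by reduction (the binder $x$, the suspended argument $M_2$) against the independently chosen fresh names of the induction hypothesis ($x_1$, $x'$, and the binders of $\Theta'_1$ and $\Theta_a'$). Once these decompositions are in hand, the strict decrease of the induction measure follows directly from the $n'\le n$ bounds in Lemmas~\ref{lemma:red_app} and~\ref{lemma:red_let} together with the extra contraction step that is stripped off in each case.
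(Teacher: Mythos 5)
Your proof is correct and takes essentially the same route as the paper's: induction on the length of the reduction sequence, the same case analysis (abstraction, application, variable, with top-level {\sf let}'s peeled into the heap), the same use of Lemma~\ref{lemma:red_app} after locating the contraction of the top application and of Lemma~\ref{lemma:red_let} for the demanded binding, and the same enlargement of the tracking set to $X\cup\{x\}\cup\dom{\decomp{\Theta_2}}$ in the variable case, which is exactly how the paper exploits the quantification over all $X$. The only difference is presentational: your heap-generalized induction hypothesis together with the lexicographic measure $(n,|M|)$ makes explicit what the paper handles implicitly by stating all judgments over the empty heap for whole programs $\Theta'[M']$ with $M'$ not a {\sf let}, and by silently reassembling the {\it Let}-rule prefixes of the derivations in its ``thus we deduce'' steps.
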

\begin{proof}
Without loss of generality, we assume 
$\Theta[V]$ and $A$ are syntactically identical. 
We prove by induction on the length of the reductions of $M$. 
Let $M = \Theta'[M']$ with $M' \not= \letin{\be{x}{N'}}{N}$. We
perform case analysis on $M'$.\\
- The case of abstraction is obvious.\\
- The case of application. Suppose $M' = \app{M_1}{M_2}$ and we have:\\
\hspace*{3ex}$\begin{array}{l}
\Theta'[\app{M_1}{M_2}]
\rarr \Theta_1[\app{(\fun{x}{M_3})}{M_2}]
\rar \Theta_1[\letin{\be{x}{M_2}}{M_3}]
\rarr \Theta[V]
\end{array}$\\
By Lemma~\ref{lemma:red_app} and ind. hyp.,
\evalfresh{}{\Theta'[M_1]}{\decomp{\Theta_1}}{\fun{x}{M_3}}{X}.
By ind. hyp,
\evalfresh{}{\Theta_1[\letin{\be{x}{M_2}}{M_3}]}{\decomp{\Theta}}{V}{X}.
Thus we deduce 
\evalfresh{}{\Theta'[\app{M_1}{M_2}]}{\decomp{\Theta}}{V}{X}.\\
- The case of a variable. 
Suppose $M' = x$ and $\Theta' = \Theta_1,\letin{\be{x}{N}}{[]},\Theta_2$
and we have:\\
\hspace*{3ex}$\begin{array}{l}
\Theta_1[\letin{\be{x}{N}}{\Theta_2[x]}]
\rarr \Theta'_1[\letin{\be{x}{V}}{\Theta_2[x]}]
\rar \Theta'_1[\letin{\be{x}{V}}{\Theta_2[V]}]\\
\end{array}$\vspace{.5ex}\\
By Lemma~\ref{lemma:red_let} and ind. hyp., 
\evalfresh{}{\Theta_1[N]}{\decomp{\Theta'_1}}{V}
{X\cup \{x\} \cup \dom{\decomp{\Theta_2}}},
from which we deduce 
\evalfresh{}{\Theta'[x]}{\decomp{\Theta'_1,\letin{\be{x}{V}}{[]},\Theta_2}}{V}{X}.
\end{proof}

Collecting all propositions together, 
we prove the equivalence of the two semantics. 

\begin{theorem}
For any program $M$, the following two conditions hold:
\begin{enumerate}
\item if $M \rarr A$, then there exist $\Theta$ and $V$ such that 
$\Theta[V]$ and $A$ belong to the same $\alpha$-equivalence class and
\evalfresh{}{M}{\decomp{\Theta}}{V}{\emp}
\item if \evalfresh{}{M}{\Psi}{V}{\emp}, 
then $M \rarr \Theta[V]$ where $\decomp{\Theta} =  \Psi$.
\end{enumerate}
\end{theorem}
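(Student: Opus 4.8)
The plan is to read off both directions of the theorem as corollaries of the propositions already established, routing the two clauses through different machinery. For clause~1 I would simply instantiate Proposition~\ref{prop:red_then_eval} at $X = \emp$: given a program $M$ with $M \rarr A$, that proposition already delivers a $\Theta$ and a $V$ with $\Theta[V]$ in the same $\alpha$-equivalence class as $A$ and \evalfresh{}{M}{\decomp{\Theta}}{V}{\emp}, which is verbatim clause~1. So nothing remains to be done here beyond specialization.

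For clause~2 I would compose Proposition~\ref{prop:eval_then_Eval} with Proposition~\ref{prop:Eval_then_red}, using the empty structured heap $\rt$ as the bridge between the global-heap and the structured-heap worlds. Since $M$ is a program it is closed, so the configuration \evalin{\rt}{M} is closed and $\emp$-good, and the structured configuration \evalin{\rt}{M} is well-formed (the empty $\Sigma$ is well-formed and $\FVs{M} \subseteq \LBVs{\rt} = \emptyset$); also $\decomp{\rt} = \rt$. Feeding the hypothesis \evalfresh{}{M}{\Psi}{V}{\emp} together with $\Sigma = \rt$ into Proposition~\ref{prop:eval_then_Eval} yields a derivation \Eval{\rt}{M}{\Sigma'}{V} with $\decomp{\Sigma'} = \Psi$, and Proposition~\ref{prop:Eval_then_red} then gives $\rt[M] \rarr \Sigma'[V]$, that is $M \rarr \Sigma'[V]$.

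The one genuinely new observation is that this result heap $\Sigma'$ must be a pure sequence of {\sf let}-frames, i.e.\ of the form $\Theta$, so that $\Sigma'[V]$ is an answer of the shape demanded by the statement and $\decomp{\Sigma'}$ collects exactly $\Psi$. I would extract this from Lemma~\ref{lemma:Eval_heap_grow}, which gives \hpleq{\rt}{\Sigma'}. Because the source heap is empty, the injection $\iota$ witnessing \hpleq{}{} has empty domain and hence empty range, so by condition~3 in the definition of \hpleq{}{} \emph{every} frame of $\Sigma'$ is a {\sf let}-frame. Thus $\Sigma' = \Theta$ is a $\Theta$, $\Sigma'[V] = \Theta[V]$, and $\decomp{\Theta} = \decomp{\Sigma'} = \Psi$, completing clause~2.

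I do not anticipate a serious obstacle: all the inductive weight lives in the propositions, and the theorem is essentially their assembly. The only step requiring care is this last one — confirming that starting from the empty structured heap forces the terminal heap into {\sf let}-only form — and even there the argument is pure bookkeeping about the preorder \hpleq{}{} rather than a fresh induction. It is worth noting in passing the deliberate asymmetry between the two clauses (the $\alpha$-equivalence slack in clause~1 versus the on-the-nose $\decomp{\Theta} = \Psi$ in clause~2), but this is already accounted for by the respective propositions and needs no extra work.
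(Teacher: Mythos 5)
Your proposal is correct and matches the paper's own proof: clause~1 is exactly Proposition~\ref{prop:red_then_eval} specialized to $X = \emp$, and clause~2 composes Proposition~\ref{prop:eval_then_Eval} with Proposition~\ref{prop:Eval_then_red}, invoking Lemma~\ref{lemma:Eval_heap_grow} to force the terminal structured heap into {\sf let}-only form $\Theta$. Your explicit bookkeeping about the empty injection in \hpleq{}{} just spells out the step the paper leaves implicit in its citation of that lemma.
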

\begin{proof}
1: By Proposition~\ref{prop:red_then_eval}.
2: 
By Proposition~\ref{prop:eval_then_Eval}
and Lemma~\ref{lemma:Eval_heap_grow},  
\Eval{}{M}{\Theta}{V} with $\decomp{\Theta} = \Psi$.
By Proposition~\ref{prop:Eval_then_red}, $M \rarr \Theta[V]$.
\end{proof}

\section{Call-by-need letrec calculus \cyclic}\label{sec:cyclic}
In this section we extend the equivalence result to the cyclic
(recursive) calculus.  

\subsection{Syntax and semantics}
\begin{figure}
\hspace*{\fill}{\small$\begin{array}{llcl}
{\it Expressions}\hspace*{5ex}& 
M,N &::=& x \mid \fun{x}{M} \mid \app{M}{N} \mid \letrec{D}{M} \mid \error\\
{\it Definitions}&
D &::=& \rt \mid D, \be{x}{M}\\
{\it Values}& V &::=& \fun{x}{M} \mid \error\\
{\it Answers}& 
A &::=& V \mid \letrec{D}{A}\\
{\it Contexts}&
E &::=& [] \mid \app{E}{M} \mid 
\letrec{D}{E}\\ 
& &\mid& \letrec{\be{x}{E}, D}{E'[x]}\\
& &\mid& \letrec{\be{x'}{E}, \dps{x}{x'},D}{E'[x]}\\
{\it Dependencies}&
\dps{x}{x'} &::=& \be{x}{E[x']}\\ 
           &&\mid& \dps{x}{x''},\be{x''}{E[x']}\\
\end{array}$}\hspace*{\fill}\vspace{1ex}
\caption{Syntax of \cyclic}\label{fig:syntax_rec}
\end{figure}

The syntax of the call-by-need letrec calculus \cyclic\ is defined in
figure~\ref{fig:syntax_rec}. The reduction and natural semantics are defined
in figures \ref{fig:red_rec} and \ref{fig:eval_rec} respectively. 
No ordering among bindings in $D$ is assumed. 
Metavariables $\Psi$ and $\Phi$ range over finite mappings from
variables to expressions. 
Here we do not assume any ordering among bindings in heaps. 
In particular, a heap may contain cyclic structure such as
\evalin{\maps{x_1}{\fun{y}{\app{x_2}{y}}},\maps{x_2}{\fun{y}{\app{x_1}{y}}}}{}
and \evalin{\maps{x}{y},\maps{y}{x}}.
In the natural semantics, 
the notation \updateseq{\Psi}{x_i}{M_i} {i\in \{1,\ldots,n\}}
denotes mapping extension. Precisely,
\[
\updateseq{\Psi}{x_i}{M_i}{i\in \{1,\ldots,n\}}(x) = 
\left\{ \begin{array}{ll}
M_i & ~~{\rm when} ~x = x_i\ \mbox{for some} ~i ~\mathrm{in} ~1,\ldots,n\\
\Psi(x) & ~~{\rm otherwise}\\
\end{array}\right. 
\]
We write \update{\Psi}{x}{M} to denote a single extension of $\Psi$
with $M$ at $x$. In rule {\it Letrec} of figure~\ref{fig:eval_rec}, 
$M'_i$'s and $N'$ denote expressions
obtained from $M_i$'s and $N$ by substituting $x'_i$'s for $x_i$'s,
respectively.  
We may abbreviate \evalin{\Psi}{M} where $\Psi$ is an empty mapping,
i.e. the domain of $\Psi$ is empty, to \evalin{}{M}. We adapt the
definition of free variables in figure \ref{fig:freevar} for \cyclic\
by replacing the rule for {\sf let} with the
following rule:\vspace{1ex}\\
\hspace*{\fill}\scalebox{.95}{
$\begin{array}{l}
\FVs{\letrec{\be{x_1}{M_1}, \ldots, \be{x_n}{M_n}}{N}} \\
= (\FVs{M_1} \cup \ldots \cup \FVs{M_n} \cup \FVs{N}) \backslash 
\{x_1, \ldots, x_n \}
\end{array}$}\hspace*{\fill}\vspace{1ex}

\begin{figure}
\hspace*{\fill}{\small\begin{tabular}{ll}
$\beta_{\mathit{need}}:$
&\app{(\fun{x}{M})}{N} \needmapsto\ \letrec{\be{x}{N}}{M}\\
{\it lift}:
&\app{(\letrec{D}{A})}{N} \needmapsto\ \letrec{D}{\app{A}{N}}\\
{\it deref}:
&\letrec{\be{x}{V}, D}{E[x]} \needmapsto\
\letrec{\be{x}{V},D }{E[V]} \\
$\mathit{deref}_{\mathit{env}}:$
\hspace*{1ex}&\letrec{\dps{x}{x'}, \be{x'}{V},D}{E[x]} \needmapsto\
\letrec{\dps{x}{V}, \be{x'}{V}, D}{E[x]}\\
{\it assoc}:
&\letrec{\be{x}{(\letrec{D}{A})}, D'}{E[x]} \needmapsto\
\letrec{D, \be{x}{A}, D'}{E[x]}\\
$\mathit{assoc}_{\mathit{env}}:$
&\letrec{\be{x'}{(\letrec{D}{A})}, \dps{x}{x'}, D'}{E[x]} \needmapsto\\
&\letrec{D, \be{x'}{A}, \dps{x}{x'}, D'}{E[x]}\\
{\it error}:
&\letrec{\dps{x}{x}, D}{E[x]} \needmapsto\ \letrec{\dps{x}{\error},D}{E[x]}\\
$\mathit{error}_{\mathit{env}}:$
&\letrec{\dps{x'}{x'},\ensuremath{D'[x,x']} ,D}{E[x]} \needmapsto\ 
\letrec{\dps{x'}{\error},\ensuremath{D'[x,x']}, D}{E[x]}\\
$\mathit{error}_{\beta}:$
&\app{\error}{M} \needmapsto\ \error\\
\end{tabular}}\hspace*{\fill}\vspace{1ex}
\caption{Reduction semantics for \cyclic}\label{fig:red_rec}
\end{figure}

\begin{figure}[t]
\hspace*{\fill}{\small\begin{tabular}{c}
$\mathit{Value}$\\
\eval{\Psi}{V}{\Psi}{V}\vspace{1ex}\\
$\mathit{Application}$\\
\infer{
  \eval{\Psi}{\app{M_1}{M_2}}{\Psi'}{V}
}{
  \eval{\Psi}{M_1}{\Phi}{\fun{x}{N}}
  &\eval{\update{\Phi}{x'}{M_2}}{\subst{N}{x'}{x}}{\Psi'}{V}
  &x'~\fresh
}\vspace{1ex}\\
$\mathit{Variable}$\\
\infer{
  \eval{\Psi}{x}{\update{\Phi}{x}{V}}{V}
}{
  \eval{\update{\Psi}{x}{\error}}{\Psi(x)}{\Phi}{V}
}\vspace{1ex}\\
$\mathit{Letrec}$\\
\infer{
  \eval{\Psi}{\letrec{\be{x_1}{M_1},\ldots, \be{x_n}{M_n}}{N}}{\Phi}{V}
}{
  \eval{\updateseq{\Psi}{x'_i}{M'_i}{i\in \{1,...,n\}}}
       {N'}{\Phi}{V}
  &x'_1, \ldots, x'_n~\fresh
}\vspace{1ex}\\
$\mathit{Error}_{\beta}$\\
\infer{
  \eval{\Psi}{\app{M_1}{M_2}}{\Phi}{\error}
}{
  \eval{\Psi}{M_1}{\Phi}{\error}
}
\end{tabular}}\hspace*{\fill}\vspace{1ex}
\caption{Natural semantics for \cyclic}\label{fig:eval_rec}
\end{figure}

The reduction semantics is mostly identical to 
the previous presentation by Ariola and Felleisen \cite{thecbnlambda},
except that we elaborately deal with ``undefinedness'', which arises
due to direct cycles  
such as {\sf let rec} \be{x}{x} {\sf in} $M$. 
Undefinedness represents provable divergences. 
In our reduction semantics undefinedness, or 
black holes \error, are produced and propagated explicitly, in a spirit
similar to Wright and Felleisen's treatment of exceptions in a
reduction calculus \cite{wf-ic-94}.
Rules {\it error} and $\mathit{error}_{\mathit{env}}$ produce black holes.
Applying a black hole to an expression results in a black hole 
($\mathit{error}_{\beta}$). A value may be an abstraction or a black hole. 
Thus rules {\it lift}, {\it deref}, $\mathit{deref}_{\mathit{env}}$,
{\it assoc} 
and $\mathit{assoc}_{\mathit{env}}$ can be exercised to propagate
black holes.  
Explicit handling of black holes facilitates inductive reasoning. 
Again the reduction semantics works with $\alpha$-equivalence classes of
expressions. The following lemma states the unique-decomposition 
property for \cyclic\ and is proved by induction on $M$.

\begin{lemma}\label{lemma:uniquecontext_rec}
For any program $M$, $M$ is either an answer or there exist a unique
context $E$ and redex $N$ such that $M = E[N]$.
\end{lemma}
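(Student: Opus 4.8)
The plan is to prove the statement by structural induction on $M$, but, since the bindings and the body occurring inside a \letrec{}{} are in general open — they may mention the variables bound by the surrounding definitions — I would first generalize the statement so that the induction hypothesis also covers open expressions. Concretely, I would prove that for \emph{every} expression $M$ exactly one of three situations occurs: (a) $M$ is an answer; (b) there are a unique context $E$ and redex $N$ with $M = E[N]$; or (c) there are a unique context $E$ and a variable $x\notin\LBVs{E}$ with $M = E[x]$, i.e.\ evaluation is blocked demanding the free variable $x$. Case (c) is vacuous when $M$ is closed, so for a program the generalized statement collapses to the claimed dichotomy; its only purpose is to let the induction hypothesis describe the decomposition of the open body and open bindings of a \letrec.

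The easy cases are $M=x$ (only (c), with $E=[]$), $M=\fun{x}{N}$ and $M=\error$ (answers), and $M=\app{M_1}{M_2}$, where both subterms are as open or closed as $M$: I apply the induction hypothesis to $M_1$ and read off the decomposition — if $M_1$ is a value the whole term is a $\beta_{\mathit{need}}$ or $\mathit{error}_\beta$ redex, if $M_1$ is a non-value answer it is a \emph{lift} redex, if $M_1=E_1[N]$ I take context $\app{E_1}{M_2}$, and if $M_1=E_1[x]$ I stay in case (c) with context $\app{E_1}{M_2}$. The interesting case is $M=\letrec{D}{N}$, for which I first decompose the body $N$. If $N$ is an answer then $M$ is an answer; if $N=E[N']$ then $M=(\letrec{D}{E})[N']$; if $N=E[x]$ with $x\notin\LBVs{E}$ and $x\notin\dom{D}$, then $x$ is free in $M$ and we remain in case (c) with context $\letrec{D}{E}$. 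The remaining possibility, $N=E[x]$ with $x\in\dom{D}$, is where the real work lies.

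When the demanded $x$ is bound in $D$, I would follow the \emph{dependency chain} by repeatedly applying the induction hypothesis to bindings: inspect $\be{x}{M_x}$ and decompose $M_x$. If $M_x$ is a value we obtain a \emph{deref} redex, if it is a non-value answer an \emph{assoc} redex, if $M_x=E'[N']$ we drill in with context $\letrec{\be{x}{E'}, D'}{E[x]}$, if $M_x=E'[y]$ with $y$ free and $y\notin\dom{D}$ we stay in case (c), if $y=x$ we get an \emph{error} redex $\letrec{\dps{x}{x},D'}{E[x]}$, and if $y\in\dom{D}$ with $y\neq x$ we continue the chase from $y$, recording the accumulated dependency in the $\dps{x}{y}$ form and reaching the $\mathit{deref}_{\mathit{env}}$, $\mathit{assoc}_{\mathit{env}}$, drill-in, or $\mathit{error}_{\mathit{env}}$ redexes in the corresponding sub-cases. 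Each binding decomposition is an appeal to the induction hypothesis on a proper subterm of $M$, so the outer induction is genuinely structural; the chase itself is a finite loop, since $\dom{D}$ is finite and the chased variables stay pairwise distinct until the chain closes, so it must terminate by leaving $\dom{D}$, hitting a value, non-value answer, or reducible binding, or revisiting an already-chased variable.

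Two points I expect to be the crux. First, the chasing argument is where the \textbf{black-hole machinery is essential}: a cyclic demand — the chain revisiting one of its own variables — would otherwise be a stuck state, and it is precisely \emph{error} and $\mathit{error}_{\mathit{env}}$, supplying the redexes $\letrec{\dps{x}{x},D}{E[x]}$ and its $\mathit{env}$ variant, that keep the answer-or-redex dichotomy exhaustive for \cyclic. Getting the bookkeeping of the $\dps{x}{x'}$ contexts to match the chain exactly, and checking that every terminating case lands on exactly one reduction rule, is the main obstacle. Second, for \textbf{uniqueness} I would argue that at each node the head constructor together with the answer/non-answer classification of the relevant subterm — itself unique by the induction hypothesis — selects exactly one alternative of the context grammar; since canonical representatives use pairwise distinct bound names, the demanded variable, and hence the binding drilled into, is uniquely determined, so the absence of a fixed order on $D$ introduces no ambiguity.
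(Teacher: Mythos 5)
Your proposal is correct and matches the paper's approach: the paper proves Lemma~\ref{lemma:uniquecontext_rec} simply ``by induction on $M$,'' and your argument is exactly the standard elaboration of that induction, strengthened to open terms via the third alternative $M = E[x]$ with $x \notin \LBVs{E}$ so that the hypothesis applies to letrec bodies and bindings. The dependency-chasing analysis through $\dom{D}$, with \emph{error} and $\mathit{error}_{\mathit{env}}$ closing the cyclic cases and pairwise-distinct binders securing uniqueness, is precisely the content such an induction must supply for \cyclic.
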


The natural semantics is very much inspired by Sestoft's
\cite{lazymachine}, hence by Launchbury's \cite{bigstepforlazy}. We
revise Sestoft's semantics in the following two 
points to draw a direct connection with the reduction
semantics. Firstly, in accordance with the reduction semantics, 
our natural semantics may return black holes. In {\it Variable} rule,
$x$ is bound to \error\ while the bound expression to $x$ is
evaluated. For instance, 
\eval{}{\letrec{\be{x}{x}}{x}}{\maps{x'}{\error}}{\error} is deduced
in our formulation.  Sestoft's formulation removes the binding of $x$
from the heap during its evaluation, 
thus evaluation involving direct cycles ``gets stuck'', 
i.e., no derivation is possible when direct cycles are encountered.
Since we do not remove bindings from heaps, freshness conditions are
locally checkable without extra variable tracking. 
Secondly, we do not precompile expressions into ``normalized''
ones. Our semantics works with full lambda expressions
with letrec, where function arguments may be any expressions, not only
variables.  

The notation \dom{\Psi} denotes the domain of $\Psi$.
A configuration \evalin{\Psi}{M} is closed 
if $\FVs{M} \subseteq \dom{\Psi}$, and
for any $x$ in \dom{\Psi}, $\FVs{\Psi(x)} \subseteq \dom{\Psi}$.

\paragraph{Example}
Figures \ref{fig:red_cyclic_ex} and \ref{fig:eval_cyclic_ex}
present the reduction sequence and the derivation 
for the expression \letrec{\be{x}{\app{f}{x}},\be{f}{\fun{y}{y}}}{x}
respectively. 
We deliberately chose a black hole producing expression to demonstrate
the difference of our formulation from 
Ariola and Felleisen's and Sestoft's.

\begin{figure}
\hspace*{\fill}$\begin{array}{l}
\letrec{\be{x}{\app{f}{x}},\be{f}{\fun{y}{y}}}{x}\\
\rar \letrec{\be{x}{\app{(\fun{y}{y})}{x}},\be{f}{\fun{y}{y}}}{x}\\
\rar \letrec{\be{x}{(\letrec{\be{y}{x}}{y})},\be{f}{\fun{y}{y}}}{x}\\
\rar \letrec{\be{x}{(\letrec{\be{y}{\error}}{y})},\be{f}{\fun{y}{y}}}{x}\\
\rar \letrec{\be{x}{(\letrec{\be{y}{\error}}{\error})},\be{f}{\fun{y}{y}}}{x}\\
\rar \letrec{\be{y}{\error},\be{x}{\error}, \be{f}{\fun{y}{y}}}{x}\\
\rar \letrec{\be{y}{\error},\be{x}{\error}, \be{f}{\fun{y}{y}}}{\error}\\
\end{array}$\hspace*{\fill}\vspace{1ex}
\caption{The reduction sequence for \letrec{\be{x}{\app{f}{x}},\be{f}{\fun{y}{y}}}{x}}\label{fig:red_cyclic_ex}
\end{figure}

\begin{figure}
\hspace*{\fill}
\scalebox{.8}{\infer{
  \eval{}{\letrec{\be{x}{\app{f}{x}},\be{f}{\fun{y}{y}}}{x}}
       {\maps{x'}{\error}, \maps{f'}{\fun{y}{y}},\maps{y'}{\error}}{\error}
}{
  \infer{
    \eval{\maps{x'}{\app{f'}{x'}}, \maps{f'}{\fun{y}{y}}}{x'}
         {\maps{x'}{\error}, \maps{f'}{\fun{y}{y}},\maps{y'}{\error}}{\error}
  }{
    \infer{
      \eval{\maps{x'}{\error}, \maps{f'}{\fun{y}{y}}}{\app{f'}{x'}}
           {\maps{x'}{\error}, \maps{f'}{\fun{y}{y}},\maps{y'}{\error}}{\error}
    }{
      \infer{
        \eval{\maps{x'}{\error}, \maps{f'}{\fun{y}{y}}}{f'}
             {\maps{x'}{\error}, \maps{f'}{\fun{y}{y}}}{\fun{y}{y}}
      }{
        \eval{\maps{x'}{\error}, \maps{f'}{\error}}{\fun{y}{y}}
             {\maps{x'}{\error}, \maps{f'}{\error}}{\fun{y}{y}}
      }
      &\infer{
        \eval{\maps{x'}{\error},\maps{f'}{\fun{y}{y}},\maps{y'}{x'}}{y'}
             {\maps{x'}{\error},\maps{f'}{\fun{y}{y}},\maps{y'}{\error}}{\error} 
      }{
        \infer{
          \eval{\maps{x'}{\error},\maps{f'}{\fun{y}{y}},\maps{y'}{\error}}{x'}
               {\maps{x'}{\error},\maps{f'}{\fun{y}{y}},\maps{y'}{\error}}{\error}
        }{
          \eval{\maps{x'}{\error},\maps{f'}{\fun{y}{y}},\maps{y'}{\error}}{\error}
               {\maps{x'}{\error},\maps{f'}{\fun{y}{y}},\maps{y'}{\error}}{\error}
        }
      }
    }
  }
}}\hspace*{\fill}\vspace{1ex}
\caption{The derivation for \letrec{\be{x}{\app{f}{x}},\be{f}{\fun{y}{y}}}{x}}\label{fig:eval_cyclic_ex}
\end{figure}

\subsection{Equivalence of the two semantics}

\begin{figure}[t]
\hspace*{\fill}{\small\begin{tabular}{c}
$\begin{array}{llcl}
{\it Frames}&
F &::=& \app{[]}{M} \mid \letrec{D}{[]} \mid \letrec{D_x,D}{E[x]}\\ 
{\it Structured ~heaps}~~&
\Sigma &::=& \rt \mid \Sigma, F \\
&D_x &::=& \be{x}{[]} \mid \dps{x}{x'},\be{x'}{[]}\\
\mathit{Letrec's}& \Theta &::=& \rt \mid \Theta,\letrec{D}{[]}
\end{array}$\vspace{2ex}\\
{\it Val}\\
\Eval{\Sigma}{V}{\Sigma}{V}
\vspace{1ex}\\
{\it App}\\
\infer{
  \Eval{\Sigma}{\app{M_1}{M_2}}{\Sigma_2}{V}
}{
  \deduce{
    \Eval{\Sigma_1,\Theta,\letrec{\be{x'}{M_2}}{[]}}{\subst{N}{x'}{x}}{\Sigma_2}{V}
  ~~x' ~\fresh
  }{
    \Eval{\Sigma,\app{[]}{M_2}}{M_1}{\Sigma_1,\app{[]}{M_2},\Theta}{\fun{x}{N}}
  }
}
\vspace{1ex}\\
{\it Letrecin}\\
\infer{
  \Eval{\Sigma}{\letrec{\be{x_1}{M_1},\ldots,\be{x_n}{M_n}}{N}}{\Sigma'}{V}
}{
  \Eval{\Sigma,\letrec{\be{x'_1}{M'_1},\ldots, \be{x'_n}{M'_n}}{[]}}{N'}{\Sigma'}{V}
  &x'_1, \ldots, x'_n~\fresh
}
\vspace{1ex}\\
{\it Var}\\
\infer{
  \Eval{\Sigma,\letrec{\be{x}{M},D}{[]},\Sigma_1}{x}
       {\Sigma',\letrec{\flatten{\Theta},\be{x}{V},D'}{[]},\Sigma_1}{V}
}{
  \Eval{\Sigma,\letrec{\be{x}{[]},D}{\Sigma_1[x]}}{M}
       {\Sigma',\letrec{\be{x}{[]},D'}{\Sigma_1[x]},\Theta}{V}
}
\vspace{1ex}\\
$\mathit{Var}_{\mathit{env}}$\\
\infer{
  \Eval{\Sigma,\letrec{\be{x}{M},D_{x'},D}{E[x']},\Sigma_1}{x}
{\Sigma',\letrec{\flatten{\Theta},\be{x}{V},D_{x'},D'}{E[x']},\Sigma_1}{V}
}{
  &\Eval{\Sigma,\letrec{\be{x}{[]},D_{x'}[\Sigma_1[x]],D}{E[x']}}{M}
        {\Sigma',\letrec{\be{x}{[]},D_{x'}[\Sigma_1[x]],D'}{E[x']},\Theta}{V}
}
\vspace{1ex}\\
$\mathit{Err}_{\mathit{var}}$\\
\infer{
  \Eval{\Sigma,\letrec{D,D_{x'}}{E[x']},\Sigma'}{x}
       {\Sigma,\letrec{D,D_{x'}}{E[x']},\Sigma'}{\error}
}{
  x \in D_{x'}
}
\vspace{1ex}\\
$\mathit{Err}_{\beta}$\\
\infer{
  \Eval{\Sigma}{\app{M_1}{M_2}}{\Sigma',\Theta}{\error}
}{
  \Eval{\Sigma,\app{[]}{M_2}}{M_1}{\Sigma',\app{[]}{M_2},\Theta}{\error}
}
\end{tabular}}\hspace*{\fill}\vspace{1ex}
\caption{Instrumented natural semantics for \cyclic}\label{fig:str_eval_rec}
\end{figure}

We prove equivalence of the two semantics for \cyclic\ in similar steps
to those for \acyclic, and use an instrumented natural semantics
defined in  
figure~\ref{fig:str_eval_rec}. 
The notation \flatten{\Theta} denotes the flattening of $\Theta$. Or:\\
\hspace*{3ex}$\begin{array}{rclrcl}
\flatten{\rt} &=& \rt
\hspace{4ex}&\flatten{\Theta,\letrec{D}{[]}} &=& \flatten{\Theta},D\\
\end{array}$\hspace*{\fill}\vspace{.5ex}\\
The notation $x \in D_{x'}$ denotes that $x$ is letrec-bound in 
$D_{x'}$, i.e., either \be{x}{[]} or \be{x}{M} is in $D_{x'}$.
In rule {\it Letrecin}, 
$M'_i$'s and $N'$ denote expressions
obtained from $M_i$'s and $N$ by substituting $x'_i$'s for $x_i$'s,
respectively.  

Here a frame may be \letrec{D}{[]} or \letrec{D_x,D}{E[x]}, 
instead of \letin{\be{x}{M}}{[]} or \letin{\be{x}{[]}}{E[x]}. 
We need to adjust the definitions of well-formedness for structured heaps
and structured configurations. 
The notation \LBVs{\Sigma} denotes the set of variables letrec-bound in frames 
of $\Sigma$. Or:
\[
\begin{array}{rcl}
\LBVs{\rt} &=& \emptyset\\
\LBVs{\Sigma,\app{[]}{M}} &=& \LBVs{\Sigma}\\
\LBVs{\Sigma,\letrec{D}{[]}} &=& 
\LBVs{\Sigma} \cup \LBVs{D}\\
\LBVs{\Sigma,\letrec{D, D_x}{M}}
&=& \LBVs{\Sigma} \cup \LBVs{D,D_x}\\
\LBVs{D, \be{x}{M}} &=& \LBVs{D} \cup \{x\}\\
\LBVs{D, \be{x}{[]}} &=& \LBVs{D} \cup \{x\}\\
\end{array}
\]
The notations \Body{F} and \Body{\Sigma} respectively denote 
the sets of expressions that $F$ and $\Sigma$ contain. Or:
\[
\begin{array}{rcl}
\Body{\app{[]}{M}} &=& \{M\}\\
\Body{\letrec{D}{[]}} &=& \Body{D}\\
\Body{\letrec{D,D_x}{M}} &=& \{M\} \cup \Body{D,D_x}\\
\Body{\rt} &=& \emptyset\\
\Body{D,\be{x}{M}} &=& \Body{D} \cup \{M\}\\
\Body{D,\be{x}{[]}} &=& \Body{D}\\
\Body{\Sigma,F} &=& \Body{\Sigma} \cup \Body{F}
\end{array}
\]
A structured heap $\Sigma$ is well-formed if it is an empty sequence,
or else $\Sigma = \Sigma',F$, and $\Sigma'$ is well-formed and 
one of the following conditions hold:
\begin{enumerate}
\item $F = \app{[]}{M}$  and $\FVs{M} \subseteq \LBVs{\Sigma}$
\item $F = \letrec{\be{x_1}{M_1},\ldots, \be{x_n}{M_n}}{[]}$ and 
$\FVs{M_i} \subseteq \LBVs{\Sigma}$ for all $i$'s, and 
$x_1, \ldots, x_n$ are pairwise distinctly named,
and all $x_i$'s are distinct from any of \LBVs{\Sigma'}

\item $F = \letrec{\be{x}{[]}, \be{x_1}{M_1},\ldots,  \be{x_n}{M_n}}{N}$ and 
$\FVs{N} \subseteq \LBVs{\Sigma}$ and $\FVs{M_i} \subseteq
  \LBVs{\Sigma}$ for all $i$'s, and  
$x, x_1, \ldots, x_n$ are pairwise distinctly named,
and all $x_i$'s and $x$ are distinct from any of \LBVs{\Sigma'},
\end{enumerate}
A structured configuration \evalin{\Sigma}{M} is well-formed 
if $\Sigma$ is well-formed and $\FVs{M} \subseteq \LBVs{\Sigma}$. 

We use the same definition as in the previous section 
for the translation \comp{\cdot} from structured heaps
to contexts:\\
\hspace*{\fill}$\begin{array}{rclrcl}
\comp{\rt} &=& []~~~~~&\comp{\Sigma,F} &=& \comp{\Sigma}[F]
\end{array}$\hspace*{\fill}\\
Again we may identify $\Sigma$ with \comp{\Sigma}, 
thus write $\Sigma[M]$ to denote $\comp{\Sigma}[M]$.
The following lemma is proved by induction on the structure of $\Sigma$.

\begin{lemma}
For any well-formed configuration \evalin{\Sigma}{M}, $\Sigma[M]$ is a program.
\end{lemma}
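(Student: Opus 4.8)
The plan is to prove the slightly sharper statement that a well-formed configuration denotes a \emph{closed} expression, since a program is by definition a closed expression; it then suffices to show $\FVs{\Sigma[M]} = \emp$. I would argue by induction on the structure of $\Sigma$, using the identity $\comp{\Sigma,F} = \comp{\Sigma}[F]$ so that, writing $\Sigma = \Sigma',F$, we have $\Sigma[M] = \Sigma'[F[M]]$. The induction therefore peels off the innermost frame $F$, folds it into the body to form $F[M]$, and recurses on the shorter heap $\Sigma'$.

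In the base case $\Sigma = \rt$ we have $\comp{\rt} = []$ and $\LBVs{\rt} = \emp$, so well-formedness of the configuration forces $\FVs{M} \subseteq \emp$, meaning $M$ is already closed. For the inductive step the crucial invariant is that the binders contributed by $F$ absorb exactly the variables in $\LBVs{\Sigma} \setminus \LBVs{\Sigma'}$, so that $\FVs{F[M]} \subseteq \LBVs{\Sigma'}$; once this is shown, $\evalin{\Sigma'}{F[M]}$ is well-formed (its heap $\Sigma'$ is well-formed by assumption, and the free-variable containment has just been verified), and the induction hypothesis yields that $\Sigma'[F[M]] = \Sigma[M]$ is closed. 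I would verify the containment by case analysis on the three shapes of $F$. When $F = \app{[]}{M'}$ we have $\LBVs{\Sigma} = \LBVs{\Sigma'}$, and combining $\FVs{M} \subseteq \LBVs{\Sigma}$ from configuration well-formedness with $\FVs{M'} \subseteq \LBVs{\Sigma}$ from clause~1 of heap well-formedness gives $\FVs{F[M]} = \FVs{M} \cup \FVs{M'} \subseteq \LBVs{\Sigma'}$. When $F$ is a letrec frame whose hole sits in the body, $\LBVs{\Sigma} = \LBVs{\Sigma'} \cup \{x_1,\ldots,x_n\}$, and clause~2 bounds the free variables of each binding by $\LBVs{\Sigma}$; applying the free-variable equation for letrec, which removes $\{x_1,\ldots,x_n\}$, leaves the result inside $\LBVs{\Sigma'}$. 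When $F$ is a letrec frame whose hole sits inside a binding the argument is identical using clause~3, the hole-binder $x$ being subtracted as well.

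The routine but attention-demanding part is this free-variable bookkeeping: one must align each well-formedness clause with the matching case of the \FVs{\cdot} equations for \cyclic\ and confirm that the names bound by the frame are precisely those dropped in passing from $\LBVs{\Sigma}$ to $\LBVs{\Sigma'}$. I expect the only genuine subtlety to concern the frames carrying dependency chains $\dps{x}{x'}$, where care is needed to ensure that every name bound along the chain is counted in $\LBVs{\Sigma}$ and hence cancelled. Granting this, the containment $\FVs{F[M]} \subseteq \LBVs{\Sigma'}$ goes through in every case and the induction closes. The overall shape of the argument mirrors the proof of Lemma~\ref{lamme:wfstrcnf_is_prg} for \acyclic, with the letrec constructs as the only new ingredient.
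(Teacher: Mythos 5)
Your proof is correct and takes essentially the same route the paper indicates for this lemma: induction on the structure of $\Sigma$, peeling off the last frame $F$ via $\comp{\Sigma',F}[M] = \comp{\Sigma'}[F[M]]$ and verifying $\FVs{F[M]} \subseteq \LBVs{\Sigma'}$ by case analysis on the three frame shapes against the well-formedness clauses (the paper leaves these details unstated). Your remark on the dependency-chain frames is also sound, since every binder along a chain \dps{x}{x'} is collected by \LBVs{\cdot} and such a frame can be read as an instance of clause~3 because no ordering among bindings in $D$ is assumed.
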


\medskip

Let's look at the inference rules in figure~\ref{fig:str_eval_rec}.
The first four rules are equivalent to the previous four rules in 
figure~\ref{fig:str_eval}. Whereas {\it Var} corresponds to the production
$\mathsf{let~rec} ~\be{x}{E},D$ $\mathsf{in}$ $E'[x]$ 
of evaluation contexts, $\mathit{Var}_{\mathit{env}}$ does to 
the production \letrec{\be{x'}{E},\dps{x}{x'},D}{E'[x]}. 
$\mathit{Err}_{\mathit{var}}$ mediates between the natural and reduction
semantics when a black hole is produced.
Indeed variables letrec-bound in $D_x$ correspond to 
variables bound to \error\ in a heap in the natural semantics. 
The instrumented natural semantics keeps the original expressions
bound to the variables  
to facilitate reconstructing reduction sequences from its derivations. 
${\mathit{Err}_{\beta}}$ is almost the same 
as the original rule $\mathit{\it Error}_{\beta}$ in figure
\ref{fig:eval_rec}. 

\begin{lemma}\label{lemma:Eval_wfheap_rec}
If \evalin{\Sigma}{M} is well-formed and \Eval{\Sigma}{M}{\Sigma'}{V},
then \evalin{\Sigma'}{V} is well-formed. 
\end{lemma}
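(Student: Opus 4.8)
The plan is to argue by induction on the derivation of \Eval{\Sigma}{M}{\Sigma'}{V}, with a case analysis on the last rule applied, exactly as for the acyclic Lemma~\ref{lemma:Eval_wfheap}; the genuinely new work lies in the rules that manipulate dependency chains and flatten letrec frames. Two cases close immediately: in \emph{Val} we have $\Sigma'=\Sigma$ and $V=M$, so well-formedness is inherited verbatim, and in $\mathit{Err}_{\mathit{var}}$ again $\Sigma'=\Sigma$ while $V=\error$ has $\FVs{\error}=\emptyset$, so the hypothesis already gives the conclusion. The \emph{Letrecin} case is routine: the premise merely pushes the frame $\letrec{\be{x'_1}{M'_1},\ldots,\be{x'_n}{M'_n}}{[]}$ onto $\Sigma$, and its well-formedness (condition~2) follows from well-formedness of the original configuration together with freshness of the $x'_i$, which supplies both pairwise distinctness and disjointness from $\LBVs{\Sigma}$; the inductive hypothesis then yields the result heap.

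For \emph{App} and $\mathit{Err}_{\beta}$ the subtlety is the implicit let-lifting baked into the rule. I would first invoke the inductive hypothesis on the left premise to learn that $\Sigma_1,\app{[]}{M_2},\Theta$ is well-formed with $\FVs{\fun{x}{N}}\subseteq\LBVs{\Sigma_1,\app{[]}{M_2},\Theta}$. The heap of the right premise, $\Sigma_1,\Theta,\letrec{\be{x'}{M_2}}{[]}$, is obtained by deleting the $\app{[]}{M_2}$ frame, sliding $\Theta$ downward, and appending a fresh let-frame. Since $\LBVs{\app{[]}{M}}=\emptyset$, erasing and relocating the application frame leaves every $\LBVs$ set that appears in the well-formedness conditions of the frames of $\Theta$ unchanged, so $\Sigma_1,\Theta$ remains well-formed; the appended frame meets condition~2 because $\FVs{M_2}\subseteq\LBVs{\Sigma}$ (from well-formedness of the original $\Sigma,\app{[]}{M_2}$) and $x'$ fresh gives distinctness. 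The single fact I expect to record separately is that $\LBVs{\Sigma}\subseteq\LBVs{\Sigma_1}$, a lightweight monotonicity observation — provable inside the same induction — that the letrec-bound variables already present in the heap are never discarded along a derivation. The inductive hypothesis on the right premise then delivers well-formedness of $\Sigma_2$ (and of the $\Sigma',\Theta$ appearing in $\mathit{Err}_{\beta}$).

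The hard part will be \emph{Var} and $\mathit{Var}_{\mathit{env}}$, owing to the flattening $\flatten{\Theta}$ and the dependency syntax $D_x$. In \emph{Var} the premise evaluates the bound expression $M$ under the continuation frame $\letrec{\be{x}{[]},D}{\Sigma_1[x]}$ and returns $\Sigma',\letrec{\be{x}{[]},D'}{\Sigma_1[x]},\Theta$, which the inductive hypothesis certifies well-formed; the conclusion must repackage this as $\Sigma',\letrec{\flatten{\Theta},\be{x}{V},D'}{[]},\Sigma_1$. To justify the repackaging I would isolate an auxiliary fact about flattening: collapsing the stack of letrec frames $\Theta$ into a single binding block only enlarges the set of binders simultaneously in scope (letrec imposes no order on its bindings, whereas a nested stack exposes strictly fewer binders to each frame), so every free-variable containment that held frame-by-frame in $\Theta$ survives flattening, while $\LBVs{\flatten{\Theta}}$ is exactly the union of the bound variables of $\Theta$. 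Pairwise distinctness of all merged binders comes from the global freshness discipline, and moving $\Sigma_1$ back out to the tail is sound because it was already well-formed over the larger context. The $\mathit{Var}_{\mathit{env}}$ case runs the same argument threaded through a dependency chain $D_{x'}$ and a context $E[x']$, where the extra check is that re-forming the $\dps{x}{x'}$-dependencies and re-inserting $E[x']$ still respects $\FVs{\cdot}\subseteq\LBVs{\cdot}$; this again reduces to monotonicity of $\LBVs$ under flattening. I therefore expect the proof to hinge on one flattening lemma — that $\flatten{\cdot}$ preserves well-formedness while only growing $\LBVs$ — with everything else being bookkeeping on top of the three well-formedness conditions.
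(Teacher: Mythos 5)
Your proposal is correct and follows essentially the same route as the paper, which proves the lemma by exactly this induction on the derivation of \Eval{\Sigma}{M}{\Sigma'}{V} (the paper leaves the case analysis implicit, stating only the induction). The auxiliary facts you isolate --- monotonicity of \LBVs{\cdot} along derivations and preservation of the free-variable containments under flattening \flatten{\Theta} --- are the right bookkeeping and are consistent with the paper's heap-growth lemma (Lemma~\ref{lemma:Eval_heap_grow_rec}), so nothing in your plan diverges from the intended argument.
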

\begin{proof}
By induction on the derivation of \eval{\Sigma}{M}{\Sigma'}{V}.
\end{proof}

\medskip

Easy induction proves the instrumented natural semantics 
correct with respect to the reduction semantics. 

\begin{proposition}\label{prop:Eval_then_red_rec}
If \evalin{\Sigma}{M} is well-formed and \Eval{\Sigma}{M}{\Sigma'}{V},
then $\Sigma[M] \rarr \Sigma'[V]$.
\end{proposition}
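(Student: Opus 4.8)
The plan is to argue by induction on the derivation of \Eval{\Sigma}{M}{\Sigma'}{V}, with a case analysis on the last rule used, in exact parallel to the proof of Proposition~\ref{prop:Eval_then_red}. The extra rules of figure~\ref{fig:str_eval_rec} line up one-for-one with the extra reduction rules of figure~\ref{fig:red_rec}, so in each case I would first apply the induction hypotheses to the premises and then bridge the remaining gap with a short, fixed run of standard reductions. To invoke the induction hypothesis on a premise I need its source configuration well-formed, which is supplied by Lemma~\ref{lemma:Eval_wfheap_rec}. Throughout I rely on three facts: that \comp{\Sigma} is always an evaluation context, so every redex I point at genuinely sits in an evaluation context and the rewrites are real standard reductions; that {\sf let rec} imposes no ordering on its bindings, so I may shuffle the binding under scrutiny to the front whenever a reduction rule demands it; and that both semantics work modulo $\alpha$-equivalence, which absorbs the fresh renamings introduced in {\it App} and {\it Letrecin}.

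The cases {\it Val} and {\it Letrecin} are immediate: {\it Val} gives the reflexive reduction, while {\it Letrecin} merely wraps a {\sf let rec} frame around the body, so the induction hypothesis yields the claim up to the renaming of the fresh $x'_i$. The case {\it App} is verbatim the acyclic one: the left premise's induction hypothesis gives $\Sigma[\app{M_1}{M_2}] \rarr \Sigma_1[\app{(\Theta[\fun{x}{N}])}{M_2}]$; repeated {\it lift} steps (one per {\sf let rec} frame of $\Theta$) migrate the application inside $\Theta$ to reach $\Sigma_1[\Theta[\app{(\fun{x}{N})}{M_2}]]$; one $\beta_{\mathit{need}}$ step produces $\Sigma_1[\Theta[\letrec{\be{x'}{M_2}}{\subst{N}{x'}{x}}]]$; and the right premise's induction hypothesis finishes at $\Sigma_2[V]$. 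The case $\mathit{Err}_{\beta}$ is identical but closes with the single step $\app{\error}{M_2} \needmapsto \error$ ($\mathit{error}_{\beta}$) in place of $\beta_{\mathit{need}}$.

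The variable cases are where the cyclic structure shows up. For {\it Var}, the induction hypothesis gives $\Sigma[\letrec{\be{x}{M},D}{\Sigma_1[x]}] \rarr \Sigma'[\letrec{\be{x}{\Theta[V]},D'}{\Sigma_1[x]}]$; I would then flatten the nested {\sf let rec}'s of $\Theta[V]$ out of the binding of $x$ by repeated {\it assoc} steps to obtain $\Sigma'[\letrec{\flatten{\Theta},\be{x}{V},D'}{\Sigma_1[x]}]$, and one {\it deref} step rewrites the occurrence $\Sigma_1[x]$ to $\Sigma_1[V]$, which is exactly the translation of the conclusion's result heap. The case $\mathit{Var}_{\mathit{env}}$ runs along identical lines, except that the binding of $x$ is reached through the dependency chain $D_{x'}$ rather than directly from the body; accordingly I would use the environment versions $\mathit{assoc}_{\mathit{env}}$ (repeatedly, to flatten $\Theta$) followed by $\mathit{deref}_{\mathit{env}}$ (to propagate $V$ through the chain, turning $D_{x'}[\Sigma_1[x]]$ into $D_{x'}[\Sigma_1[V]]$). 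Finally, $\mathit{Err}_{\mathit{var}}$ has no premise: since $x \in D_{x'}$, the translation of the source already exhibits a direct cycle, and a single application of {\it error} (when the looping variable is referenced directly from the body) or $\mathit{error}_{\mathit{env}}$ (when it is reached through a further chain) replaces the cyclic occurrence of $x$ in $\Sigma_1[x]$ by \error, yielding the result configuration.

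The routine part is bookkeeping the translation \comp{\cdot} in each case. I expect the main obstacle to lie in the two dependency-driven cases $\mathit{Var}_{\mathit{env}}$ and $\mathit{Err}_{\mathit{var}}$: one must check that the context notation $D_{x'}[\,\cdot\,]$ plugs its hole exactly where $\mathit{assoc}_{\mathit{env}}$, $\mathit{deref}_{\mathit{env}}$, and $\mathit{error}_{\mathit{env}}$ expect their redexes, and that flattening $\Theta$ binding-by-binding keeps each intermediate {\sf let rec} body in the form $E[\,\cdot\,]$ demanded by those rules. This is precisely where well-formedness of \evalin{\Sigma}{M} (Lemma~\ref{lemma:Eval_wfheap_rec}) and the evaluation-context reading of structured heaps do the real work.
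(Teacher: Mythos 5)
Your proposal matches the paper's own proof essentially step for step: the same induction on the derivation with case analysis on the last rule, and the same bridging reductions in each case --- \emph{lift} steps followed by $\beta_{\mathit{need}}$ for \emph{App} (and \emph{lift} plus $\mathit{error}_{\beta}$ for $\mathit{Err}_{\beta}$), repeated \emph{assoc} (resp.\ $\mathit{assoc}_{\mathit{env}}$) to flatten $\Theta$ followed by one \emph{deref} (resp.\ $\mathit{deref}_{\mathit{env}}$) step for \emph{Var} (resp.\ $\mathit{Var}_{\mathit{env}}$), and a single \emph{error} or $\mathit{error}_{\mathit{env}}$ step for $\mathit{Err}_{\mathit{var}}$, split exactly as the paper does by whether the looping variable is the one demanded by the body or is reached through a further dependency chain. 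The only difference is presentational: you make the appeal to Lemma~\ref{lemma:Eval_wfheap_rec} (to license the induction hypothesis on the second premise) explicit where the paper leaves it implicit, so the proposal is correct as it stands.
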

\begin{proof}
By induction on the derivation of \Eval{\Sigma}{M}{\Sigma'}{V}
with case analysis on the last rule used. \\
- The case of {\it Val} is obvious.\\
- The case of {\it App}. Suppose we deduce
\Eval{\Sigma}{\app{M_1}{M_2}}{\Sigma'}{V}
from 
\Eval{\Sigma,\app{[]}{M_2}}{M_1}{\Sigma_1,\app{[]}{M_2},\Theta}{\fun{x}{N}}
and 
\Eval{\Sigma_1,\Theta,\letrec{\be{x'}{M_2}}{[]}}{\subst{N}{x'}{x}}{\Sigma'}{V}.
Then we have:\\
\hspace*{3ex}$\begin{array}{l}
\Sigma[\app{M_1}{M_2}]\\
\rarr \Sigma_1[\app{(\Theta[\fun{x}{N}])}{M_2}]~~~\mathrm{by~ind.~hyp.}\\
\rarr \Sigma_1[\Theta[\app{(\fun{x}{N})}{M_2}]]~~~\mathrm{by}~\mathit{lift}\\
\rar \Sigma_1[\Theta[\letrec{\be{x'}{M_2}}{\subst{N}{x'}{x}}]]
~~~\mathrm{by}~\beta_{\mathit{need}}\\
\rarr \Sigma'[V]~~~\mathrm{by~ind.~hyp.}
\end{array}$\\
- The case of {\it Letrecin} is immediate by induction.\\
- The case of {\it Var}. Suppose we deduce
\Eval{\Sigma,\letrec{\be{x}{M},D}{[]},\Sigma_1}{x}
{\Sigma_2,\letrec{\flatten{\Theta},\be{x}{V},D'}{[]},\Sigma_1}{V}
from 
\Eval{\Sigma,\letrec{\be{x}{[]}, D}{\Sigma_1[x]}}{M}
{\Sigma_2,\letrec{\be{x}{[]}, D'}{\Sigma_1[x]},\Theta}{V}.
Then we have:\\
\hspace*{3ex}$\begin{array}{l}
\Sigma[\letrec{\be{x}{M}, D}{\Sigma_1[x]}]\\
\rarr \Sigma_2[\letrec{\be{x}{\Theta[V]},D'}{\Sigma_1[x]}]~~~\mathrm{by~ind.~hyp.}\\
\rarr \Sigma_2[\letrec{\flatten{\Theta}, \be{x}{V}, D'}{\Sigma_1[x]}]
~~~\mathrm{by}~\mathit{assoc}\\
\rar \Sigma_2[\letrec{\flatten{\Theta}, \be{x}{V},D'}{\Sigma_1[V]}]
~~~\mathrm{by}~\mathit{deref}\\
\end{array}$\vspace{.5ex}\\
- The case of $\mathit{Var}_{\mathit{env}}$ is similar to the above
{\it Var} case, where we use $\mathit{assoc}_{\mathit{env}}$
and $\mathit{deref}_{\mathit{env}}$ instead of {\it assoc} and {\it deref},
respectively. \\
- The case of $\mathit{Err}_{\mathit{var}}$ (1). Suppose $x=x'$ and we deduce
\Eval{\Sigma,\letrec{D,D_x}{E[x]},\Sigma'}{x}
{\Sigma,\letrec{D,D_x}{E[x]},\Sigma'}{\error}.
The side-condition $x \in D_x$ implies $ D_x[\Sigma'[x]] = \dps{x}{x}$. 
Thus we have 
$\Sigma[\letrec{D, D_{x}[\Sigma'[x]]}{E[x]}]
\rar \Sigma[\letrec{D, D_{x}[\Sigma'[\error]]}{E[x]}]$ by {\it error}. \\
- The case of $\mathit{Err}_{\mathit{var}}$ (2). 
Suppose $x \not= x'$ and we deduce
\Eval{\Sigma,\letrec{D, D_{x'}}{E[x']},\Sigma'}{x}
{\Sigma,\letrec{D, D_{x'}}{E[x']},\Sigma'}{\error}. 
Then $x \in D_{x'}$ implies $D_{x'}[\Sigma'[x]] = \dps{x'}{x},\dps{x}{x}$.
Thus we have
$\Sigma[\letrec{D, D_{x'}[\Sigma'[x]]}{E[x']}]
\rar \Sigma[\letrec{D, D_{x'}[\Sigma'[\error]]}{E[x']}]$ by 
$\mathit{error}_{\mathit{env}}$.\\
- The case of $\mathit{Err}_{\beta}$ is easy and similar to {\it App}.
\end{proof}

\medskip

Next we prove the instrumented natural semantics correct
with respect to the original natural semantics in figure~\ref{fig:eval_rec}.
Again this amounts to check that in the instrumented natural semantics
pushing and popping frames into heaps are properly balanced.
The proof is similar to the previous one 
for Proposition~\ref{prop:eval_then_Eval}, but we extend the preorder
\hpleq{}{} on structured heaps to take account of their cyclic structure. 

To define the preorder \hpleq{}{} on structured heaps, we use 
two auxiliary preorders. 
The preorder $\leq_{\mathcal{D}}$ on sequences of bindings is defined such
that $D \leq_{\mathcal{D}} D'$ if $\LBVs{D} \subseteq \LBVs{D'}$.
The preorder $\leq_{\mathcal{F}}$ on frames is the smallest reflexive and
transitive relation satisfying the condition that 
if $D \leq_{\mathcal{D}} D'$, then 
\fleq{\letrec{D_x,D}{E[x]}}{\letrec{D_x,D'}{E[x]}}
and \fleq{\letrec{D}{[]}}{\letrec{D'}{[]}}.
Then the preorder \hpleq{}{} on structured heaps is defined such that
\hpleq{F_1,\ldots, F_m}{F'_1, \ldots, F'_n} 
if there is an injection $\iota$ from $\{1, \ldots, m\}$ to $\{1, \ldots, n\}$ 
satisfying the following three conditions:
\begin{enumerate}
\item  if $i < j$ then $\iota(i) < \iota(j)$ 
\item for all $i$ in $\{1, \ldots, m\}$, \fleq{F_i}{F'_{\iota(i)}}
\item for all $i$ in $\{1, \ldots, n\}\backslash \ran{\iota}$,
$F'_i = \letrec{D}{[]}$ for some $D$. 
\end{enumerate}
It is easy to check that \hpleq{}{} is a preorder. 
The following lemma is proved by induction on the derivation of
\Eval{\Sigma}{M}{\Sigma'}{V}. 

\begin{lemma}\label{lemma:Eval_heap_grow_rec}
If \evalin{\Sigma}{M} is well-formed and \Eval{\Sigma}{M}{\Sigma'}{V}, 
then \hpleq{\Sigma}{\Sigma'}.
\end{lemma}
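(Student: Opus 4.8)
The plan is to induct on the derivation of $\Eval{\Sigma}{M}{\Sigma'}{V}$, with a case analysis on the last rule of figure~\ref{fig:str_eval_rec}, following the pattern of the acyclic counterpart (Lemma~\ref{lemma:Eval_heap_grow}). Before the cases I would record three facts about $\hpleq{}{}$ that make the induction go through. First, $\hpleq{}{}$ is transitive (already noted). Second, appending letrec frames only grows a heap: $\hpleq{\Sigma}{\Sigma,\Theta}$ whenever $\Theta$ is a sequence of $\letrec{D}{[]}$ frames, which is immediate from condition~3 of the definition (the identity injection leaves the trailing frames unmatched, and they have the required shape). Third, combining these yields the cyclic analogue of the composition fact used in the acyclic proof: if $\hpleq{\Sigma}{\Sigma'}$ and $\hpleq{\Sigma',\Theta}{\Sigma''}$ with $\Theta$ a sequence of letrec frames, then $\hpleq{\Sigma}{\Sigma''}$. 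The cases {\it Val} and $\mathit{Err}_{\mathit{var}}$ are then trivial, since there $\Sigma' = \Sigma$ and reflexivity suffices.

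For {\it App} and $\mathit{Err}_{\beta}$ I would first extract a prefix relation from the left premise. The point is that under $\fleq{}{}$ an application frame $\app{[]}{M_2}$ can only be matched to the identical frame, so a short argument using monotonicity of the injection turns the induction hypothesis $\hpleq{\Sigma,\app{[]}{M_2}}{\Sigma_1,\app{[]}{M_2},\Theta}$ into $\hpleq{\Sigma}{\Sigma_1}$, with the trailing $\Theta$ being letrec frames. For {\it App} I then append the letrec frames $\Theta,\letrec{\be{x'}{M_2}}{[]}$ to get $\hpleq{\Sigma}{\Sigma_1,\Theta,\letrec{\be{x'}{M_2}}{[]}}$ and compose with the induction hypothesis on the right premise $\hpleq{\Sigma_1,\Theta,\letrec{\be{x'}{M_2}}{[]}}{\Sigma_2}$ to obtain $\hpleq{\Sigma}{\Sigma_2}$; for $\mathit{Err}_{\beta}$, appending the letrec frames $\Theta$ to $\hpleq{\Sigma}{\Sigma'}$ gives $\hpleq{\Sigma}{\Sigma',\Theta}$ directly. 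The case {\it Letrecin} is even simpler: the induction hypothesis yields $\hpleq{\Sigma,\letrec{\ldots}{[]}}{\Sigma'}$, and since $\hpleq{\Sigma}{\Sigma,\letrec{\ldots}{[]}}$ holds by the second fact, transitivity closes it.

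The hard part will be {\it Var} and $\mathit{Var}_{\mathit{env}}$, where the heap is genuinely restructured rather than merely extended: the tail $\Sigma_1$ is folded into the body as $\Sigma_1[x]$, the collected letrec's $\Theta$ are flattened into the distinguished frame, and the binding block $D$ is replaced by $D'$. For {\it Var} the induction hypothesis gives $\hpleq{\Sigma,\letrec{\be{x}{[]},D}{\Sigma_1[x]}}{\Sigma',\letrec{\be{x}{[]},D'}{\Sigma_1[x]},\Theta}$. The distinguished frame has a fixed hole position and body, so it can only be matched to its counterpart, and condition~2 then forces $D \leq_{\mathcal{D}} D'$, i.e. $\LBVs{D} \subseteq \LBVs{D'}$; the trailing $\Theta$ are letrec frames, and the prefix extraction again yields $\hpleq{\Sigma}{\Sigma'}$. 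I would then reassemble an injection for the conclusion from these pieces: $\Sigma$ maps into $\Sigma'$ as before; the input frame $\letrec{\be{x}{M},D}{[]}$ maps to the output frame $\letrec{\flatten{\Theta},\be{x}{V},D'}{[]}$, which is a legitimate $\fleq{}{}$-step since $\LBVs{\be{x}{M},D} = \{x\} \cup \LBVs{D} \subseteq \LBVs{\flatten{\Theta}} \cup \{x\} \cup \LBVs{D'} = \LBVs{\flatten{\Theta},\be{x}{V},D'}$; and the tail $\Sigma_1$ is preserved identically, with all unmatched codomain frames being letrec frames by $\hpleq{\Sigma}{\Sigma'}$. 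The crux is thus verifying that flattening $\Theta$ and the replacement $D \mapsto D'$ only enlarge the let-bound variable set, so that the single distinguished frame grows in the $\leq_{\mathcal{D}}$ ordering. The case $\mathit{Var}_{\mathit{env}}$ is handled identically, with the distinguished frame carrying the extra dependency block $D_{x'}$ and the nontrivial body $E[x']$ in place of $\Sigma_1[x]$; since these are untouched, the same binding-set inclusion yields the required $\fleq{}{}$-step.
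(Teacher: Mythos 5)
Your proof is correct and takes essentially the same route as the paper: the paper proves Lemma~\ref{lemma:Eval_heap_grow_rec} by the same induction on the derivation of \Eval{\Sigma}{M}{\Sigma'}{V}, implicitly relying (as stated explicitly for the acyclic Lemma~\ref{lemma:Eval_heap_grow}) on the composition fact that \hpleq{\Sigma}{\Sigma'} and \hpleq{\Sigma',\Theta}{\Sigma''} imply \hpleq{\Sigma}{\Sigma''}. Your additional bookkeeping --- extracting \hpleq{\Sigma}{\Sigma_1} from the left premise of {\it App} via order-preservation of the injection, and verifying the $\leq_{\mathcal{D}}$ inclusion $\LBVs{\be{x}{M},D} \subseteq \LBVs{\flatten{\Theta},\be{x}{V},D'}$ in the {\it Var} and $\mathit{Var}_{\mathit{env}}$ cases --- just spells out details the paper leaves implicit.
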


We define translation \decomp{\cdot} from structured heaps into 
sequences of bindings by:
\[
\begin{array}{rcl}
\decomp{\rt} &=& \rt\\
\decomp{\Sigma,\app{[]}{M}} &=& \decomp{\Sigma}\\
\decomp{\Sigma,\letrec{D}{[]}} &=& \decomp{\Sigma},D\\
\decomp{\Sigma,\letrec{D,D_x}{M}} &=& 
\decomp{\Sigma},D,\be{x_1}{\error},\ldots,\be{x_n}{\error}\\
\end{array}
\]
where $\LBVs{D_x} = \{x_1, \ldots, x_n\}$. 
We identify a sequence of bindings $D$ 
with a heap $\Psi$ such that
\LBVs{D} = \dom{\Psi}, and for all $x$ in \dom{\Psi}, 
$\Psi(x) = M$ iff $D$ contains $\be{x}{M}$. 
Thus \decomp{\Sigma} denotes a heap. 

We prove one basic result about the natural semantics:
Lemma~\ref{lemma:eval_heap_mono_rec} states that 
extending heaps with irrelevant bindings does not affect derivations
and is proved by routine induction. 
For mappings $\Psi, \Phi$ such that \dom{\Psi} and \dom{\Phi} are disjoint,
the notation \uni{\Psi}{\Phi} denotes their union, namely 
\dom{\uni{\Psi}{\Phi}} = $\dom{\Psi} \cup \dom{\Phi}$ and:\\
\hspace*{5ex}$(\uni{\Psi}{\Phi})(x) = 
\left\{ \begin{array}{ll}
\Psi(x) & ~~{\rm when} ~x \in \dom{\Psi} \\
\Phi(x) & ~~{\rm when} ~x \in \dom{\Phi} \\
\end{array}\right. $

\begin{lemma}\label{lemma:eval_heap_mono_rec}
For any $\Psi$, $\Psi'$, $\Phi$ and $M$
such that \dom{\Psi'} and \dom{\Phi} are disjoint 
and \evalin{\Psi}{M} and \evalin{\uni{\Psi}{\Psi'}}{M} are closed,
\eval{\Psi}{M}{\Phi}{V} iff \eval{\uni{\Psi}{\Psi'}}{M}{\uni{\Phi}{\Psi'}}{V}
and their derivations are of the same depth. 
\end{lemma}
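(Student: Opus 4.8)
The plan is to prove both directions of the ``iff'' at once by induction on the depth of the derivation, exhibiting a rule-for-rule correspondence between derivations of \eval{\Psi}{M}{\Phi}{V} and derivations of \eval{\uni{\Psi}{\Psi'}}{M}{\uni{\Phi}{\Psi'}}{V}. Because this correspondence sends each inference to an application of the very same rule with premises again in correspondence, depth is preserved step for step, so the ``same depth'' clause is obtained for free rather than needing a separate argument.

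First I would record two auxiliary facts, each by a routine induction on the derivation and used silently throughout. \emph{Domain monotonicity}: \eval{\Psi}{M}{\Phi}{V} implies $\dom{\Psi} \subseteq \dom{\Phi}$. \emph{Closedness invariant}: if \evalin{\Psi}{M} is closed then so is every configuration occurring in the derivation, and in particular \evalin{\Phi}{V}. The decisive observation---and the reason the hypothesis constrains the \emph{result} heap $\Phi$ rather than the initial heap $\Psi$---is that every variable allocated fresh anywhere in a derivation of \eval{\Psi}{M}{\Phi}{V} lies in \dom{\Phi}: the $x'$ of \emph{Application} and the $x'_i$ of \emph{Letrec} are inserted into the heap and, domains being monotone, survive into $\Phi$, while \emph{Variable} allocates nothing. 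Since \dom{\Psi'} and \dom{\Phi} are disjoint, no allocated name lies in \dom{\Psi'}; hence those same names remain legal choices after adjoining $\Psi'$, and the union commutes with every heap operation of the rules, e.g. $\uni{(\update{\Phi}{x'}{M_2})}{\Psi'} = \update{(\uni{\Phi}{\Psi'})}{x'}{M_2}$, and similarly for the simultaneous extension in \emph{Letrec} and the $x \mapsto \error$, $x \mapsto V$ updates of \emph{Variable} (there using that $x \in \dom{\Psi} \subseteq \dom{\Phi}$ is disjoint from \dom{\Psi'}).

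The induction itself is then mechanical. In each case I would apply the induction hypothesis to the premises---their result heaps are contained in $\Phi$ and so still disjoint from \dom{\Psi'}, so the hypothesis is applicable---and reassemble the conclusion over the extended heap via the commutation identities above, reusing the original fresh names. The backward direction is symmetric: a name chosen fresh for \uni{\Psi}{\Psi'} is a fortiori fresh for $\Psi$, so discarding $\Psi'$ never violates a freshness side-condition. Finally, the closedness invariant together with the hypothesis that \evalin{\Psi}{M} is closed guarantees that all reachable free variables stay within \dom{\Psi} enlarged by freshly allocated names, disjoint from \dom{\Psi'}; thus no binding of $\Psi'$ is ever dereferenced, and the block $\Psi'$ is threaded unchanged through every heap in the derivation.

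The only genuine subtlety, and the step I would write out most carefully, is this freshness bookkeeping: checking in both directions that the side-conditions of \emph{Application}, \emph{Letrec} and \emph{Variable} still hold and that the two evaluations touch exactly the same bindings. It all reduces to the single observation that allocated names populate $\Phi$ and are therefore disjoint from the inert block $\Psi'$; everything else is a transcription of each rule through the operation $(-) \cup \Psi'$.
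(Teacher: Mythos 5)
Your proposal is correct and takes the same route as the paper, which simply declares the lemma ``proved by routine induction'' without spelling out details: an induction on the derivation transporting each rule across the operation $(-)\cup\Psi'$, with depth preserved rule-for-rule. The bookkeeping you make explicit---domain monotonicity, the closedness invariant, the observation that every freshly allocated name lands in $\dom{\Phi}$ and is therefore disjoint from the inert block $\Psi'$ (which also justifies, in the backward direction, that $\Psi'$ is threaded unchanged and never dereferenced)---is exactly the content the paper leaves implicit.
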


\begin{proposition}\label{prop:eval_then_Eval_rec}
If \evalin{\Psi}{M} is closed and \eval{\Psi}{M}{\Phi}{V}, 
then for any $\Sigma$ such that 
$\decomp{\Sigma} = \Psi$ and \evalin{\Sigma}{M} is well-formed, 
\Eval{\Sigma}{M}{\Sigma'}{V} with $\decomp{\Sigma'} = \Phi$.
\end{proposition}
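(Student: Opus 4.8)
The plan is to prove Proposition~\ref{prop:eval_then_Eval_rec} by induction on the derivation of \eval{\Psi}{M}{\Phi}{V}, following the template already established for Proposition~\ref{prop:eval_then_Eval} in the acyclic case, with case analysis on the last rule used in the natural semantics of figure~\ref{fig:eval_rec}. For each rule I fix an arbitrary $\Sigma$ with $\decomp{\Sigma} = \Psi$ and \evalin{\Sigma}{M} well-formed, and I must exhibit a matching derivation in the instrumented semantics of figure~\ref{fig:str_eval_rec} whose result heap $\Sigma'$ satisfies $\decomp{\Sigma'} = \Phi$. The cases of \emph{Value} and \emph{Letrec} should be essentially immediate: for \emph{Value} we take $\Sigma' = \Sigma$ and use \emph{Val}, observing that $\decomp{\cdot}$ is unchanged; for \emph{Letrec} we push the frame \letrec{\be{x'_1}{M'_1},\ldots,\be{x'_n}{M'_n}}{[]} and apply \emph{Letrecin}, checking that $\decomp{\cdot}$ of the extended structured heap matches the extended mapping \updateseq{\Psi}{x'_i}{M'_i}{i\in\{1,\ldots,n\}}. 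The case of $\mathit{Error}_\beta$ mirrors \emph{App} but closes with $\mathit{Err}_\beta$ and is routine.

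First, for the \emph{Application} case I would proceed exactly as in Proposition~\ref{prop:eval_then_Eval}: push \app{[]}{M_2}, apply the induction hypothesis to the left premise to obtain \Eval{\Sigma,\app{[]}{M_2}}{M_1}{\Sigma_1,\app{[]}{M_2},\Theta}{\fun{x}{N}} with $\decomp{\Sigma_1,\app{[]}{M_2},\Theta} = \Phi$, invoke Lemma~\ref{lemma:Eval_wfheap_rec} and Lemma~\ref{lemma:Eval_heap_grow_rec} to keep well-formedness and to guarantee the trailing frames really form some $\Theta$ (so the implicit let-lifting in the \emph{App} rule is legitimate), and then apply the induction hypothesis to the right premise over the rearranged heap $\Sigma_1,\Theta,\letrec{\be{x'}{M_2}}{[]}$. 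The crucial bookkeeping is that $\decomp{\cdot}$ is insensitive to the reordering of $\Theta$ past \letrec{\be{x'}{M_2}}{[]}, so $\decomp{\Sigma_1,\Theta,\letrec{\be{x'}{M_2}}{[]}}$ still equals \update{\Phi}{x'}{M_2}, matching the hypothesis of \emph{Application}.

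The main obstacle, and where the cyclic case genuinely diverges from the acyclic one, is the \emph{Variable} rule. In the natural semantics $x$ is temporarily bound to \error\ in the heap while $\Psi(x)$ is evaluated; I must split the given $\Sigma$ as a frame containing $x$ plus surrounding material, and then decide whether the bound expression $\Psi(x)$ evaluates successfully or loops back through a dependency to $x$ itself. This forces a genuine subcase analysis. When evaluation of $\Psi(x)$ succeeds without revisiting $x$, I mimic the acyclic \emph{Var} case using the instrumented \emph{Var} or $\mathit{Var}_{\mathit{env}}$ rule, checking that $\decomp{\cdot}$ translates the frame \letrec{\be{x}{[]},D}{\Sigma_1[x]} exactly to the \error-binding of $x$ prescribed by the definition of \decomp{\cdot} on let-frames, so that it agrees with \update{\Psi}{x}{\error}; here Lemma~\ref{lemma:eval_heap_mono_rec} is what lets me match the heap that the natural-semantics premise actually evaluates against (which carries $x \mapsto \error$) with the decompiled instrumented heap. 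When instead the dependency chain closes a cycle and $x$ is itself reached while bound to \error, the natural semantics returns \error; on the instrumented side this is the province of $\mathit{Err}_{\mathit{var}}$, and I must verify that the side-condition $x \in D_{x'}$ is met, i.e.\ that the decompiled \error-bindings correspond precisely to the letrec-bound variables of the dependency frame $D_{x'}$. I expect the delicate point to be maintaining the invariant $\decomp{\Sigma} = \Psi$ across the simultaneous presence of genuine bindings and \error-bindings, using the extended preorder \hpleq{}{} of Lemma~\ref{lemma:Eval_heap_grow_rec} to ensure the structured heap only grows (possibly upgrading \be{x}{[]} entries) so that the two translations stay in lockstep; once that invariant is pinned down, every remaining case closes by direct appeal to the induction hypothesis.
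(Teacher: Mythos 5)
Your proposal follows the paper's proof essentially step for step: the same case analysis on the last rule, the same handling of \emph{Application} via Lemmas~\ref{lemma:Eval_wfheap_rec} and~\ref{lemma:Eval_heap_grow_rec} (with the observation that \decomp{\cdot} is insensitive to the placement of $\Theta$ relative to \letrec{\be{x'}{M_2}}{[]}), and the same three-way split in the \emph{Variable} case among $\mathit{Err}_{\mathit{var}}$ (when $\Psi(x) = \error$ and $x$ sits in a dependency frame $D_{x'}$), \emph{Var}, and $\mathit{Var}_{\mathit{env}}$, with Lemma~\ref{lemma:eval_heap_mono_rec} doing exactly the work you assign to it. One technical correction: the induction must be on the \emph{depth} of the derivation of \eval{\Psi}{M}{\Phi}{V}, not on its structure as you state, because in the \emph{Variable} case the judgment you recurse on is not the premise itself (which evaluates $\Psi(x)$ under the full heap \update{\Psi}{x}{\error}) but the judgment over the restricted heap \decomp{\Sigma_1,\letrec{\be{x}{[]},D}{\Sigma_2[x]}} obtained by Lemma~\ref{lemma:eval_heap_mono_rec} --- this is not a subderivation, and the lemma's clause that the two derivations have the same depth is precisely what licenses the inductive call. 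Since you already invoke that lemma at the right spot, the fix is only to state the measure correctly, after which your argument coincides with the paper's.
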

\begin{proof}
By induction on the depth of the derivation of \eval{\Psi}{M}{\Phi}{V} with case analysis 
on the last rule used. \\
- The case of {\it Value} is obvious.\\
- The case of {\it Application}. Suppose $\decomp{\Sigma} = \Psi$ 
and \evalin{\Sigma}{\app{M_1}{M_2}} is well-formed and we deduce
\eval{\Psi}{\app{M_1}{M_2}}{\Psi'}{V}
from 
\eval{\Psi}{M_1}{\Phi}{\fun{x}{N}}
and \eval{\update{\Phi}{x'}{M_2}}{\subst{N}{x'}{x}}{\Psi'}{V}.
By ind. hyp. and Lemma~\ref{lemma:Eval_heap_grow_rec},
\Eval{\Sigma,\app{[]}{M_2}}{M_1}{\Sigma_1,\app{[]}{M_2},\Theta}{\fun{x}{N}}.
with \decomp{\Sigma_1,\app{[]}{M_2},\Theta} = $\Phi$.
By Lemma~\ref{lemma:Eval_wfheap_rec}, 
\evalin{\Sigma_1,\app{[]}{M_2},\Theta}{\fun{x}{N}} is well-formed. 
By ind. hyp., 
\Eval{\Sigma_1,\Theta,\letrec{\be{x'}{M_2}}{[]}}{\subst{N}{x'}{x}}{\Sigma_2}{V}
with \decomp{\Sigma_2} = $\Psi'$.\\
- The cases of $\mathit{Error}_{\beta}$ and {\it Letrec} are immediate by induction. \\
- The case of {\it Variable}. Suppose we deduce
\eval{\Psi}{x}{\update{\Phi}{x}{V}}{V}
from
\eval{\update{\Psi}{x}{\error}}{\Psi(x)}{\Phi}{V}.
Suppose $\decomp{\Sigma} = \Psi$ and \evalin{\Sigma}{x} is well-formed.
There are three possible cases.\\ 
- - When $\Psi(x) = \error$ and 
$\Sigma = \Sigma_1, \letrec{D,D_{x'}}{E[x']},\Sigma_2$ with $x \in D_{x'}$.
Then we deduce \Eval{\Sigma}{x}{\Sigma}{\error} by ${\mathit{Err}_{\mathit{var}}}$.\\
- - When $\Psi(x) = N$ 
and $\Sigma = \Sigma_1,\letrec{\be{x}{N},D}{[]},\Sigma_2$. 
By ind. hyp. and Lemma~\ref{lemma:Eval_heap_grow_rec} and~\ref{lemma:eval_heap_mono_rec},
\Eval{\Sigma_1,\letrec{\be{x}{[]},D}{\Sigma_2[x]}}{N}
{\Sigma'_1,\letrec{\be{x}{[]},D'}{\Sigma_2[x]},\Theta}{V}
and \decomp{\Sigma'_1,\letrec{\be{x}{[]},D'}{\Sigma_2[x]},\Theta} is
the restriction of $\Phi$ to\\
\LBVs{\Sigma'_1,\letrec{\be{x}{[]},D'}{\Sigma_2[x]},\Theta}.
Hence by {\it Var} we deduce \\
\Eval{\Sigma_1,\letrec{\be{x}{N},D}{[]},\Sigma}{x}
{\Sigma'_1,\letrec{\flatten{\Theta},\be{x}{V},D'}{[]},\Sigma_2}{V}
and 
$\decomp{\Sigma'_1,\letrec{\flatten{\Theta},\be{x}{V},D'}{[]},\Sigma_2} = \update{\Phi}{x}{V}$.\\
- - The case where $\Psi(x) = N$ and 
$\Sigma = \Sigma_1,\letrec{\be{x}{N},D,D_{x'}}{E[x']},\Sigma_2$ is similar to 
the above case, except that we use $\mathit{Var}_{\mathit{env}}$ instead of {\it Var}.
\end{proof}

\medskip

We prove the reduction semantics correct with respect to the natural
semantics by proving three auxiliary results in
Lemma~\ref{lemma:red_letrec} and~\ref{lemma:red_app_rec}
and Corollary~\ref{coro:red_cxt_rec}, which respectively 
correspond to Lemma~\ref{lemma:red_let}, \ref{lemma:red_app}
and~\ref{lemma:red_cxt} for the acyclic case. 

We say a reduction sequence $M \rarr^n N$ is {\it autonomous}
if either $n = 0$,  or else the last step is reduced by rules other
than {\it assoc} or  
$\mathit{assoc}_{\mathit{env}}$. 
These two rules have particular behaviour in that 
they flatten nested {\sf letrec}'s on request outside;
we will restrict the use of the two rules by requiring a reduction sequence
to be autonomous. 
We write $M \rarR^n N$ to denote that $M$ reduces into $N$ in $n$-steps 
and the reduction sequence is autonomous. 
We may omit the suffix $n$ when it is irrelevant.

\begin{lemma}\label{lemma:red_letrec}
The following two conditions hold.
\begin{enumerate}
\item For any $\Theta$, $x$, $M$, $D$ and $E$ such that
$\Theta[\letrec{\be{x}{M},D}[E[x]]]$ is a program and
$x$ is not in \LBVs{E},
$\Theta[\letrec{\be{x}{M},D}[E[x]]] \rarR^n 
\Theta'[\letrec{\be{x}{A},D'}{E[x]}]$ iff\\
$\Theta[\letrec{\be{x}{\error},D}{M}] 
\rarr^n \Theta'[\letrec{\be{x}{\error},D'}{A}]$

\item For any $\Theta$, \dps{x_1}{x_m}, $M$, $D$ and $E$ such that
$\Theta[\letrec{\dps{x_1}{x_m},\be{x_m}{M},D}{E[x_1]}]$ is a program
and $x_1$ is not in \LBVs{E} and \LBVs{\dps{x_1}{x_m}} = 
$\{x_1, \ldots, x_{m-1}\}$,\\
$\Theta[\letrec{\dps{x_1}{x_m},\be{x_m}{M},D}{E[x_1]}]$ $\rarR^n
\Theta'[\letrec{\dps{x_1}{x_m},\be{x_m}{A},D'}{E[x_1]}]$ iff
$\Theta[\letrec{\be{x_1}{\error}, \ldots, \be{x_m}{\error},D}{M}] 
\rarr^n \Theta'[\letrec{\be{x_1}{\error}, \ldots, \be{x_m}{\error},D'}{A}]$.
\end{enumerate}
\end{lemma}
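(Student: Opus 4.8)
The plan is to prove both statements by induction on the number of reduction steps $n$, following the template established by Lemma~\ref{lemma:red_let} and~\ref{lemma:red_cxt} in the acyclic case, but now carefully tracking how cyclic dependencies get rewritten to black holes. The key observation driving the correspondence is that the left-hand configuration $\Theta[\letrec{\be{x}{M},D}{E[x]}]$ keeps the original expression $M$ bound to $x$ inside a {\sf letrec}, whereas the right-hand configuration $\Theta[\letrec{\be{x}{\error},D}{M}]$ has promoted $M$ into the body position and frozen $x$ to \error. I would prove the ``only if'' direction (reductions on the left can be mimicked on the right) and the ``if'' direction (and conversely) simultaneously, since the autonomy restriction makes the step-counts match exactly; the bookkeeping is cleaner if both directions are carried through the same inductive case analysis.

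For part~1, I would fix the decomposition $\Theta = \Theta_1,\letrec{\be{x}{[]},D}{E[x]}$ implicitly and do case analysis on \emph{how the first reduction step acts}. The base case $n=0$ is immediate: an answer $A$ on the left corresponds to an answer on the right with $M$ already evaluated. For the inductive step I would distinguish (a) steps that fire entirely inside $M$ or inside $\Theta$, independently of the binding of $x$ — these transfer verbatim and reduce $n$ by one; (b) steps that dereference $x$ itself, which on the left use {\it deref} or $\mathit{deref}_{\mathit{env}}$ and on the right have no counterpart because $x$ is bound to \error{} — here the autonomy condition and the hypothesis $x \notin \LBVs{E}$ guarantee that such a dereference can only feed the value back into the body, matching the promotion of $M$; and (c) the {\it assoc}/$\mathit{assoc}_{\mathit{env}}$ steps, which are precisely the ones excluded by $\rarR$, so they never appear as the final step and are handled by appealing to Corollary~\ref{coro:red_cxt_rec} to replace the surrounding context. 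The crux is verifying that the rewriting $M \rightsquigarrow \error$ induced by the {\it error}/$\mathit{error}_{\mathit{env}}$ rules on the right exactly shadows the cyclic self-reference detected on the left.

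For part~2, the dependency chain \dps{x_1}{x_m} introduces the genuine novelty: rather than a single self-cycle, we have a \emph{path} $x_1 \to \cdots \to x_m$ of length $m$, and on the right \emph{every} variable $x_1,\ldots,x_m$ on the path is simultaneously frozen to \error. I would reduce part~2 to part~1 by induction on the length $m$ of the dependency, peeling off $\be{x_1}{E[x_m]}$ and observing that $\mathit{deref}_{\mathit{env}}$ and $\mathit{assoc}_{\mathit{env}}$ act on dependencies exactly as {\it deref} and {\it assoc} act on a single binding. The side-condition $\LBVs{\dps{x_1}{x_m}} = \{x_1,\ldots,x_{m-1}\}$ ensures the path is linear with no premature cycle, so the translation to $m$ parallel \error-bindings is faithful.

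The main obstacle, I expect, will be the precise handling of the autonomy condition in the interaction between the $\mathit{env}$-rules and the ordinary rules. Because {\it assoc} and $\mathit{assoc}_{\mathit{env}}$ can flatten a nested {\sf letrec} at any moment, an unrestricted reduction sequence could reshuffle the binding of $x$ out from under the context $E[x]$, breaking the invariant that $x$ stays let-bound at the head of the frame; restricting to $\rarR$ forbids this as the \emph{last} step, but I must still confirm that intermediate {\it assoc} applications preserve the decomposition up to the hypotheses of Corollary~\ref{coro:red_cxt_rec}. Verifying that the step-count $n$ is preserved \emph{exactly} (not merely $n' \le n$ as in the acyclic lemmas) across both directions of the ``iff'' is where the argument is most delicate, and it is what justifies stating this lemma as an equivalence rather than a one-directional simulation.
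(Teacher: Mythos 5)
Your overall template---induction on the length of the reduction, both directions of the ``iff'' carried through the same induction, exact step counts enforced by autonomy---matches the paper's proof, but your plan for part~2 contains a genuine gap that also undermines part~1. You propose to establish part~1 first and then derive part~2 by induction on the chain length $m$, peeling off one dependency at a time. This cannot work, because the two parts are mutually dependent and the paper proves them by \emph{simultaneous} induction on the reduction length: in part~1, when $M = E'[x']$ with $x'$ letrec-bound in $D$, the left-hand configuration decomposes through the context production \letrec{\be{x'}{E},\dps{x}{x'},D}{E'[x]}, i.e.\ a dependency chain of length two, so the inductive step of part~1 already requires the statement of part~2 at shorter reduction length (and with chains that may grow further as evaluation demands more variables). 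Conversely, the right-hand side of that same case is a part~1-shaped configuration, so instances of part~2 must be related back through part~1. This crossing between the two parts, visible in the paper's displayed computation for the case where $M=E'[x']$ with $x'$ in \LBVs{D}, is the heart of the argument and is exactly what your peeling strategy misses: a part~2 configuration whose body demands $x_1$ does not recast as a part~1 configuration demanding $x_m$, since the demand for $x_m$ travels through the bindings rather than through the body, and the right-hand sides do not match either (part~1 freezes a single variable to \error, part~2 freezes all $m$ of them). Any intermediate statement freezing a prefix of the chain would be a new lemma whose proof again needs the simultaneous induction on reduction length.

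Second, your appeal to Corollary~\ref{coro:red_cxt_rec} inside the proof is circular: in the paper that corollary is a \emph{consequence} of this lemma, not an ingredient of its proof. The context-replacement property for the cyclic calculus is obtained from the iff itself---applying the induction hypothesis left-to-right and then right-to-left at strictly shorter length transplants an inner evaluation into a different surrounding context, which is precisely why the lemma is stated as an equivalence with exact step counts rather than a one-directional simulation (a point you correctly sense but attribute to the wrong mechanism). If you insist on invoking the corollary you would have to argue it is available at strictly shorter reduction length, which your sketch does not do. The rest of your machinery (base case, context-independent steps, the error rules shadowing dereference of frozen variables, and the observation that autonomy pins down $n$) is consistent with the paper's proof, but without the simultaneous induction across parts~1 and~2 the induction does not go through.
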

\begin{proof}
First we remark that the autonomy condition uniquely determines $n$
in the if case 
of both the conditions. 
We prove by simultaneous induction on the length of the reductions
with case analysis on the possible reductions.\\
- The case where $M$ is an answer is obvious.\\
- The case where  $M$ reduces independently of the context
is immediate by induction. \\
- The case where $M = E'[x']$ and $\Theta = \Theta_1,\letrec{\be{x'}{N},D_1}{[]},\Theta_2$.
We only prove the if case in 1. The other cases are similar. 
Suppose we have:\\
\hspace*{3ex}$\begin{array}{l}
\Theta_1[\letrec{\be{x'}{N},D_1}{\Theta_2[\letrec{\be{x}{E'[x']},D}[E[x]]]}]\\
\rarR^{n_1} \Theta'_1[\letrec{\be{x'}{\Theta_3[V]},D'_1}{\Theta_2[\letrec{\be{x}{E'[x']},D}[E[x]]]}]\\
\rarr^{n_2} \Theta'_1[\letrec{\be{x'}{V},\flatten{\Theta_3},D'_1}{\Theta_2[\letrec{\be{x}{E'[x']},D}[E[x]]]}]\\
\rar \Theta'_1[\letrec{\be{x'}{V},\flatten{\Theta_3},D'_1}{\Theta_2[\letrec{\be{x}{E'[V]},D}[E[x]]]}]\\
\rarR^{n_3} \Theta'[\letrec{\be{x}{A},D'}{E[x]}]
\end{array}$\\
By ind. hyp., 
$\Theta_1[\letrec{\be{x'}{\error},D_1}{N}]
\rar^{n_1} \Theta'_1[\letrec{\be{x'}{\error},D'_1}{\Theta_3[V]}]$.\\
Hence we have:\\
\hspace*{3ex}$\begin{array}{l}
\Theta_1[\letrec{\be{x'}{N},D_1}{\Theta_2[\letrec{\be{x}{\error},D}{E'[x']}]}]\\
\rarR^{n_1} \Theta'_1[\letrec{\be{x'}{\Theta_3[V]},D'_1}{\Theta_2[\letrec{\be{x}{\error},D}{E'[x']}]}]~~~\mathrm{by~ind.~hyp.}\comment{1:only~if}\\
\rarr^{n_2} \Theta'_1[\letrec{\be{x'}{V}, \flatten{\Theta_3},D'_1}{\Theta_2[\letrec{\be{x}{\error},D}{E'[x']}]}]~~~\mathrm{by}~\mathit{assoc}\\
\rar \Theta'_1[\letrec{\be{x'}{V}, \flatten{\Theta_3},D'_1}{\Theta_2[\letrec{\be{x}{\error},D}{E'[V]}]}]~~~\mathrm{by}~\mathit{deref}\\
\rarr^{n_3} \Theta'[\letrec{\be{x}{\error},D'}{A}]~~~\mathrm{by~ind.~hyp.}\comment{1:if}\\
\end{array}$\\
- The cases where $M = E'[x]$ in 1. and where $M = E'[x_i]$ for some $i$ in $1, \ldots, m$
in 2. are immediate by induction.\\
- The case where $M = E'[x']$ and $x'$ is in \LBVs{D} for the if case in 1.
Suppose we have:\\
\hspace*{3ex}$\begin{array}{l}
\Theta[\letrec{\be{x}{E'[x']},\be{x'}{N},D_1}{E[x]}]\\
\rarR^{n_1} \Theta_1[\letrec{\be{x}{E'[x']},\be{x'}{\Theta_2[V]},D'_1}{E[x]}]\\
\rarr^{n_2} \Theta_1[\letrec{\be{x}{E'[x']},\flatten{\Theta_2}, \be{x'}{V},D'_1}{E[x]}]\\
\rar \Theta_1[\letrec{\be{x}{E'[V]},\flatten{\Theta_2}, \be{x'}{V},D'_1}{E[x]}]\\
\rarR^{n_3} \Theta'[\letrec{\be{x}{A},D'}{E[x]}]
\end{array}$\\
By ind. hyp.\comment{2:if},
$\Theta[\letrec{\be{x}{\error}, \be{x'}{\error}, D_1}{N}] \rarr^{n_1} \Theta_1[\letrec{\be{x}{\error},\be{x'}{\error}, D'_1}{\Theta_2[V]}]$. Hence we have:\\
\hspace*{3ex}$\begin{array}{l}
\Theta[\letrec{\be{x}{\error}, \be{x'}{N},D_1}{E'[x']}]\\
\rarR^{n_1} \Theta_1[\letrec{\be{x}{\error}, \be{x'}{\Theta_2[V]},D'_1}{E'[x']}]~~~\mathrm{by~ind.~hyp.}\comment{1:only if}\\
\rarr^{n_2} \Theta_1[\letrec{\be{x}{\error},\flatten{\Theta_2},\be{x'}{V},D'_1}{E'[x']}]~~~\mathrm{by}~\mathit{assoc}\\
\rar \Theta_1[\letrec{\be{x}{\error},\flatten{\Theta_2},\be{x'}{V},D'_1}{E'[V]}]~~~\mathrm{by}~\mathit{deref}\\
\rarr \Theta'[\letrec{D'}{A}]~~~\mathrm{by~ind.~hyp.}
\end{array}$\\
- The cases where $M = E'[x']$ and $x'$ is in \LBVs{D} for the only if case in 1.
and the if and only if cases in 2. are similar to the above case. 
\end{proof}

\begin{corollary}\label{coro:red_cxt_rec}
For any $\Theta$, $E$ and $x$ such that 
$\Theta[E[x]]$ is a program and $x$ is not in \LBVs{E},
if $\Theta[E[x]] \rarr^n \Theta'[E[V]]$,
then for any $E'$ such that $\Theta[E'[x]]$ is a program and $x$ is
not in \LBVs{E'}, 
$\Theta[E'[x]] \rarr^n \Theta'[E'[V]]$.
\end{corollary}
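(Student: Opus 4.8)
The plan is to derive the corollary from Lemma~\ref{lemma:red_letrec}, whose two parts already decouple the reduction of the binding demanded by $x$ from the evaluation context surrounding it. The guiding observation is that in $\Theta[E[x]] \rarr^n \Theta'[E[V]]$ the context $E$ is returned intact with $V$ plugged into its hole, so $E$ is inert: every reduction step either takes place within $\Theta$ (and the dependencies of $x$) or within the binding of $x$, and $E$ is disturbed only by the final dereference that replaces $x$ by $V$.

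First I would locate $x$. Since $\Theta[E[x]]$ is a program with $x \notin \LBVs{E}$, the variable $x$ is letrec-bound somewhere in $\Theta$, and exactly once because canonical representatives use pairwise distinct binding names; write $\Theta = \Theta_1,\letrec{\be{x}{M},D}{[]},\Theta_2$ with $x \notin \LBVs{\Theta_2}$. The frames $\Theta_2$ sitting between the binding and $E$ are themselves letrec-frames binding variables other than $x$, so they can be folded into a larger evaluation context $\hat{E} = \Theta_2[E]$ with $x \notin \LBVs{\hat{E}}$, giving $\Theta[E[x]] = \Theta_1[\letrec{\be{x}{M},D}{\hat{E}[x]}]$. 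Replacing $E$ by $E'$ then only swaps $\hat{E}$ for $\hat{E}' = \Theta_2[E']$, while $\Theta_1$, $\Theta_2$, $M$ and $D$ are untouched.

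By unique decomposition (Lemma~\ref{lemma:uniquecontext_rec}) the given sequence is forced, and by the remark above it factors as an evaluation of $M$ to an answer $A$, a (possibly empty) run of \textit{assoc} steps flattening $A$, and a final \textit{deref} that produces $V$ in the hole of $\hat{E}$. Isolating the autonomous prefix $\Theta_1[\letrec{\be{x}{M},D}{\hat{E}[x]}] \rarR \Theta_1'[\letrec{\be{x}{A},D'}{\hat{E}[x]}]$ and feeding it to Lemma~\ref{lemma:red_letrec}(1) yields the context-free reduction $\Theta_1[\letrec{\be{x}{\error},D}{M}] \rarr \Theta_1'[\letrec{\be{x}{\error},D'}{A}]$, which names neither $E$ nor $\hat{E}$. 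Running this same context-free reduction backwards through Lemma~\ref{lemma:red_letrec}(1) with the context $\hat{E}'$ rebuilds the autonomous reduction for $E'$ in the same number of steps, and reattaching the identical flatten-and-dereference tail delivers $\Theta[E'[x]] \rarr^n \Theta'[E'[V]]$ of the same total length.

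The main obstacle I expect is the autonomy-and-flattening bookkeeping: Lemma~\ref{lemma:red_letrec} is phrased with $\rarR$ on the context side, so I must argue carefully that the given $\rarr^n$ sequence splits as a genuinely autonomous prefix reaching the answer form plus a context-independent flatten-and-dereference tail --- using the fact that the autonomy condition pins down the length of the prefix --- and that this residual tail contributes equally many steps whether the hole carries $E$ or $E'$. A lesser point is that if the demanded binding is reached through a dependency chain $\dps{x}{x'}$ rather than directly, the concluding step is a $\mathit{deref}_{\mathit{env}}$ and one appeals to Lemma~\ref{lemma:red_letrec}(2) in place of (1); the reasoning is otherwise the same.
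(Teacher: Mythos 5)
Your proposal is correct and takes essentially the same route as the paper's own proof of this corollary: write $\Theta = \Theta_1,\letrec{\be{x}{M},D}{[]},\Theta_2$, factor the given sequence into an autonomous prefix bringing the binding of $x$ to an answer $\Theta_3[V]$, the $\mathit{assoc}$ steps flattening $\Theta_3$, and one closing $\mathit{deref}$, then use both directions of the ``iff'' in Lemma~\ref{lemma:red_letrec}(1) --- with the enlarged contexts $\Theta_2[E]$ and $\Theta_2[E']$ --- to transport the context-independent prefix, the tail contributing the same number of steps for $E$ as for $E'$, so that $n = n_1 + n_2 + 1$ is preserved. One small remark: your closing hedge about dependency chains is vacuous at this level, since $x \notin \LBVs{E}$ and the frames of $\Theta$ all have the form $\letrec{D}{[]}$, so $x$ is always demanded through the production $\letrec{\be{x}{E},D}{E'[x]}$ and the final substitution into the hole of $E$ is a plain $\mathit{deref}$; only part~(1) of Lemma~\ref{lemma:red_letrec} is invoked here, part~(2) being needed only inside that lemma's own induction.
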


We adapt the definition of rooted reductions in an obvious way by
replacing {\sf let} with {\sf let rec}.
A reduction $M \rar M'$ is 
$\beta_{\mathit{need}}$-rooted with argument $N$
if $M = \Theta[\app{(\fun{x}{N'})}{N}]$ and 
$M' = \Theta[\letrec{\be{x}{N}}{N'}]$. A reduction sequence 
$M \rarr M'$ preserves a $\beta_{\mathit{need}}$-root with
argument $N$ if none of (one-step) reductions in the
sequence is $\beta_{\mathit{need}}$-rooted with argument $N$.
The following lemma is proved similarly to Lemma~\ref{lemma:red_app}.

\begin{lemma}\label{lemma:red_app_rec}
For any $\Theta$, $M$ and $N$ such that
$\Theta[\app{M}{N}]$ is a program,
if $\Theta[\app{M}{N}] \rarr^n \Theta'[\app{V}{N}]$
and the reduction sequence preserves a $\beta_{\mathit{need}}$-root
with argument $N$, 
then $\Theta[M] \rarr^{n'} \Theta'[V]$ with $n' \leq n$.
\end{lemma}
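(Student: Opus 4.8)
The plan is to prove the lemma by induction on the length $n$ of the reduction sequence, following closely the argument used for its acyclic counterpart, Lemma~\ref{lemma:red_app}, with Corollary~\ref{coro:red_cxt_rec} taking over the role that Lemma~\ref{lemma:red_cxt} played there. The guiding observation is that, since the sequence preserves a $\beta_{\mathit{need}}$-root with argument $N$, the argument $N$ is never consumed: the only reduction that would substitute $N$ into the body of the applied abstraction is exactly the forbidden $\beta_{\mathit{need}}$-root. Consequently every step acts either inside $M$ or inside the environment $\Theta$ (a $\mathit{lift}$ step merely migrates a $\letrec{D}{[]}$ out of the function part of the distinguished application and into the environment), and this is what makes it legitimate to erase the frame $\app{[]}{N}$ and recover a reduction of $\Theta[M]$.

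Concretely, using the unique-decomposition property (Lemma~\ref{lemma:uniquecontext_rec}) I would perform case analysis on how the first step reduces $M$. When $M$ is an answer, the surrounding frame only permits administrative $\mathit{lift}$ steps, whereas $\Theta[M]$ is already of the form $\Theta'[V]$; thus $n' = 0$ here, and this is precisely the source of the strict inequality $n' < n$, since those $\mathit{lift}$ steps on the distinguished application have no counterpart once the argument frame is gone. When the first redex lies strictly within $M$ or within $\Theta$ without reaching through the frame, the step commutes with erasing $\app{[]}{N}$ and the induction hypothesis applies to the shorter tail. The crux is the case $M = E[x]$ with $x \notin \LBVs{E}$, so that $x$ is bound in $\Theta$ and is being dereferenced; here the sequence splits as $\Theta[\app{(E[x])}{N}] \rarr^{n_1} \Theta_1[\app{(E[V])}{N}] \rarr^{n_2} \Theta'[\app{V}{N}]$, where the first phase computes the value $V$ of $x$'s binding. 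I would apply Corollary~\ref{coro:red_cxt_rec} with the evaluation context $\app{E}{N}$ and the replacement context $E$ to strip the argument frame from the first phase, giving $\Theta[E[x]] \rarr^{n_1} \Theta_1[E[V]]$, and then invoke the induction hypothesis on the $n_2$-step tail to obtain $\Theta_1[E[V]] \rarr^{n'_2} \Theta'[V]$ with $n'_2 \leq n_2$; composing yields $\Theta[M] \rarr^{n_1 + n'_2} \Theta'[V]$ with $n_1 + n'_2 \leq n$.

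I expect the main difficulty to be bookkeeping rather than conceptual. At each appeal to Corollary~\ref{coro:red_cxt_rec} I must verify its side conditions, namely that $\app{E}{N}$ is a genuine evaluation context of $\cyclic$ (which holds by the grammar, as application frames extend contexts) and that $x \notin \LBVs{\app{E}{N}} = \LBVs{E}$; and I must check that each tail subsequence still preserves a $\beta_{\mathit{need}}$-root with argument $N$, so that the induction hypothesis is applicable. The genuinely new wrinkle relative to the acyclic proof is that dereferencing $x$ may, through a dependency chain, terminate in a black hole $\error$ rather than an abstraction, via $\mathit{error}$ or $\mathit{error}_{\mathit{env}}$. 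Since Corollary~\ref{coro:red_cxt_rec} is stated uniformly for an arbitrary value $V$, covering both abstractions and $\error$, the same context-replacement step goes through and these black-hole subcases require no separate treatment.
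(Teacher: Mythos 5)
Your proposal is correct and matches the paper's intent exactly: the paper gives no separate proof for this lemma, stating only that it ``is proved similarly to Lemma~\ref{lemma:red_app}'', and your argument is precisely that adaptation --- induction on $n$ with case analysis on how $M$ reduces, with Corollary~\ref{coro:red_cxt_rec} replacing Lemma~\ref{lemma:red_cxt} in the dereference case $M = E[x]$, just as the paper's own remark (that the three cyclic auxiliary results correspond to Lemmas~\ref{lemma:red_let}, \ref{lemma:red_app} and~\ref{lemma:red_cxt}) prescribes. Your additional observations --- that the preserved-root hypothesis confines all steps to $M$ and $\Theta$, that the answer case absorbs the $\mathit{lift}$ steps with $n'=0$, and that black holes need no separate treatment since the corollary is stated for arbitrary values $V$ including $\error$ --- are sound bookkeeping consistent with the paper's development.
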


Now we are ready to prove the reduction semantics correct with respect
to the natural semantics. 

\begin{proposition}\label{prop:red_then_eval_rec}
For any program $M$, if $M \rarr A$, 
then there exist $\Theta$ and $V$ such that $\Theta[V]$ and $A$ belong
to the same $\alpha$-equivalence class and \eval{}{M}{\decomp{\Theta}}{V}.
\end{proposition}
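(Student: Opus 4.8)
The plan is to follow the structure of the proof of Proposition~\ref{prop:red_then_eval} for the acyclic case, arguing by induction on the length of the reduction $M \rarr A$ and replacing the acyclic auxiliary results by their cyclic counterparts, Lemma~\ref{lemma:red_letrec}, Lemma~\ref{lemma:red_app_rec} and Corollary~\ref{coro:red_cxt_rec}. As there, I would assume without loss of generality that $\Theta[V]$ and $A$ are syntactically identical, write $M = \Theta'[M']$ with $\Theta'$ the maximal prefix of {\sf let rec}-frames so that $M'$ is not itself a {\sf let rec}, and do a case analysis on $M'$. When $M'$ is a value, that is an abstraction or a black hole \error, the program $M = \Theta'[M']$ is already an answer, hence $A = M$, and the derivation is built by applying {\it Letrec} once for every frame of $\Theta'$ and closing with {\it Value}; these are the base cases.

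For $M' = \app{M_1}{M_2}$ I would branch on whether $M_1$ evaluates to an abstraction or to \error, reflecting the two value shapes of \cyclic. In the first branch the reduction factors as $\Theta'[\app{M_1}{M_2}] \rarr \Theta_1[\app{(\fun{x}{M_3})}{M_2}] \rar \Theta_1[\letrec{\be{x}{M_2}}{M_3}] \rarr \Theta[V]$, where the leading segment preserves a $\beta_{\mathit{need}}$-root with argument $M_2$; Lemma~\ref{lemma:red_app_rec} then extracts the shorter reduction $\Theta'[M_1] \rarr \Theta_1[\fun{x}{M_3}]$, the induction hypothesis gives \eval{}{\Theta'[M_1]}{\decomp{\Theta_1}}{\fun{x}{M_3}}, and a second use of the induction hypothesis on $\Theta_1[\letrec{\be{x}{M_2}}{M_3}]$ together with {\it Application} assembles the conclusion. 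In the second branch $M_1$ evaluates to \error\ and the reduction finishes with $\mathit{error}_{\beta}$; here Lemma~\ref{lemma:red_app_rec} with $V = \error$ and the induction hypothesis give \eval{}{\Theta'[M_1]}{\decomp{\Theta_1}}{\error}, from which $\mathit{Error}_{\beta}$ concludes.

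The variable case $M' = x$ is the heart of the argument and the step I expect to resist the most. Writing $\Theta' = \Theta_1,\letrec{\be{x}{N},D}{[]},\Theta_2$, the demanded binding $N$ is reduced in place before $x$ is dereferenced, so I would invoke Lemma~\ref{lemma:red_letrec} to convert this in-place autonomous reduction of $N$ into a top-level reduction of $N$ in which $x$ is rebound to \error\ (its two clauses covering respectively a directly bound demanded variable and one reached through a dependency chain \dps{x}{x'}). The induction hypothesis applied to this shorter top-level reduction yields a derivation whose heap carries $x \mapsto \error$, which is exactly the premise \eval{\update{\Psi}{x}{\error}}{N}{\Phi}{V} of the {\it Variable} rule; reading that rule downwards reconstructs \eval{}{\Theta'[x]}{\decomp{\Theta}}{V} with $x$ finally updated to $V$. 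The black-hole subcase, where $N$ demands $x$ itself and the reduction fires {\it error} or $\mathit{error}_{\mathit{env}}$, needs no separate treatment: once $x$ has been rebound to \error, demanding $x$ merely retrieves the value \error\ by {\it Value}, so the same argument produces $V = \error$.

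The delicate points, where I would concentrate the effort, are the bookkeeping obligations peculiar to the cyclic setting. First, the autonomy discipline recorded by \rarR\ is what lets Lemma~\ref{lemma:red_letrec} fix the number of extracted steps, ensuring the induction proceeds on a strictly shorter reduction. Second, matching the resulting heap against \decomp{\Theta} requires reconciling the $\error$-bindings that {\it error} and $\mathit{error}_{\mathit{env}}$ introduce and that \decomp{\cdot} records with the $x \mapsto \error$ marking that the {\it Variable} rule installs while a binding is under evaluation. Finally, since the demanded $x$ sits beneath the inner frames $\Theta_2$ whose bindings are irrelevant to evaluating $N$, I would appeal to Lemma~\ref{lemma:eval_heap_mono_rec} to add and discard those bindings while preserving derivation depth, which is precisely what is needed to thread the induction through.
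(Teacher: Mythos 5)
Your proposal matches the paper's proof essentially step for step: the same induction on reduction length with the decomposition $M = \Theta'[M']$, the same use of Lemma~\ref{lemma:red_app_rec} for the application case (with the \error\ branch handled via $\mathit{Error}_{\beta}$, which the paper dismisses as ``similar''), and the same treatment of the variable case via Lemma~\ref{lemma:red_letrec} to rebind $x$ to \error, followed by Lemma~\ref{lemma:eval_heap_mono_rec} to thread in the irrelevant $\Theta_2$ bindings before closing with the \textit{Variable} rule. Your additional remarks on autonomy fixing the step count and on the black-hole subcase being subsumed are accurate glosses on what the paper leaves implicit, not deviations.
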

\begin{proof}
Without loss of generality, we assume 
$\Theta[V]$ and $A$ are syntactically identical. 
We prove by induction on the length of the reductions of $M$. 
Let $M = \Theta'[M']$ with $M' \not= \letrec{D}{N}$. We perform case analysis 
on $M'$.\\
- The case of an answer is obvious.\\
- Suppose $M =\app{M_1}{M_2}$ and we have:\\
\hspace*{3ex}$\begin{array}{l}
\Theta'[\app{M_1}{M_2}]
\rarr \Theta_1[\app{(\fun{x}{N})}{M_2}]
\rar \Theta_1[\letrec{\be{x}{M_2}}{N}]
\rarr \Theta[V]
\end{array}$\vspace{.5ex}\\
By Lemma~\ref{lemma:red_app_rec} and ind. hyp., \eval{}{\Theta'[M_1]}
{\decomp{\Theta_1}}{\fun{x}{N}}.
By ind. hyp.,  \\
\eval{}{\Theta_1[\letrec{\be{x}{M_2}}{N}]}{\decomp{\Theta}}{V}. 
Thus we deduce \eval{}{\Theta'[\app{M_1}{M_2}]}{\decomp{\Theta}}{V}.\\
- The case where $M =\app{M_1}{M_2}$ and $M_1$ reduces to \error\ is similar to the above
case.\\
- Suppose $M = x$ and $\Theta = \Theta_1,\letrec{\be{x}{N},D}{[]},\Theta_2$ 
and we have:\\
\hspace*{3ex}$\begin{array}{l}
\Theta_1[\letrec{\be{x}{N},D}{\Theta_2[x]}]\\
\rarR^n\ \Theta'_1[\letrec{\be{x}{\Theta_3[V]},D_1}{\Theta_2[x]}]\\
\rarr \Theta'_1[\letrec{\be{x}{V}, \flatten{\Theta_3}, D_1}{\Theta_2[x]}]\\
\rar \Theta'_1[\letrec{\be{x}{V}, \flatten{\Theta_3}, D_1}{\Theta_2[V]}]
\end{array}$\\
By Lemma~\ref{lemma:red_letrec}, 
$\Theta_1[\letrec{\be{x}{\error},D}{N}]
\rarr^n\ \Theta'_1[\letrec{\be{x}{\error}, D_1}{\Theta_3[V]}]$.
By ind. hyp., \eval{}{\Theta_1[\letrec{\be{x}{\error},D}{N}]}
{\decomp{\Theta'_1,\letrec{\be{x}{\error}, D_1}{[]}, \Theta_3}}{V}.
By Lemma~\ref{lemma:eval_heap_mono_rec}, 
\eval{\decomp{\Theta_1,\letrec{\be{x}{\error},D}{[]},\Theta_2}}{N}
{\decomp{\Theta'_1,\letrec{\be{x}{\error}, D_1}{[]}, \Theta_3,\Theta_2}}{V}.
Thus we deduce
\eval{}{\Theta_1[\letrec{\be{x}{N},D}{\Theta_2[x]}]}
{\decomp{\Theta'_1,\letrec{\be{x}{V}, \flatten{\Theta_3}, D_1}{[]}, \Theta_2}}{V}.
\end{proof}

Collecting all propositions together, 
we prove equivalence of the two semantics. 

\begin{theorem}\label{theorem:equiv_cyclic}
For any program $M$, the following two conditions hold:
\begin{enumerate}
\item if $M \rarr A$ then there exist $\Theta$ and $V$ such that
$\Theta[V]$ and $A$ belong to the same $\alpha$-equivalence class and 
\eval{}{M}{\decomp{\Theta}}{V}
\item if \eval{}{M}{\Psi}{V} then $M \rarr \Theta[V]$ where $\decomp{\Theta} = \Psi$.
\end{enumerate}
\end{theorem}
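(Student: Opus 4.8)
The plan is to assemble the final equivalence from the three bridging results already established for \cyclic\ --- Proposition~\ref{prop:red_then_eval_rec}, Proposition~\ref{prop:eval_then_Eval_rec} together with Lemma~\ref{lemma:Eval_heap_grow_rec}, and Proposition~\ref{prop:Eval_then_red_rec} --- exactly mirroring the proof of the corresponding theorem for \acyclic\ in Section~\ref{sec:acyclic}. The two directions are handled independently, and neither should require work beyond threading the earlier statements together.

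For direction~1 ($M \rarr A$ yielding a natural-semantics derivation), I would simply invoke Proposition~\ref{prop:red_then_eval_rec}, whose conclusion already coincides with the claim: it produces $\Theta$ and $V$ with $\Theta[V]$ in the same $\alpha$-equivalence class as $A$ and with \eval{}{M}{\decomp{\Theta}}{V}. No further argument is needed here.

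For direction~2 (a derivation yielding $M \rarr \Theta[V]$), I would route through the instrumented natural semantics. Since $M$ is a program, the configuration \evalin{\rt}{M} is well-formed and $\decomp{\rt}$ is the empty heap, so Proposition~\ref{prop:eval_then_Eval_rec} applies with $\Sigma = \rt$ and gives \Eval{}{M}{\Sigma'}{V} with $\decomp{\Sigma'} = \Psi$. The one point needing care is that the theorem demands the result heap $\Theta$ be a \emph{sequence of letrec-frames}, whereas $\Sigma'$ is a priori an arbitrary structured heap. I would close this gap with Lemma~\ref{lemma:Eval_heap_grow_rec}, which gives \hpleq{\rt}{\Sigma'}: because the left heap is empty, the injection $\iota$ in the definition of \hpleq{}{} has empty range, so condition~3 forces every frame of $\Sigma'$ to have the form \letrec{D}{[]}; hence $\Sigma'$ is indeed a $\Theta$, and $\decomp{\Theta} = \Psi$. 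Finally Proposition~\ref{prop:Eval_then_red_rec} converts the instrumented derivation into $M = \comp{\rt}[M] \rarr \Sigma'[V] = \Theta[V]$, completing the direction.

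The main obstacle lies entirely in the supporting results rather than in this assembly: the genuinely delicate step was discharged in Proposition~\ref{prop:red_then_eval_rec}, which leans on Lemma~\ref{lemma:red_letrec} and the autonomous-reduction machinery to reorganize cyclic dependency chains and black-hole propagation. At the level of the theorem itself the only subtlety worth double-checking is the coercion of $\Sigma'$ into a $\Theta$, since the cyclic preorder \hpleq{}{} is coarser than its acyclic counterpart --- condition~2 permits the enclosed bindings to grow along $\fleq{}{}$ rather than staying fixed. I would therefore verify explicitly that condition~3 still constrains the frame \emph{shape} to \letrec{D}{[]}, even though the definitions $D$ themselves are left unconstrained, so that $\Sigma'$ is legitimately a Letrec's sequence and the substitution $\Sigma'[V] = \Theta[V]$ is justified.
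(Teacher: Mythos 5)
Your proposal is correct and follows exactly the paper's own proof: direction~1 is immediate from Proposition~\ref{prop:red_then_eval_rec}, and direction~2 chains Proposition~\ref{prop:eval_then_Eval_rec} with Lemma~\ref{lemma:Eval_heap_grow_rec} to obtain \Eval{}{M}{\Theta}{V} with $\decomp{\Theta} = \Psi$, then applies Proposition~\ref{prop:Eval_then_red_rec}. Your explicit check that $\hpleq{\rt}{\Sigma'}$ forces every frame of $\Sigma'$ to have the shape \letrec{D}{[]} (via condition~3 of the preorder, the injection having empty range) is precisely the detail the paper leaves implicit in its appeal to Lemma~\ref{lemma:Eval_heap_grow_rec}, so you have simply made the same argument more careful.
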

\begin{proof}
1: By Proposition~\ref{prop:red_then_eval_rec}. 
2: By Proposition~\ref{prop:eval_then_Eval_rec}
and Lemma~\ref{lemma:Eval_heap_grow_rec}, 
\Eval{}{M}{\Theta}{V} with $\decomp{\Theta} = \Psi$.
By Proposition~\ref{prop:Eval_then_red_rec}, $M \rarr \Theta[V]$.
\end{proof}

\subsection{Adequacy}\label{sec:byname}

In this subsection we state that the natural semantics is adequate
using a denotational semantics in the style of Launchbury
\cite{bigstepforlazy}. 
We adapt his proof strategy with minor modifications. 
A gentle explanation of the strategy is referred to his paper. 

\medskip

We define the denotational semantics for {\it pure} expressions of
\cyclic. A program $M$ is {\it pure} if it does not contain black
holes. The denotational semantics models functions
by a lifted function space \cite{fullabsforlazy}. 
We represent lifting using \lift, and
projection using \proj\ (written as a postfix operator). 
Let \Values\ be some appropriate domain containing at least a lifted
version of its own function space. {\it Environments}, ranged over by
$\rho$, are functions from \Vars\ to \Values, where 
\Vars\ denotes the infinitely many set of variables of \cyclic. 
The notation \supp{\rho} denotes the support of $\rho$, or
$\supp{\rho} = \{x \mid \rho(x) \not= \bot\}$. 
The notation 
$\{\maps{x_1}{z_1}, \ldots, \maps{x_n}{z_n}\}$ where $z_i$'s are
elements of \Values\ denotes an environment $\rho$ such that
\supp{\rho} = $\{x_1, \ldots, x_n\}$ and $\rho(x_i)$ = $z_i$ 
for all $i$ in $1, \ldots, n$.
The notation $\rho_{\bot}$ denotes an ``initial'' environment which maps all
variables to \undef, i.e. \supp{\rho_{\bot}} = $\emptyset$.

The semantic functions \semax{M}{\rho} and \Semax{D}{\rho}
respectively give meanings to the expression $M$ and the bindings 
$D$ under the  environment $\rho$. The former returns an element from
{\it Value} and the latter an environment. They are defined by mutual
recursion as follows:
\[
\begin{array}{l}
\semax{\fun{x}{M}}{\rho} = \lift\ (\fun{\nu}{\semax{M}{\rho\upp\{\maps{x}{\nu}\}}})\\
\semax{\app{M}{N}}{\rho} = (\semax{M}{\rho}) \proj (\semax{N}{\rho})\\
\semax{x}{\rho} = \rho(x)\\
\semax{\letrec{\be{x_1}{M_1},\ldots, \be{x_n}{M_n}}{N}}{\rho}
= \semax{N}{\Semax{\be{x_1}{M_1},\ldots, \be{x_n}{M_n}}{\rho}}\\
\Semax{\be{x_1}{M_1},\ldots, \be{x_n}{M_n}}{\rho} 
= \fix{\rho'}{\rho\upp\{\maps{x_1}{\semax{M_1}{\rho'}}, \ldots, \maps{x_n}{\semax{M_n}{\rho'}}\}}
\end{array}
\]
where $\mu$ denotes the least fixed point operator.  
\Semax{D}{\rho} is defined only when $\rho$ is consistent with $D$,
i.e., if $\rho$ and $D$ bind the same variable, then they maps the
variable to values for which an upper bound exists. 
The semantic function for heaps is defined in the same way as that for
bindings by identifying a heap with an unordered sequence of
bindings. 

We define an order on environments such that \envleq{\rho}{\rho'} if
for all $x$ in \supp{\rho}, $\rho(x) = \rho'(x)$.

We revise the natural semantics for \cyclic\ so that it gets
stuck when direct cycles are encountered as in Launchbury's semantics. 
Therefore we replace the {\it Variable} rule of
figure \ref{fig:eval_rec} by the following alternative:
\[
\infer{
  \evaltotal{\Psi}{x}{\update{\Phi}{x}{V}}{V}{X}
}{
  x \in \dom{\Psi}
  &\evaltotal{\restr{\Psi}{x}}{\Psi(x)}{\Phi}{V}{X\cup \{x\}}
}
\]
The notation \restr{\Psi}{x} denotes the restriction of $\Psi$
to $\dom{\Psi}\backslash \{x\}$. We use $\downarrow$ instead of
$\Downarrow$ to denote the revised semantics.

\begin{lemma}\label{lemma:eval_iff_evaltotal}
For any pure expression $M$,
\eval{}{M}{\Psi}{\fun{x}{N}} iff \evaltotal{}{M}{\Psi}{\fun{x}{N}}{\emp}.
\end{lemma}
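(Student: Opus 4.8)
The plan is to prove a single generalized statement by induction and recover the lemma as the empty-heap instance. For a pure heap $\Psi$ and a finite set of variables $X$ disjoint from $\dom{\Psi}$, write $\Psi_X$ for the original-style heap $\Psi \cup \{\maps{y}{\error} \mid y \in X\}$. Intuitively $\Psi_X$ records the correspondence between the two formulations: the variables that the revised semantics keeps in its tracking set $X$ are exactly those that the original semantics temporarily binds to $\error$. I would show that whenever $\Psi$ and $M$ are pure and $\evalin{\Psi_X}{M}$ is closed, then for every abstraction $\fun{x}{N}$
\[
\eval{\Psi_X}{M}{\Phi_X}{\fun{x}{N}} \quad\Longleftrightarrow\quad \evaltotal{\Psi}{M}{\Phi}{\fun{x}{N}}{X},
\]
where the left-to-right direction also asserts that the original result heap has the form $\Phi_X$ for a pure $\Phi$ with $X \cap \dom{\Phi} = \emptyset$. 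Instantiating with $\Psi = \rt$ and $X = \emptyset$ (so $\Psi_X = \rt$) yields the lemma, with $\Phi$ playing the role of the result heap $\Psi$.

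The observation that drives both directions is that, for a pure configuration, no rule ever \emph{manufactures} a black hole: the revised semantics inserts no $\error$, so every $\downarrow$-derivation stays within pure heaps and returns an abstraction; dually, an original $\Downarrow$-derivation returning an abstraction can use neither $\mathit{Error}_{\beta}$ nor the $\error$-returning instance of {\it Variable}, since dereferencing a black-holed variable evaluates $\error$ and hence returns $\error$ through {\it Value}, which would propagate $\error$ all the way to the root and contradict the root value $\fun{x}{N}$. Thus in an abstraction-producing derivation the black-holed bindings are never dereferenced, so whether a variable is physically present as $\maps{x}{\error}$ (original) or absent-but-tracked in $X$ (revised) is immaterial. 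I would isolate this, together with the fact that every black hole introduced inside a subderivation is overwritten by a value when that variable's evaluation completes (so the surviving black holes in any result heap are exactly those indexed by the ambient $X$), as a short preliminary remark proved by an easy induction in the spirit of the closedness-preservation arguments already used.

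For $\Leftarrow$ I would induct on the $\downarrow$-derivation. The cases {\it Value}, {\it Application}, {\it Letrec}, $\mathit{Error}_{\beta}$ thread heaps unchanged through the $(-)_X$ translation and are routine (the last being vacuous for pure input). The only rule touching $X$ is {\it Variable}: from $\evaltotal{\restr{\Psi}{x}}{\Psi(x)}{\Phi}{\fun{x}{N}}{X \cup \{x\}}$ one concludes $\evaltotal{\Psi}{x}{\update{\Phi}{x}{\fun{x}{N}}}{\fun{x}{N}}{X}$. Here $(\restr{\Psi}{x})_{X \cup \{x\}} = \update{\Psi_X}{x}{\error}$ (using $x \notin X$, from $x \in \dom{\Psi}$), so the induction hypothesis supplies precisely the premise $\eval{\update{\Psi_X}{x}{\error}}{\Psi(x)}{\Phi_{X\cup\{x\}}}{\fun{x}{N}}$ of the original {\it Variable} rule, whose conclusion is $\eval{\Psi_X}{x}{\update{\Phi_{X\cup\{x\}}}{x}{\fun{x}{N}}}{\fun{x}{N}}$; and $\update{\Phi_{X\cup\{x\}}}{x}{\fun{x}{N}} = (\update{\Phi}{x}{\fun{x}{N}})_X$, matching the claimed shape. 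For $\Rightarrow$ I would induct on the $\Downarrow$-derivation, first invoking the preliminary remark to exclude $\mathit{Error}_{\beta}$ and the $\error$-returning {\it Variable} instance so that every node returns an abstraction and the same case analysis runs in reverse; in the {\it Variable} case one has $x \in \dom{\Psi}$ (were $x \in X$, the rule would return $\Psi_X(x) = \error$), the premise evaluates $\Psi(x)$ in $\update{\Psi_X}{x}{\error} = (\restr{\Psi}{x})_{X\cup\{x\}}$, and the hypothesis applies and certifies the $(-)_{X\cup\{x\}}$-form of the result heap.

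The main obstacle I anticipate is making the heap bookkeeping in the {\it Variable} case airtight: one must confirm that the black hole installed at $x$ for the subderivation is exactly the one overwritten by $\fun{x}{N}$ in the conclusion, that no other black holes persist, and that the $(-)_X$ identification is respected end to end. The two facts from the preliminary remark — that dereferencing a black hole forces an $\error$ result, and that intra-subderivation black holes are always cleaned up — are what license treating $\Psi_X$-heaps up to the tracking set, and I would state them explicitly before the main induction so both directions can invoke them uniformly.
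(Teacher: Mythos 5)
The paper states this lemma without any proof (it sits bare in the adequacy subsection, whose arguments are deferred to Launchbury's strategy), so there is no official argument to compare against; judged on its own, your proof is correct and is the natural way to discharge the lemma. Your generalized invariant --- heaps related by the translation $\Psi_X$ that materializes the tracking set as $\error$-bindings, with the identity $(\restr{\Psi}{x})_{X\cup\{x\}} = \update{\Psi_X}{x}{\error}$ driving the {\it Variable} case --- together with the preliminary remark that an $\error$ result propagates unconditionally to the root (so an abstraction-rooted $\Downarrow$-derivation can use neither $\mathit{Error}_{\beta}$ nor a black-hole dereference) is exactly what is needed, and I verified the error-propagation claim holds for every rule of figure~\ref{fig:eval_rec}. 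One point to make explicit in a full writeup: the freshness side-conditions of the $\downarrow$ system must be read as freshness relative to $\dom{\Psi} \cup X$ (this is the very purpose of threading $X$, as in the acyclic semantics of figure~\ref{fig:eval}), which is what matches freshness relative to $\dom{\Psi_X}$ on the $\Downarrow$ side; this lets you synchronize the choice of fresh names in the two derivations so that the result heaps come out literally identical, as the lemma's statement (same $\Psi$ on both sides) requires, rather than merely $\alpha$-equivalent.
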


A heap $\Psi$ is pure if for all $x$ in \dom{\Psi}, $\Psi(x)$ is
pure. A configuration \evalin{\Psi}{M} is pure if both $\Psi$ and $M$ are
pure.

\begin{lemma}\label{lemma:evaltotal_correct_to_byname}
If \evalin{\Psi}{M} is pure and 
\evaltotal{\Psi}{M}{\Phi}{V}{X},
then for any environment $\rho$,
\semax{M}{\Semax{\Psi}{\rho}} = \semax{V}{\Semax{\Phi}{\rho}}
and  \envleq{\Semax{\Psi}{\rho}}{\Semax{\Phi}{\rho}}.
\end{lemma}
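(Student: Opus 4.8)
The plan is to argue by induction on the derivation of $\evaltotal{\Psi}{M}{\Phi}{V}{X}$, with a case analysis on the last rule of the revised semantics. Before starting I would record a few routine facts about the denotational semantics that mirror Launchbury's \cite{bigstepforlazy}: a coincidence lemma stating that $\semax{M}{\rho}$ depends only on $\rho$ restricted to $\FVs{M}$ (so $\semax{M}{\rho} = \semax{M}{\rho'}$ whenever $\envleq{\rho}{\rho'}$ and $\FVs{M} \subseteq \supp{\rho}$); a renaming lemma $\semax{\subst{M}{x'}{x}}{\rho} = \semax{M}{\rho\upp\{\maps{x}{\rho(x')}\}}$ for $x'$ not captured; the fixed-point property $\Semax{\Psi}{\rho}(x) = \semax{\Psi(x)}{\Semax{\Psi}{\rho}}$ for $x \in \dom{\Psi}$; and a fresh-extension property, namely $\envleq{\Semax{\Psi}{\rho}}{\Semax{\update{\Psi}{x'}{M}}{\rho}}$ when $x'$ is fresh, the new environment sending $x'$ to $\semax{M}{\Semax{\Psi}{\rho}}$. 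I would also note a purity-preservation observation: evaluating a pure configuration never fires $\mathit{Error}_\beta$ and never returns $\error$ (the revised {\it Variable} rule gets stuck on direct cycles rather than producing a black hole), so $\Phi$ and $V$ are pure and only the cases {\it Value}, {\it Application}, {\it Variable} and {\it Letrec} can occur.

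The cases {\it Value} and {\it Letrec} are straightforward. For {\it Value} both statements are reflexivities. For {\it Letrec}, applying the induction hypothesis to the premise $\evaltotal{\update{\Psi}{x'_i}{M'_i}}{N'}{\Phi}{V}{X}$ and unfolding the definition of $\semax{\letrec{\ldots}{N}}{\Semax{\Psi}{\rho}}$, the renaming lemma lets me identify the local letrec environment with the fixed point $\Semax{\update{\Psi}{x'_i}{M'_i}}{\rho}$ obtained by adjoining the renamed bindings to the heap; the ordering follows by composing $\envleq{\Semax{\Psi}{\rho}}{\Semax{\update{\Psi}{x'_i}{M'_i}}{\rho}}$ with the ordering from the induction hypothesis. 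The {\it Application} case is where the auxiliary lemmas are chained: the induction hypothesis on $\evaltotal{\Psi}{M_1}{\Phi}{\fun{x}{N}}{X}$ gives $\semax{M_1}{\Semax{\Psi}{\rho}} = \semax{\fun{x}{N}}{\Semax{\Phi}{\rho}}$, so after projecting and applying to $\semax{M_2}{\Semax{\Psi}{\rho}}$ I obtain $\semax{N}{\Semax{\Phi}{\rho}\upp\{\maps{x}{\semax{M_2}{\Semax{\Phi}{\rho}}}\}}$, using coincidence to replace $\Semax{\Psi}{\rho}$ by $\Semax{\Phi}{\rho}$ in the argument (their agreement on $\FVs{M_2}$ comes from the ordering in the hypothesis). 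The renaming and fresh-extension lemmas then rewrite this as $\semax{\subst{N}{x'}{x}}{\Semax{\update{\Phi}{x'}{M_2}}{\rho}}$, to which the induction hypothesis on the second premise applies, yielding $\semax{V}{\Semax{\Psi'}{\rho}}$; the ordering is the transitive composition of the two hypotheses with the fresh extension.

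The genuine obstacle, exactly as in Launchbury's development, is the {\it Variable} case, where the heap loses $x$ in the premise ($\evaltotal{\restr{\Psi}{x}}{\Psi(x)}{\Phi}{V}{X\cup\{x\}}$) and regains it as $\update{\Phi}{x}{V}$ in the conclusion. Here the induction hypothesis supplies $\envleq{\Semax{\restr{\Psi}{x}}{\rho}}{\Semax{\Phi}{\rho}}$ together with $\semax{\Psi(x)}{\Semax{\restr{\Psi}{x}}{\rho}} = \semax{V}{\Semax{\Phi}{\rho}}$. I would reduce the whole case to establishing $\envleq{\Semax{\Psi}{\rho}}{\Semax{\update{\Phi}{x}{V}}{\rho}}$: once this holds the two environments agree at $x$, so $\semax{x}{\Semax{\Psi}{\rho}} = \Semax{\Psi}{\rho}(x) = \Semax{\update{\Phi}{x}{V}}{\rho}(x) = \semax{V}{\Semax{\update{\Phi}{x}{V}}{\rho}}$ by the fixed-point property, which is the required equality. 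To obtain the ordering I would show that $\Semax{\update{\Phi}{x}{V}}{\rho}$ is a fixed point of the functional defining $\Semax{\Psi}{\rho}$: on each binding $y \mapsto \Psi(y)$ with $y \neq x$ this uses coincidence and $\envleq{\Semax{\restr{\Psi}{x}}{\rho}}{\Semax{\Phi}{\rho}}$, while on $x \mapsto \Psi(x)$ it uses the value equality to match $\semax{\Psi(x)}{-}$ with $\semax{V}{-} = \Semax{\update{\Phi}{x}{V}}{\rho}(x)$; since $\Semax{\Psi}{\rho}$ is the least such environment the ordering follows. The care needed is that $\leq$ is the ``extension'' order rather than the domain order, so I must track supports along $\dom{\Psi} \subseteq \dom{\update{\Phi}{x}{V}}$ and verify agreement pointwise rather than appeal to a generic least-fixed-point inequality; getting this fixed-point matching right in the presence of the cyclic heap is the crux of the proof.
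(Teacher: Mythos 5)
The paper states this lemma \emph{without proof}, deferring to Launchbury's denotational development (``we adapt his proof strategy with minor modifications''), and your proposal is exactly that intended adaptation: induction on the derivation of \evaltotal{\Psi}{M}{\Phi}{V}{X}, with renaming/coincidence/fresh-extension lemmas absorbing this paper's explicit $\alpha$-renaming in {\it Application} and {\it Letrec}, a purity-preservation observation ruling out $\mathit{Error}_\beta$ and black-hole values (needed since \semax{\cdot}{\cdot} is only defined on pure terms), and the {\it Variable} case resolved by showing \Semax{\update{\Phi}{x}{V}}{\rho} is a fixed point of the functional defining \Semax{\Psi}{\rho} and appealing to leastness. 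This matches Launchbury's Theorem~2 argument that the authors cite, and your closing caveat is well placed: the tension between $\mu$ giving leastness in the domain order while \envleq{}{} is the extension order is precisely the delicate spot in Launchbury's original proof of this step, so the pointwise support-tracking you promise there is genuinely where the remaining work lies.
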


The following proposition 
states that derivations preserve non-bottom meanings of pure expressions.

\begin{proposition}\label{prop:eval_correct_to_byname}
For any pure program $M$, if  \eval{}{M}{\Psi}{\fun{x}{N}}
then \semax{M}{\rho_{\bot}} = \semax{\fun{x}{N}}{\Semax{\Psi}{\rho_{\bot}}}.
\end{proposition}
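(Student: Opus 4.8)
The plan is to read off the statement as a direct corollary of the two preceding lemmas, specialised to an empty initial heap and the environment $\rho_{\bot}$. The only genuine work---the induction on derivations---has already been carried out in Lemma~\ref{lemma:evaltotal_correct_to_byname}, so at this level the proof is essentially bookkeeping: chain the lemmas and simplify the empty-heap denotation.

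First I would pass from the original natural semantics to the revised (stuck-on-direct-cycles) semantics. Since $M$ is a pure program, Lemma~\ref{lemma:eval_iff_evaltotal} converts the hypothesis \eval{}{M}{\Psi}{\fun{x}{N}} into \evaltotal{}{M}{\Psi}{\fun{x}{N}}{\emp}. The configuration \evalin{\rt}{M} is pure, because its heap is empty and $M$ is pure by assumption, so the hypotheses of Lemma~\ref{lemma:evaltotal_correct_to_byname} are met.

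Next I would apply Lemma~\ref{lemma:evaltotal_correct_to_byname} to this derivation, taking the arbitrary environment $\rho$ to be $\rho_{\bot}$, its source heap (the $\Psi$ of that lemma) to be $\rt$, its target heap (the $\Phi$ of that lemma) to be the result heap $\Psi$, and its value to be \fun{x}{N}. This yields $\semax{M}{\Semax{\rt}{\rho_{\bot}}} = \semax{\fun{x}{N}}{\Semax{\Psi}{\rho_{\bot}}}$; the accompanying monotonicity conclusion \envleq{\Semax{\rt}{\rho_{\bot}}}{\Semax{\Psi}{\rho_{\bot}}} is not needed here. To finish I would simplify the left-hand side using $\Semax{\rt}{\rho_{\bot}} = \rho_{\bot}$: the semantic function on the empty sequence of bindings is \fix{\rho'}{\rho}, whose least fixed point is $\rho$ itself, so at $\rho = \rho_{\bot}$ it returns $\rho_{\bot}$. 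Substituting gives exactly $\semax{M}{\rho_{\bot}} = \semax{\fun{x}{N}}{\Semax{\Psi}{\rho_{\bot}}}$.

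There is no real obstacle at this level; any difficulty lives inside Lemma~\ref{lemma:evaltotal_correct_to_byname}, whose proof must handle the revised {\it Variable} rule (where the binding of $x$ is restricted away while $\Psi(x)$ is evaluated) together with the consistency side-conditions on $\Semax{\cdot}{\cdot}$. The single point worth double-checking in the present argument is that the empty-heap denotation collapses to $\rho_{\bot}$, which is immediate from the defining fixed-point equation for \Semax{\cdot}{\cdot}.
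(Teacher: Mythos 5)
Your proposal is correct and follows essentially the same route as the paper's proof, which likewise chains Lemma~\ref{lemma:eval_iff_evaltotal} and Lemma~\ref{lemma:evaltotal_correct_to_byname} instantiated at $\rho_{\bot}$ with empty initial heap. The only difference is that you make explicit the simplification $\Semax{\rt}{\rho_{\bot}} = \rho_{\bot}$, which the paper leaves implicit; this is a harmless (indeed helpful) addition.
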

\begin{proof}
By Lemma \ref{lemma:eval_iff_evaltotal}, 
\evaltotal{}{M}{\Psi}{\fun{x}{N}}{\emp}.
By Lemma \ref{lemma:evaltotal_correct_to_byname},
\semax{M}{\rho_{\bot}} = \semax{V}{\Semax{\Psi}{\rho_{\bot}}}
\end{proof}

Next we characterize when derivations exist. 

\begin{lemma}
If \evalin{\Psi}{M} is pure and
\evaltotal{\Psi}{M}{\Phi}{\fun{x}{N}}{X} 
then \semax{M}{\Semax{\Phi}{\rho_{\bot}}} $\not= \bot$.
\end{lemma}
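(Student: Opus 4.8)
The plan is to read the result off Lemma~\ref{lemma:evaltotal_correct_to_byname}, combining \emph{both} of its conclusions with the fact that the denotation of a value never hits the bottom of the lifted function space and with monotonicity of $\semax{\cdot}{\cdot}$ in its environment argument. Concretely, I would first instantiate Lemma~\ref{lemma:evaltotal_correct_to_byname} at the value $V = \fun{x}{N}$ and the environment $\rho = \rho_{\bot}$. Since \evalin{\Psi}{M} is pure and \evaltotal{\Psi}{M}{\Phi}{\fun{x}{N}}{X}, this yields simultaneously the equation $\semax{M}{\Semax{\Psi}{\rho_{\bot}}} = \semax{\fun{x}{N}}{\Semax{\Phi}{\rho_{\bot}}}$ and the order $\envleq{\Semax{\Psi}{\rho_{\bot}}}{\Semax{\Phi}{\rho_{\bot}}}$. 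Here $\Semax{\Phi}{\rho_{\bot}}$ is well defined, because $\rho_{\bot}$ has empty support and is hence trivially consistent with $\Phi$.

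Next I would note that the right-hand side of the equation is non-bottom. By definition $\semax{\fun{x}{N}}{\rho'} = \lift\ (\fun{\nu}{\semax{N}{\rho'\upp\{\maps{x}{\nu}\}}})$ for every $\rho'$, and every element in the image of the lifting map \lift\ lies strictly above the bottom of the lifted function space; therefore $\semax{\fun{x}{N}}{\Semax{\Phi}{\rho_{\bot}}} \not= \bot$, and so by the equation $\semax{M}{\Semax{\Psi}{\rho_{\bot}}} \not= \bot$ as well. This already establishes non-bottomness for the \emph{source} heap $\Psi$.

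It remains to transport non-bottomness from $\Psi$ to the \emph{target} heap $\Phi$. The order $\envleq{\Semax{\Psi}{\rho_{\bot}}}{\Semax{\Phi}{\rho_{\bot}}}$ says that the two environments coincide on the support of $\Semax{\Psi}{\rho_{\bot}}$; since outside that support $\Semax{\Psi}{\rho_{\bot}}$ returns $\bot$, this refines to the pointwise domain order $\Semax{\Psi}{\rho_{\bot}} \sqsubseteq \Semax{\Phi}{\rho_{\bot}}$. As $\semax{M}{\cdot}$ is continuous, hence monotone, in its environment, I obtain $\semax{M}{\Semax{\Psi}{\rho_{\bot}}} \sqsubseteq \semax{M}{\Semax{\Phi}{\rho_{\bot}}}$, and anything above a non-bottom element is itself non-bottom. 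Chaining these gives $\semax{M}{\Semax{\Phi}{\rho_{\bot}}} \sqsupseteq \semax{M}{\Semax{\Psi}{\rho_{\bot}}} = \semax{\fun{x}{N}}{\Semax{\Phi}{\rho_{\bot}}} \not= \bot$, as required. The only delicate point is the passage from the extension order $\leq$ on environments to the domain order $\sqsubseteq$, together with the appeal to monotonicity; both are routine once one observes that $\rho_{\bot}$ sends everything outside a support to $\bot$, so that $\leq$ refines to $\sqsubseteq$. Everything else is a direct combination of Lemma~\ref{lemma:evaltotal_correct_to_byname} with the defining property of the lifted function space.
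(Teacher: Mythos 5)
Your proof is correct, and it matches the intended argument: the paper states this lemma without proof (deferring to Launchbury's strategy), and the expected route is precisely yours---instantiate Lemma~\ref{lemma:evaltotal_correct_to_byname} at $\rho_{\bot}$ (legitimate since $\supp{\rho_{\bot}}=\emptyset$ makes consistency trivial), note that $\semax{\fun{x}{N}}{\Semax{\Phi}{\rho_{\bot}}}$ lies in the image of \lift\ and is hence non-bottom, and then transport non-bottomness to $\semax{M}{\Semax{\Phi}{\rho_{\bot}}}$ via monotonicity, using that the extension order $\leq$ refines to the pointwise order $\sqsubseteq$ because environments are $\bot$ outside their support. You also correctly identified the one delicate point: the conclusion is stated with the \emph{final} heap $\Phi$ rather than $\Psi$, which is exactly what makes the monotonicity step necessary rather than reading the result off the correctness equation alone.
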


\newcommand{\apsemax}[2]{\ensuremath{\mathcal{N}[\![#1]\!]_{#2}}}

Following Launchbury, we define a resourced denotational semantics. 
Let $C$ be the countable chain domain defined as the least solution to
the domain equation $C = C_{\bot}$. We represent lifting in $C$ by
injection function $S: C \rar C$ and limit element $S(S(S \ldots))$ by
$\omega$. {\it Resourced environments}, ranged over by $\sigma$, 
are functions from \Vars\ to functions from $C$ to \Values, i.e.,
$\sigma: \Vars \rar\ (C \rar\ \Values)$.
We define a resourced semantic function \apsemax{M}{\sigma} as follows:
\[
\begin{array}{l}
\apsemax{M}{\sigma} ~\bot = \bot\\
\apsemax{\fun{x}{M}}{\sigma} ~(S ~k) =
\lift\ (\fun{\nu}{\apsemax{M}{\sigma \upp \{x \mapsto \nu\}}})\\
\apsemax{\app{M}{N}}{\sigma}~ (S ~k) =
(\apsemax{M}{\sigma} ~k) \proj (\apsemax{N}{\sigma}) ~k\\
\apsemax{x}{\sigma} ~(S ~k) = \sigma ~x ~k\\
\apsemax{\letrec{\be{x_1}{M_1},\ldots,\be{x_n}{M_n}}{M}}{\sigma} ~(S~k)
= \\
\apsemax{M}
{\fix{\sigma'}{\sigma \upp \{x_1 \mapsto \apsemax{M_1}{\sigma'},
\ldots, x_n \mapsto \apsemax{M_n}{\sigma'}\}}} ~k
\end{array}
\]

We define an alternative natural semantics in which {\it Variable}
rule is replaced  
by
\[
\infer{
  \evalname{\Psi,\maps{x}{M}}{x}{\Phi}{V}
}{
  \evalname{\Psi,\maps{x}{M}}{M}{\Phi}{V}
}
\]
We use $\downarrow_{\mathit{name}}$ to denote
this alternative semantics. 

\begin{lemma}\label{lemma:evalname_then_eval}
For any pure expression $M$, if \evalname{}{M}{\Psi}{\fun{x}{N}} then
\eval{}{M}{\Psi'}{\fun{x}{N}}.
\end{lemma}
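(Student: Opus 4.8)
The plan is to prove a stronger statement by induction on the derivation of $\evalname{}{M}{\Psi}{\fun{x}{N}}$, since a naive induction breaks down: the need semantics steadily accumulates memoised values and temporary black holes in its heap, so the name heap and the need heap never stay syntactically equal. I would therefore generalise over heaps. Say a need-heap $\Xi$ \emph{refines} a pure name-heap $\Psi$, written $\Xi \sqsubseteq \Psi$, when $\dom{\Xi} = \dom{\Psi}$ and for every $x$ one of three situations holds: (i) $\Xi(x) = \Psi(x)$ (the thunk is still unevaluated), (ii) $\Xi(x)$ is a value to which name evaluates $\Psi(x)$ (the result has been memoised), or (iii) $\Xi(x) = \error$ (the binding is currently being forced). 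The generalised claim would read: if $\evalname{\Psi}{M}{\Phi}{V}$ with $\evalin{\Psi}{M}$ pure and closed, and $\Xi \sqsubseteq \Psi$ in which every variable in case (iii) is one that the name derivation does not force, then $\eval{\Xi}{M}{\Xi'}{V}$ for some $\Xi' \sqsubseteq \Phi$.

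For the routine cases I would proceed as follows. \emph{Value} is immediate. For \emph{Application} and \emph{Letrec} I would pick in the need derivation exactly the fresh variables chosen by the name derivation, apply the induction hypothesis to each premise — noting that $\update{\Xi}{x'}{M_2}$ refines $\update{\Phi}{x'}{M_2}$ and that refinement is preserved as the two heaps grow in lockstep — and reassemble. Aligning the fresh choices at each step is what lets me conclude the \emph{same} value $\fun{x}{N}$ literally, rather than merely up to $\alpha$-equivalence. The content of the lemma is concentrated entirely in the \emph{Variable} case.

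The main obstacle is that Variable case, because it is exactly where need and name genuinely part ways: need black-holes $x$ (replacing $\Xi(x)$ by $\error$) while evaluating $\Psi(x)$ and then memoises the result, whereas name keeps $x \mapsto \Psi(x)$ throughout and discards the computed value. Two facts must be established. First, \emph{memoisation is sound}: when $\Xi(x)$ is already a value, need returns it via the Value rule, and it coincides with the name-value of $x$; this rests on a routine determinism observation for the name semantics (the same configuration yields the same value). Second, and crucially, \emph{the black hole is never observed}: I must show that the name evaluation of $\Psi(x)$, in a heap that still carries $x \mapsto \Psi(x)$, never forces $x$. The key observation is that forcing $x$ during its own evaluation would make the name derivation re-enter an evaluation of $\Psi(x)$, and iterating this nesting cannot terminate inside a finite derivation tree; hence a \emph{terminating} name derivation never forces $x$ while $x$ is on the stack. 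Consequently the black hole introduced by need sits at a binding the corresponding computation never reaches, invariant (iii) of the refinement is maintained, and the need derivation mirrors the name derivation step for step, producing the same value.

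I expect the delicate point to be making ``the name evaluation does not force $x$'' precise and stable under the induction — in particular a monotonicity sub-lemma to the effect that the set of variables forced by $\evalname{\Psi}{M}{\cdot}{\cdot}$ does not shrink when unrelated bindings are added to $\Psi$, so that the ``not forced'' guarantee transfers from the name derivation to the need derivation being built alongside it. Once this invariant is in place, the three cases (i)--(iii) are closed under each inference rule, and specialising the generalised statement to the empty heap ($\Xi = \Psi = \rt$, with no black holes and nothing memoised) yields exactly $\evalname{}{M}{\Psi}{\fun{x}{N}} \Rightarrow \eval{}{M}{\Xi'}{\fun{x}{N}}$.
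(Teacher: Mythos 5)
You cannot be faulted for not matching the paper's proof, because there is none: Lemma~\ref{lemma:evalname_then_eval} is one of several adequacy lemmas in Section~\ref{sec:byname} stated without proof, the text deferring wholesale to Launchbury's proof strategy. Your plan is exactly the route that strategy prescribes: generalize to non-empty heaps, relate the name heap and the need heap pointwise (unevaluated thunk / memoized value / black hole on the current force stack), induct on the name derivation, and concentrate the work in the \emph{Variable} case, where memoization soundness rests on determinism of the name semantics and the harmlessness of the black hole rests on the observation that a finite name derivation cannot force $x$ while evaluating $x$'s own binding --- an infinite-descent argument which, as you rightly flag, needs an irrelevant-extension lemma in the style of Lemma~\ref{lemma:eval_heap_mono_rec}, transferred to $\downarrow_{\mathit{name}}$, to push ``forces $x$'' through heap growth. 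So there is no divergence of approach to report; the question is whether your invariant survives the induction.

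It does not, in two respects, both surfacing precisely in the Variable case you single out as the crux. First, $\dom{\Xi} = \dom{\Psi}$ is unmaintainable: when the same variable is forced a \emph{second} time, the name semantics re-evaluates the bound expression and thereby allocates fresh bindings (every \emph{Application} and \emph{Letrec} node allocates), whereas the need semantics answers from the memo and leaves its heap untouched; from that point on the name heap strictly outgrows the need heap, so the relation must be weakened to $\dom{\Xi} \subseteq \dom{\Psi}$ with the surplus name bindings treated as garbage. Second, and for the same reason, ``the same value $\fun{x}{N}$ literally, by aligning fresh choices'' is not achievable as an \emph{inductive} invariant: two name-runs of one thunk necessarily pick distinct fresh names (the first run's names already occupy the heap), while need's single run freezes one choice into the memoized value. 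For instance, with $x$ bound to $\letrec{\be{y}{\fun{w}{w}}}{\fun{z}{y}}$, successive name forces of $x$ return $\fun{z}{y_1'}$ and then $\fun{z}{y_2'}$, but need returns one fixed $\fun{z}{y'}$ both times. Hence case~(ii) of your refinement, and the conclusion of the generalized claim, must be stated modulo a renaming mediated by the heap correspondence, with literal equality recovered only at the root, where need's fresh names may be chosen after inspecting the entire name derivation. With those repairs --- subset domains plus garbage, a renaming component carried through the invariant, and the monotonicity sub-lemma you already anticipate --- your sketch goes through; as stated, the induction fails at the first thunk that is forced twice.
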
 

\begin{lemma}\label{lemma:apsemax_correct_to_evalname}
For any pure expressions $M, M_1, \ldots, M_n$, \\
if \apsemax{M}{\fix{\sigma}
{\{\maps{x_1}{\apsemax{M_1}{\sigma}},\ldots,\maps{x_n}{\apsemax{M_n}{\sigma}}\}}}
~$(S^m ~\bot) \not= \bot$, then
\evalname{\maps{x_1}{M_1},\ldots,\maps{x_n}{M_n}}{M}{\Psi}{\fun{x}{N}}.
\end{lemma}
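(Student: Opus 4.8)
The plan is to argue by induction on the resource index $m$, proving the statement for an arbitrary pure heap $\maps{x_1}{M_1},\ldots,\maps{x_n}{M_n}$ and arbitrary pure $M$ at once. Write $\sigma_0 = \fix{\sigma}{\{\maps{x_1}{\apsemax{M_1}{\sigma}},\ldots,\maps{x_n}{\apsemax{M_n}{\sigma}}\}}$ for the resourced interpretation of the heap, so the hypothesis reads $\apsemax{M}{\sigma_0}\,(S^m\,\undef)\noteq\undef$. Since $\apsemax{\cdot}{\cdot}$ sends $\undef$ to $\undef$ and consumes exactly one $S$ at every constructor, the hypothesis forces enough resource ($m\geq 1$) and each case hands its subcomputations the strictly smaller budget $S^{m-1}\,\undef$, on which the induction hypothesis applies. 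Purity of $M$ together with the fact that the conclusion is an abstraction rules out the black-hole situations, so $\mathit{Error}_\beta$ and the error clauses never intervene. I would then perform case analysis on the shape of $M$.

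The value, variable and letrec cases are routine. If $M=\fun{x}{M'}$, the $\mathit{Value}$ rule applies immediately. If $M=x$, non-bottomness forces $x=x_i$ for some $i$ (otherwise $\sigma_0\,x=\undef$); unfolding the fixed point gives $\sigma_0\,x_i=\apsemax{M_i}{\sigma_0}$, so $\apsemax{M_i}{\sigma_0}\,(S^{m-1}\,\undef)\noteq\undef$, and the induction hypothesis yields a name-derivation of $M_i$ in the same heap, whence the revised \emph{Variable} rule (which keeps $\maps{x_i}{M_i}$ in the heap) produces \evalname{\maps{x_1}{M_1},\ldots,\maps{x_n}{M_n}}{x_i}{\Phi}{\fun{x}{N}}. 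For $M=\letrec{D}{N}$, unfolding rewrites $\apsemax{\letrec{D}{N}}{\sigma_0}\,(S^m\,\undef)$ to $\apsemax{N'}{\sigma_1}\,(S^{m-1}\,\undef)$, where $\sigma_1$ is exactly the resourced interpretation of the heap extended by the freshly renamed bindings of $D$; the induction hypothesis at $m-1$ and the $\mathit{Letrec}$ rule finish the case.

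The application case $M=\app{M_1}{M_2}$ is the crux. Unfolding gives $\apsemax{\app{M_1}{M_2}}{\sigma_0}\,(S^m\,\undef)=\big((\apsemax{M_1}{\sigma_0}\,(S^{m-1}\,\undef))\proj(\apsemax{M_2}{\sigma_0})\big)\,(S^{m-1}\,\undef)$, so in particular $\apsemax{M_1}{\sigma_0}\,(S^{m-1}\,\undef)\noteq\undef$ and the induction hypothesis supplies \evalname{\maps{x_1}{M_1},\ldots,\maps{x_n}{M_n}}{M_1}{\Phi}{\fun{x}{N'}}. To reach the second premise of the $\mathit{Application}$ rule I must exhibit a non-bottom resourced denotation for $\subst{N'}{x'}{x}$ in the grown heap $\Phi,\maps{x'}{M_2}$. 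This rests on three facts about $\apsemax{\cdot}{\cdot}$: a substitution lemma $\apsemax{\subst{N'}{x'}{x}}{\sigma}=\apsemax{N'}{\sigma\upp\{\maps{x}{\sigma\,x'}\}}$; an environment-weakening lemma, exploiting that the revised \emph{Variable} rule never overwrites bindings, so heaps only grow and $\envleq{\sigma_0}{\widehat{\Phi}}$; and, most importantly, a resourced preservation property $\apsemax{M_1}{\sigma_0}=\apsemax{\fun{x}{N'}}{\widehat{\Phi}}$ (the resourced analogue of Lemma~\ref{lemma:evaltotal_correct_to_byname}). Combining these, the projected function is $\fun{\nu}{\apsemax{N'}{\widehat{\Phi}\upp\{\maps{x}{\nu}\}}}$, independent of the remaining resource, so the displayed denotation equals $\apsemax{\subst{N'}{x'}{x}}{\widehat{\Phi,\maps{x'}{M_2}}}\,(S^{m-1}\,\undef)$, which is non-bottom; the induction hypothesis at $m-1$ then gives \evalname{\Phi,\maps{x'}{M_2}}{\subst{N'}{x'}{x}}{\Psi'}{\fun{x}{N}}, and the $\mathit{Application}$ rule assembles the result. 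Note that both recursive appeals to the hypothesis occur at $S^{m-1}\,\undef$, since the single outer $S$ is stripped once and the remaining budget is shared by the function and the body.

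The main obstacle is precisely the resourced preservation property. I would establish it as a separate lemma, proved by induction on the $\downarrow_{\it name}$ derivation (mirroring Lemma~\ref{lemma:evaltotal_correct_to_byname} but for $\apsemax{\cdot}{\cdot}$ in place of $\semax{\cdot}{\cdot}$), carefully tracking that the heap only extends and that the fixed-point interpretations stay consistent so that no denotation is decreased; this induction is independent of the present induction on $m$, so there is no circularity. The delicate point throughout is to certify that every invocation of the induction hypothesis lands at resource $S^{m-1}\,\undef$, strictly below $S^m\,\undef$: it is exactly the design of $\apsemax{\cdot}{\cdot}$, consuming one $S$ at each constructor, that makes this measure decrease and the whole induction well-founded.
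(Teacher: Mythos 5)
Your overall architecture --- induction on the resource index $m$ generalized over arbitrary pure heaps, case analysis on $M$, auxiliary substitution and environment-agreement lemmas, heap growth for $\downarrow_{\it name}$, and a separate resourced correctness lemma proved by induction on the $\downarrow_{\it name}$ derivation --- is exactly the Launchbury-style argument the paper intends (it states this lemma without proof, deferring to his strategy). However, the key auxiliary fact you lean on in the application case is misstated, and as stated it is false: the exact preservation property $\apsemax{M_1}{\sigma_0} = \apsemax{\fun{x}{N'}}{\widehat{\Phi}}$ cannot hold at finite resources, precisely because $\apsemax{\cdot}{\cdot}$ consumes one $S$ per syntactic constructor while evaluation discharges those constructors. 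Concretely, take $\app{(\fun{x}{x})}{(\fun{y}{y})}$ in the empty heap: it name-evaluates to $\fun{y}{y}$ with final heap $\maps{x'}{\fun{y}{y}}$, yet $\apsemax{\app{(\fun{x}{x})}{(\fun{y}{y})}}{\sigma}~(S~\undef) = \undef$ (the outer $S$ is stripped and $\apsemax{\fun{x}{x}}{\sigma}~\undef = \undef$), whereas $\apsemax{\fun{y}{y}}{\widehat{\Phi}}~(S~\undef) = \lift(\ldots) \noteq \undef$. Your proposed proof of the preservation lemma by induction on the $\downarrow_{\it name}$ derivation would break in exactly the application and variable cases, where a resource is consumed on the left-hand side but not on the right.

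The repair is standard and local: weaken preservation to a pointwise inequality --- if \evalname{\Psi}{M}{\Phi}{V} then $\apsemax{M}{\widehat{\Psi}}~k \sqsubseteq \apsemax{V}{\widehat{\Phi}}~k$ for every resource $k$ (with equality only in the limit $\omega$), together with your observation that $\widehat{\Psi}$ and $\widehat{\Phi}$ agree on the old support since $\downarrow_{\it name}$ only adds fresh bindings and never updates existing ones. This version does go through by induction on the derivation, using monotonicity of the semantic clauses in the resource argument (e.g., in the variable case one concludes via $\apsemax{x}{\widehat{\Psi}}(S~k) = \widehat{\Psi}~x~k \sqsubseteq \apsemax{V}{\widehat{\Phi}}~k \sqsubseteq \apsemax{V}{\widehat{\Phi}}(S~k)$). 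Moreover it suffices for your application case essentially verbatim: any non-bottom element below $\lift f$ is $\lift f'$ with $f' \sqsubseteq f$, so by monotonicity of $\proj$ and of application, non-bottomness of $\bigl((\apsemax{M_1}{\sigma_0}(S^{m-1}\,\undef))\proj(\apsemax{M_2}{\sigma_0})\bigr)(S^{m-1}\,\undef)$ still yields non-bottomness of $\apsemax{N'}{\widehat{\Phi}\upp\{\maps{x}{\apsemax{M_2}{\widehat{\Phi}}}\}}(S^{m-1}\,\undef)$, and your substitution and agreement lemmas then land the induction hypothesis at $m-1$ exactly as planned. The rest of your proposal --- the resource bookkeeping, the non-circularity of the two inductions, the variable case, and the letrec case (which additionally needs the routine lemma merging the nested fixed point with the fixed point of the extended heap, which you gloss over) --- is sound.
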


The following proposition states that a pure
expression evaluates to an abstraction if and only if its meaning is
a non-bottom element. Since the natural semantics is
deterministic, we can 
deduce that if a pure expression 
evaluates to a black hole then its meaning is a bottom element.

\begin{proposition}\label{prop:eval_adequate_to_byname}
For any pure program $M$, 
$\semax{M}{\rho_{\bot}} \not= \bot$ iff 
\eval{}{M}{\Psi}{\fun{x}{N}}.
\end{proposition}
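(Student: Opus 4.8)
The plan is to prove the two implications separately. The ``if'' direction follows almost immediately from the correctness result already in hand, whereas the ``only if'' direction is the genuine adequacy statement and is obtained by descending from the standard denotational semantics to the resourced one and then returning to an operational derivation through the call-by-name variant.

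For the ``if'' direction, suppose \eval{}{M}{\Psi}{\fun{x}{N}}. By Proposition~\ref{prop:eval_correct_to_byname}, $\semax{M}{\rho_{\bot}} = \semax{\fun{x}{N}}{\Semax{\Psi}{\rho_{\bot}}}$. Since $\rho_{\bot}$ has empty support it is trivially consistent with $\Psi$, so $\Semax{\Psi}{\rho_{\bot}}$ is defined; and the semantic clause for an abstraction returns a lifted element $\lift(\cdots)$, which is never $\bot$. Hence $\semax{M}{\rho_{\bot}} \not= \bot$.

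For the ``only if'' direction, assume $\semax{M}{\rho_{\bot}} \not= \bot$. First I would transfer this information to the resourced semantics by using the correspondence, at the limit resource $\omega$, between the two denotational semantics: writing $\sigma_{\bot}$ for the resourced environment that sends every variable to the constant-$\bot$ function, one has $\semax{M}{\rho_{\bot}} = \apsemax{M}{\sigma_{\bot}}\,\omega$. Since $C$ is the chain domain and $\omega = \bigsqcup_{m} S^{m}\,\bot$, continuity of $\apsemax{M}{\sigma_{\bot}}$ gives $\apsemax{M}{\sigma_{\bot}}\,\omega = \bigsqcup_{m} \apsemax{M}{\sigma_{\bot}}\,(S^{m}\,\bot)$, so some finite approximant $\apsemax{M}{\sigma_{\bot}}\,(S^{m}\,\bot)$ is already non-bottom. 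Taking $n = 0$ in Lemma~\ref{lemma:apsemax_correct_to_evalname}, where the fixed-point environment in its hypothesis collapses to $\sigma_{\bot}$, then produces a call-by-name derivation \evalname{}{M}{\Phi}{\fun{x}{N}}, and Lemma~\ref{lemma:evalname_then_eval} converts it into \eval{}{M}{\Psi}{\fun{x}{N}}, as desired.

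The main obstacle is the denotational correspondence $\semax{M}{\rho_{\bot}} = \apsemax{M}{\sigma_{\bot}}\,\omega$, which is the technical core of Launchbury's adequacy argument and is not among the lemmas already proved. I would establish it by a mutual structural induction on expressions and bindings, relating an ordinary environment $\rho$ to a resourced one $\sigma$ by the pointwise condition $\rho(x) = \sigma(x)\,\omega$ and showing that $\semax{M}{\rho} = \apsemax{M}{\sigma}\,\omega$ is preserved by every clause. The delicate cases are \textsf{let\,rec}, where the two least fixed points must be shown to agree at $\omega$ (so the relating predicate has to be admissible in order to pass to the limit), and the strictness introduced by the $S$-guards of $\apsemax{\cdot}{\cdot}$. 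Once this correspondence and the continuity of $\apsemax{M}{\sigma_{\bot}}$ in its resource argument are in place, the remaining steps above are routine.
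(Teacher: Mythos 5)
Your proof is correct and follows essentially the same route as the paper's: one direction via Proposition~\ref{prop:eval_correct_to_byname} together with the observation that an abstraction denotes a lifted, hence non-bottom, element, and the other by passing to the resourced semantics, extracting a finite resource bound $m$, and applying Lemma~\ref{lemma:apsemax_correct_to_evalname} with $n=0$ followed by Lemma~\ref{lemma:evalname_then_eval}. The only difference is that you spell out the correspondence $\semax{M}{\rho_{\bot}} = \apsemax{M}{\sigma_{\bot}}\,\omega$ and the continuity argument as an explicit obligation, whereas the paper silently inherits this step from Launchbury's adequacy argument.
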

\begin{proof}
If: There exists $m$ such that
\apsemax{M}{\sigma_{\bot}} $(S^m ~\bot)$ \noteq\ \undef.
By Lemma \ref{lemma:apsemax_correct_to_evalname},
\evalname{}{M}{\Psi}{\fun{x}{N}}.
By Lemma \ref{lemma:evalname_then_eval},
\eval{}{M}{\Phi}{\fun{x}{N}}. Only if: By Proposition~\ref{prop:eval_correct_to_byname}.
\end{proof}

\section{An extension with pairs}\label{sec:pair}

\begin{figure}[t]
\hspace*{\fill}{\small $\begin{array}{llcl}
{\it Expressions}\hspace*{5ex}& 
M,N &::=& \prd{M}{N} \mid \prj{M}{i} \mid \ldots \\
{\it Values}& V &::=& \prd{V_1}{V_2} \mid \ldots \\
{\it Contexts}&
E &::=& \prd{E}{M} \mid \prd{V}{E} \mid \prj{E}{i} \mid \ldots\\
\end{array}$}\hspace*{\fill}\vspace{1ex}
\caption{Extension with pairs}\label{fig:syntax_pair}
\end{figure}

In this section we extend the cyclic calculus \cyclic\ with (eager)
pairs.  The motivation for the extension is to set up a basic
framework to study lazy recursive records. Lazy evaluation is used in
some programming languages to evaluate recursive records. Hence we
think the extension is worth considering.

\begin{figure}[t]
\hspace*{\fill}{\small $\begin{array}{ll}
{\it prj}: 
&\prj{\prd{V_1}{V_2}}{i} ~\needmapsto\ V_i\\
{\it lift}_{\pi}:
&\prj{\letrec{D}{A}}{i}  ~\needmapsto\ \letrec{D}{\prj{A}{i}}\\
\mathit{lift}_{\mathit{pair}_1}:
& \prd{(\letrec{D}{A})}{M} ~\needmapsto\ \letrec{D}{\prd{A}{M}}\\
\mathit{lift}_{\mathit{pair}_2}:
& \prd{V}{\letrec{D}{A}} ~\needmapsto\ \letrec{D}{\prd{V}{A}}
\end{array}$}\hspace*{\fill}\vspace{1ex}
\caption{Reduction semantics for pairs}\label{fig:red_pair}
\end{figure}

\begin{figure}[t]
{\small 
\[
\begin{array}{c}
\mathit{Pair}\\
\infer{
  \eval{\Psi}{\prd{M_1}{M_2}}{\Psi_2}{\prd{V_1}{V_2}}
}{
  \eval{\Psi}{M_1}{\Psi_1}{V_1}
  &\eval{\Psi_1}{M_2}{\Psi_2}{V_2}
}\vspace{1ex}\\
\mathit{Projection}\\
\infer{
  \eval{\Psi}{\prj{M}{i}}{\Phi}{V_i}
}{
  \eval{\Psi}{M}{\Phi}{\prd{V_1}{V_2}}
}
\end{array}
\]}
\caption{Natural semantics for pairs}\label{fig:eval_pair}
\end{figure}

To accommodate pairs, we extend the syntax of \cyclic\ as given in
figure~\ref{fig:syntax_pair}. 
Now an expression may be a pair \prd{M}{N} or projection \prj{M}{i}.
A value may be a pair of values \prd{V_1}{V_2}.
Evaluation contexts contain three new productions 
\prd{E}{M}, \prd{V}{E}  and \prj{E}{i}.
Pairs are evaluated eagerly from left to right.

Figures~\ref{fig:red_pair} and~\ref{fig:eval_pair} respectively give
new rules to be added to the reduction and the evaluation semantics, for
evaluating and destructing pairs.  The two rules in
figure~\ref{fig:eval_pair} and {\it prj} in figure~\ref{fig:red_pair}
should be self-explanatory.  Heap reconfiguration is implicit in the
evaluation semantics, but is explicit in the reduction semantics. That
is, $\mathit{lift}_{\pi}$ is hidden in {\it Projection}, and
$\mathit{lift}_{\mathit{pair}_1}$ and
$\mathit{lift}_{\mathit{pair}_2}$ are in {\it Pair}.  The equivalence
result of the two semantics straightforwardly carries over to the
extension.

\begin{theorem}
For any program $M$, the following two conditions hold:
\begin{enumerate}
\item if $M \rarr A$ then there exist $\Theta$ and $V$ such that
$\Theta[V]$ and $A$ belong to the same $\alpha$-equivalence class and 
\eval{}{M}{\decomp{\Theta}}{V}
\item if \eval{}{M}{\Psi}{V} then $M \rarr \Theta[V]$ where $\decomp{\Theta} = \Psi$.
\end{enumerate}
\end{theorem}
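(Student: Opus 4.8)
The plan is to mirror, essentially line for line, the development that culminates in Theorem~\ref{theorem:equiv_cyclic}, since pairs introduce only \emph{binding-free} evaluation contexts and hence behave, from the point of view of heaps, exactly like the application frame $\app{[]}{M}$. First I would extend the instrumented natural semantics of figure~\ref{fig:str_eval_rec} with three new frames, $\prd{[]}{M}$, $\prd{V}{[]}$ and $\prj{[]}{i}$, together with instrumented rules $\mathit{Pair}$ and $\mathit{Projection}$ that push them: to evaluate $\prd{M_1}{M_2}$ one pushes $\prd{[]}{M_2}$ to evaluate $M_1$, then pushes $\prd{V_1}{[]}$ to evaluate $M_2$; to evaluate $\prj{M}{i}$ one pushes $\prj{[]}{i}$. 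Because none of these frames binds a variable, I would extend \LBVs, \decomp\ and \Body\ on them exactly as on $\app{[]}{M}$ (transparently), and add the evident free-variable side conditions ($\FVs{M}\subseteq\LBVs{\Sigma}$ for $\prd{[]}{M}$, and likewise for the value and scrutinee frames) to the well-formedness definition.

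With those definitions in place, the two soundness directions go through by adding the corresponding cases to the existing inductions. For the analogue of Proposition~\ref{prop:Eval_then_red_rec} (instrumented semantics sound for reduction), the $\mathit{Pair}$ case uses $\mathit{lift}_{\mathit{pair}_1}$ and $\mathit{lift}_{\mathit{pair}_2}$ to globalize the letrecs accumulated while evaluating each component, exactly as the $\mathit{App}$ case uses {\it lift}; the $\mathit{Projection}$ case uses $\mathit{lift}_{\pi}$ followed by {\it prj}. Lemma~\ref{lemma:Eval_wfheap_rec} and Lemma~\ref{lemma:Eval_heap_grow_rec} extend for free: the new frames carry no bindings, so in the preorder $\preceq$ they can only occur as \emph{matched} frames related to themselves by reflexivity of $\leq_{\mathcal F}$, never as the extra letrec frames permitted by condition~3. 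The analogue of Proposition~\ref{prop:eval_then_Eval_rec} then adds $\mathit{Pair}$ and $\mathit{Projection}$ cases that are immediate by the induction hypothesis together with the growth and monotonicity Lemmas~\ref{lemma:Eval_heap_grow_rec} and~\ref{lemma:eval_heap_mono_rec}.

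The only genuinely new content lies in the reduction-to-natural direction, i.e.\ the analogue of Proposition~\ref{prop:red_then_eval_rec}. There I would add to the case analysis on $M'$ the subterms $\prd{M_1}{M_2}$ and $\prj{M}{i}$, and to support them prove three ``recast'' lemmas in the style of Lemma~\ref{lemma:red_app_rec}: if $\Theta[\prd{M_1}{M_2}]\rarr^n\Theta'[\prd{V_1}{M_2}]$ then $\Theta[M_1]\rarr\Theta'[V_1]$; if $\Theta[\prd{V_1}{M_2}]\rarr^n\Theta'[\prd{V_1}{V_2}]$ then $\Theta[M_2]\rarr\Theta'[V_2]$; and, introducing a {\it prj}-rooted notion analogous to the $\beta_{\mathit{need}}$-root, if $\Theta[\prj{M}{i}]\rarr^n\Theta'[\prj{\prd{V_1}{V_2}}{i}]$ preserves a {\it prj}-root then $\Theta[M]\rarr\Theta'[\prd{V_1}{V_2}]$. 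Each is proved by induction on $n$ with the same case split as Lemma~\ref{lemma:red_app_rec}, the dereference case $M=E[x]$ being handled by Corollary~\ref{coro:red_cxt_rec}.

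The hard part is not conceptual but bookkeeping, and it concentrates in that last direction. I expect two points to need care: matching the eager left-to-right evaluation order so that the frame $\prd{V_1}{[]}$ indeed freezes the already-computed first component while $M_2$ is reduced (using that the evaluation contexts admit a value only to the left of an active second component, so no step disturbs $V_1$ or, symmetrically, touches $M_2$ before $M_1$ is a value); and the rooted-reduction accounting for {\it prj}, which, like $\beta_{\mathit{need}}$ for application, is the rule that consumes the outer constructor and must therefore be excluded by the root-preservation hypothesis when peeling off the projection. Note that the pair constructor itself is never consumed by any rule, so the two pair-component lemmas need no root condition, the lifting steps being absorbed into the growth of $\Theta$ into $\Theta'$. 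Once these lemmas are in hand, the theorem follows exactly as Theorem~\ref{theorem:equiv_cyclic}: part~1 from the reduction-to-natural proposition, and part~2 by composing the original-to-instrumented proposition with the instrumented-to-reduction proposition via Lemma~\ref{lemma:Eval_heap_grow_rec}.
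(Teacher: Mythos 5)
Your proposal is correct and takes essentially the same approach as the paper: the paper gives no separate proof for this theorem, stating only that the equivalence ``straightforwardly carries over,'' and what you spell out --- binding-free pair/projection frames added to the instrumented semantics, \LBVs{\cdot}/\decomp{\cdot} extended transparently, the lift rules absorbed implicitly as in \emph{App} and \emph{Var}, and new recast lemmas with a root condition only for \emph{prj} --- is precisely that carrying-over, organized in the same three-proposition structure as Theorem~\ref{theorem:equiv_cyclic}.
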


\newcommand{\valrar}{%
\mbox{$\,\displaystyle\mathop{\rightarrow}_{\mbox{\tiny\sf value}}\,$}}
\newcommand{\valrarr}{%
\mbox{$\,\displaystyle\mathop{\twoheadrightarrow}_{\mbox{\tiny\sf value}}\,$}}
\newcommand{\needrarr}{%
\mbox{$\,\displaystyle\mathop{\twoheadrightarrow}_{\mbox{\tiny\sf need}}\,$}}

\section{Call-by-value letrec calculus \cyclicval}\label{sec:byvalue}

The {\sf delay} and {\sf force} operators as provided in Scheme
\cite{scheme}, or OCaml's equivalent {\sf lazy} and {\sf force}
\cite{OCaml}, can be emulated by \letrec{\be{x}{M}}{\fun{x'}{x}} for
$\mathsf{delay}(M)$ and \app{M}{(\fun{x}{x})} for $\mathsf{force}(M)$.
It is crucial for this encoding that letrec-bindings are evaluated
lazily.  However, in the presence of ML's traditional value recursion
restriction, which requires the right-hand side of recursive bindings
to be syntactic values, lazy {\sf letrec}'s are faithful to ML's {\sf
letrec}'s.  Note that $\mathsf{delay}(M)$ is considered to be a
syntactic value.  Therefore we are interested in a call-by-value
variant of \cyclic, which can model a call-by-value letrec lambda
calculus with {\sf delay/force} operators. 
For instance Syme's initialization
graphs~\cite{initgraphs}, which 
underlie the object initialization strategy of F\#~\cite{FSharp},
fit in this variant extended with $n$-tuples, or records. 

\begin{figure}
\hspace*{\fill}{\small $\begin{array}{llcl}
{\it Expressions}\hspace*{5ex}& 
M,N &::=& x \mid \fun{x}{M} \mid \app{M}{N} \mid \letrec{D}{M} \mid \error\\
{\it Definitions}&
D &::=& \rt \mid D, \be{x}{M}\\
{\it Values}& V &::=& \fun{x}{M} \mid \error\\
{\it Answers}& 
A &::=& V \mid \letrec{D}{A}\\
{\it Good~Answers}& 
G &::=& \fun{x}{M} \mid \letrec{D}{G}\\
\mbox{{\it By-value~Contexts}}&
E &::=& [] \mid \app{E}{M} \mid \app{V}{E} \mid \letrec{D}{E}\\ 
& &\mid& \letrec{x = E, D}{E'[x]}\\
& &\mid& \letrec{x' = E, \dps{x}{x'},D}{E'[x]}\\
{\it Dependencies}&
\dps{x}{x'} &::=& \be{x}{E[x']}\\ 
           &&\mid& \dps{x}{x''},\be{x''}{E[x']}\\
\end{array}$}\hspace*{\fill}\vspace{1ex}
\caption{Syntax of \cyclicval}\label{fig:syntax_val}
\end{figure}

\begin{figure}
\hspace*{\fill}{\small $\begin{array}{ll}
\beta_{\mathit{value}}:
&\app{(\fun{x}{M})}{(\fun{x'}{M'})} ~\valuemapsto\ \letrec{\be{x}{\fun{x'}{M'}}}{M}\\
\mathit{lift}_{\mathit{arg}}:
&\app{V}{(\letrec{D}{A})} ~\valuemapsto\ \letrec{D}{\app{V}{A}}\\
\mathit{error}_{\mathit{arg}}:
\hspace*{3ex}&\app{(\fun{x}{M})}{\error} ~\valuemapsto\ \error\\
\end{array}$}\hspace*{\fill}\vspace{1ex}
\caption{Reduction semantics for \cyclicval}\label{fig:red_val}
\end{figure}

In figure~\ref{fig:syntax_val} we define the syntax of \cyclicval, 
a call-by-value variant of \cyclic\footnote{It should be noted that 
the true beta-value axiom is  \app{(\fun{x}{M})}{V} = \subst{M}{V}{x}, 
as introduced by Plotkin.}.
It differs from \cyclic\ in that evaluation contexts contain the production
\app{V}{E} to force evaluation of arguments. We have introduced
{\it good answers} to distinguish successful termination, which returns abstraction;
we will use good answers to state Proposition~\ref{prop:cbvcbn}.
As for the reduction semantics, we replace $\beta_{\mathit{need}}$ with
$\beta_{\mathit{value}}$ and add two new rules $\mathit{lift}_ {\mathit{arg}}$
and $\mathit{error}_ {\mathit{arg}}$ 
as given in figure~\ref{fig:red_val}. Otherwise the reduction rules
are unchanged from figure~\ref{fig:red_rec}.
An expression $M$ {\it by-value reduces} to $N$, written $M \valrar N$, if
$M = E[M']$ and $N = E[N']$ where $M' \valuemapsto N'$.
We write \valrarr\ to denote the reflexive and transitive closure of \valrar.
To avoid confusion we write \needrarr, instead of \rarr, 
to denote multi-step reductions in \cyclic. 

\medskip

Proposition~\ref{prop:cbvcbn} states that \cyclic\ is more likely 
to return good answers than \cyclicval. This is not surprising. 
We prove the proposition by defining the natural semantics for \cyclicval\
and by relating \cyclicval\ and \cyclic\ in terms of the natural semantics.

\begin{proposition}\label{prop:cbvcbn}
For any program $M$, if 
$M \valrarr G$ then $M \needrarr G'$.
\end{proposition}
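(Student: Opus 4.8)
The plan is to route the comparison through natural semantics, exactly as the sentence preceding the statement announces. First I would define a natural semantics $\Downarrow_v$ for \cyclicval\ that mirrors figure~\ref{fig:eval_rec} but replaces the {\it Application} rule by a by-value one that evaluates the argument to a value before allocating it: from $\evalval{\Psi}{M_1}{\Phi}{\fun{x}{N}}$, $\evalval{\Phi}{M_2}{\Phi'}{W}$ and $x'$ fresh, conclude $\evalval{\Psi}{\app{M_1}{M_2}}{\Psi'}{V}$ whenever $\evalval{\update{\Phi'}{x'}{W}}{\subst{N}{x'}{x}}{\Psi'}{V}$, with a companion error rule for the case $M_2 \Downarrow_v \error$; the {\it Letrec}, {\it Variable} and remaining error rules are kept unchanged, so letrec-bindings (hence $\mathsf{delay}$) stay non-strict. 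I would then establish the \cyclicval\ analogue of Theorem~\ref{theorem:equiv_cyclic}, namely that $\Downarrow_v$ agrees with \valrarr, by repeating the development of Section~\ref{sec:cyclic} (instrumented semantics, heap-growth preorder, and by-value versions of Lemmas~\ref{lemma:red_letrec}--\ref{lemma:red_app_rec} and Proposition~\ref{prop:red_then_eval_rec}); the only genuinely new work is to accommodate $\beta_{\mathit{value}}$, $\mathit{lift}_{\mathit{arg}}$ and $\mathit{error}_{\mathit{arg}}$, which is routine. Since a good answer is precisely an answer whose underlying value is an abstraction, $M \valrarr G$ then yields $\evalval{}{M}{\Psi}{\fun{x}{N}}$ for some $\Psi$, $x$, $N$.

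The heart of the argument is a lemma relating the two natural semantics: \emph{if $\evalval{\Psi_v}{M}{\Phi_v}{\fun{x}{N}}$ and $\Psi_v$ is related to a call-by-need heap $\Psi_n$, then $\eval{\Psi_n}{M}{\Phi_n}{\fun{x'}{N'}}$ with $\Phi_v$ related to $\Phi_n$}. The relation $\Psi_v \approx \Psi_n$ I would use expresses that the by-value heap is ``more evaluated'': the two heaps agree on the source letrec-bindings, while for each argument location $x'$ either the bindings coincide or $\Psi_v(x') = W$ is a value and $\Psi_n(x') = M_2$ with $M_2$ evaluating, under call-by-need in the matching heap, to a value related to $W$; in addition $\Psi_v$ may carry extra bindings produced by eagerly evaluating arguments that call-by-need has not yet demanded. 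Those extra bindings are harmless and are discarded using Lemma~\ref{lemma:eval_heap_mono_rec}, which states that irrelevant heap entries do not affect derivations.

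I would prove the lemma by induction on the $\Downarrow_v$ derivation. {\it Value}, {\it Letrec} and the variable and error cases go through because the corresponding call-by-need rules allocate the same binders and preserve $\approx$. The crux is {\it Application}: by-value binds $x' \mapsto W$ whereas call-by-need binds $x' \mapsto M_2$, so the two result heaps differ exactly at $x'$. I must check that these two bindings remain related and that, when the function body later demands $x'$, the call-by-need {\it Variable} rule re-evaluates $M_2$ to a value matching $W$ --- which holds by the induction hypothesis applied to the by-value sub-derivation of $M_2$, after which both heaps carry related values at $x'$. The decisive observation is that laziness can only \emph{omit} work that the by-value reduction performed: because the by-value derivation exists, every argument it forced terminates, so whenever call-by-need chooses to force one it also terminates to a related value, and whenever it does not, the final abstraction is unaffected. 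Hence a by-value derivation to an abstraction always induces a call-by-need derivation to an abstraction.

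Finally I would chain the three facts: $M \valrarr G$ gives $\evalval{}{M}{\Psi}{\fun{x}{N}}$ by the \cyclicval\ equivalence, the key lemma started from the empty (trivially related) heap gives $\eval{}{M}{\Phi}{\fun{x'}{N'}}$, and Theorem~\ref{theorem:equiv_cyclic}(2) turns this into $M \needrarr \Theta[\fun{x'}{N'}]$, which is a good answer $G'$. The main obstacle I anticipate is pinning down $\approx$ so that it is genuinely preserved through {\it Application} and {\it Variable}: the definition is mutually recursive with call-by-need evaluation (a binding is ``related'' only up to what it evaluates to), and one must simultaneously manage the auxiliary bindings that by-value accumulates but call-by-need may never create. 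Making this relation compose along an entire derivation, rather than merely in each isolated rule, is where the real care lies.
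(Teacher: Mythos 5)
Your proposal follows exactly the strategy the paper itself announces for Proposition~\ref{prop:cbvcbn}: define a natural semantics for \cyclicval, transfer $M \valrarr G$ to a $\Downarrow_v$ derivation via the \cyclicval\ analogue of Theorem~\ref{theorem:equiv_cyclic}, relate the by-value and by-need natural semantics by a heap simulation, and conclude with Theorem~\ref{theorem:equiv_cyclic}(2). The paper gives no further detail than this outline, and your fleshed-out version (including the ``more evaluated'' heap relation and the use of Lemma~\ref{lemma:eval_heap_mono_rec} to discard eagerly created bindings) is a sound realization of it.
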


An expression which returns a black hole in \cyclicval\ may return 
abstraction in \cyclic, 
e.g. \letrec{\be{x}{\app{(\fun{y}{\fun{y'}{y}})}{x}}}{x}.

%

\section{Related work}\label{sec:related}
Our work builds on previous work by Launchbury \cite{bigstepforlazy},  
Sestoft \cite{lazymachine}, Ariola and Felleisen \cite{thecbnlambda}
and Maraist et al. \cite{cbnlambda}. The reduction semantics present
in the paper are mostly identical to those of Ariola and
Felleisen. As to the natural semantics for \acyclic, we revised 
that of Maraist et al. by correctly enforcing variable hygiene in
the style of Sestoft and by explicitly introducing an inference rule
for the let construct. As to the natural semantics for \cyclic, we
revised that of 
Sestoft by eliminating the precompilation step. Adequacy of the
natural semantics for \acyclic\ is ascribed to its correspondence with
the reduction semantics, which is proved equivalent to call-by-name by
Ariola and Felleisen. In turn we showed adequacy of the
natural semantics for \cyclic\ by adapting Launchbury's denotational
argument. Adequacy of the reduction semantics for
\cyclic\ is then ascribed to its correspondence with the natural semantics;
to the best of our knowledge, this fact has not been shown so far. 
In the above discussed sense, our work extends those previous
work. 



There are several lines of work which considers other styles of
formalization of call-by-need 
in the presence or absence of recursion. 
Below we review some of them. The reader may be interested in the
concluding remarks of \cite{cbnlambda}, where Maraist et al. discuss
the reduction semantics  in relation to other systems. 

Recent work by Garcia et al.~\cite{lazydcntrl} proposed an abstract machine 
for the let-free formulation of the acyclic calculus \acyclic, which
is proved equivalent to the reduction semantics of Ariola and
Felleisen \cite{thecbnlambda}.
They also presented a simulation of the machine by 
a call-by-value lambda calculus extended with delimited control operators. 
While developed independently, their abstract machine, in particular
the refined one, and our instrumented natural semantics bear similarities in that 
both manipulate sequenced evaluation contexts while retaining the
structural knowledge of a term that has been discovered. 
More thorough comparison might suggest a means of 
simulating the cyclic calculus \cyclic\ using delimited control. 
This is one direction for future work. 

Sestoft revised the natural semantics of Launchbury by enforcing
variable hygiene correctly and changing the $\alpha$-renaming strategy
\cite{lazymachine}.
He derived an abstract machine for call-by-need from the revised
semantics. 
The machine has a small-step semantics 
and uses global heaps to implement sharing of evaluation. 
Starting from a simple machine, he refines it to a more efficient machine
in several steps. 
The machine is proved equivalent to his natural semantics. 
As discussed earlier, the natural semantics for
\cyclic\ is strongly inspired by his semantics. 


Okasaki et al.~\cite{byneedcps} proposed a transformation of call-by-need
$\lambda$ terms, in the absence of recursion, into continuation-passing style, 
which is proved equivalent to a call-by-need continuation semantics. 
Sharing of evaluation is implemented by ML-style references, which resemble
global heaps.

Ariola and Klop~\cite{cycliclambda} and Ariola and
Blom~\cite{cycliccalculi} studied 
equational theories of cyclic lambda calculi
by means of cyclic lambda graphs. The former observed that having 
non-restricted substitution leads to non-confluence and proposed a restriction
on substitution to recover confluence. The latter proposed a relaxed notion 
of confluence which holds in the presence of non-restricted substitution.
In~\cite{cycliccalculi} a calculus supporting sharing is considered,
but a reduction strategy for the calculus is not studied. 

Danvy \cite{defunctionalized} advocates the use of abstract machines
as a "natural meeting ground" of various functional implementations of
operational semantics, especially the small-step reduction semantics
and big-step natural semantics. In a large perspective, our work
presented here can be thought as making an analogous case for a
destructive, non-functional setting, in which circularly shared
computation contributes significant complexities.


\section{Conclusion}\label{sec:conclusion}
We have presented natural semantics for acyclic and cyclic
call-by-need lambda calculi, which are proved equivalent to the
reduction semantics given by Ariola and Felleisen. 
We observed differences of the two styles of formalization
in the treatment of when to reorganize the heap structure and how to
focus redexes. 
The proof uses instrumented natural semantics as mediatory
semantics of the two, 
in order to bridge these differences by making heap reorganization and
redex focusing explicit.

This work is initially motivated to study lazy evaluation strategies
for recursive records in terms of the reduction semantics as well as
the natural semantics.  Therefore we have considered an extension with
eager pairs and a call-by-value variant with lazy letrec.

\subsubsection*{Acknowledgment}
We thank the anonymous referees for their careful reviewing and 
Matthias Felleisen for his editorial support. 

\bibliographystyle{plain}
\bibliography{references}

\end{document}